\tikzset{gate/.style={draw, circle, inner sep = 1.5pt, minimum size = 17pt}}
\tikzset{aux-gate/.style={gate, minimum size = 5pt}}
\tikzset{input/.style={gate, minimum size = 0pt, inner sep = 2pt, rectangle, rounded corners = 2pt}}
\tikzstyle{bot} = [fill=none, draw=gray, densely dotted, thin]
\tikzstyle{marked} = [fill=gray!30]
\tikzstyle{every path}=[semithick]
\newif\ifmac
\newcommand{\B}{\mathbb{B}}
\newcommand{\C}{\mathcal{C}}
\newcommand{\D}{\mathcal{D}}
\newcommand{\G}{\mathcal{G}}
\newcommand{\N}{\mathbb{N}}
\newcommand{\Z}{\mathbb{Z}}
\newcommand{\DLOGTIME}{\mathsf{DLOGTIME}}
\newcommand{\DSPACE}{\mathsf{DSPACE}}
\newcommand{\DET}{\mathsf{DET}}
\newcommand{\coRNC}{\mathsf{coRNC}}
\newcommand{\Log}{\mathsf{L}}
\newcommand{\LOGCFL}{\mathsf{LOGCFL}}
\newcommand{\NL}{\mathsf{NL}}
\newcommand{\NP}{\mathsf{NP}}
\newcommand{\Ptime}{\mathsf{P}}
\newcommand{\PSPACE}{\mathsf{PSPACE}}
\newcommand{\AC}{\mathsf{AC}}
\newcommand{\NC}{\mathsf{NC}}
\newcommand{\CEP}{\mathsf{CEP}}
\newcommand{\aR}{R_+}
\newcommand{\mR}{R_{{\scriptscriptstyle \bullet}}}
\newcommand{\UCEP}{\mathsf{UCEP}}
\newcommand{\rank}{\mathsf{rank}}
\newcommand{\rhs}{\mathsf{rhs}}
\newcommand{\supp}{\mathsf{supp}}
\newcommand{\val}{\mathsf{val}}
\newcommand{\fval}[1]{{\llbracket #1 \rrbracket}}
\newcommand{\problem}[2]{\smallskip\noindent{\bf Input:} #1\\{\bf Question:} #2\smallskip}
\newcommand{\outputproblem}[2]{\smallskip\noindent{\bf Input:} #1\\{\bf Output:} #2\smallskip}
\newtheorem{theorem}{Theorem}
\numberwithin{theorem}{section}
\newtheorem{corollary}[theorem]{Corollary}
\newtheorem{definition}[theorem]{Definition}
\newtheorem{lemma}[theorem]{Lemma}
\newtheorem{example}[theorem]{Example}
\newtheorem{proposition}[theorem]{Proposition}
\newtheorem*{inv}{Invariant}
\title{Circuit Evaluation for Finite Semirings}
\author{Moses Ganardi, Danny Hucke, Daniel K\"onig, Markus Lohrey}
\date{
	\small University of Siegen, Germany \\	
	\texttt{\{ganardi,hucke,koenig,lohrey\}@eti.uni-siegen.de}
}
\begin{document}

\maketitle

\begin{abstract}
The computational complexity of the circuit evaluation problem for finite semirings is considered, where
semirings are not assumed to have an additive or multiplicative identity. The following dichotomy is shown:
If a finite semiring is such that (i) the multiplicative semigroup is solvable and 
(ii) it does not contain a subsemiring with an additive identity $0$ and a multiplicative identity $1 \neq 0$,
then the circuit evaluation problem for the semiring is in $\DET \subseteq \NC^2$. In all other cases,
the circuit evaluation problem is $\Ptime$-complete.
\end{abstract}

\section{Introduction}

Circuit evaluation problems are among the most well-studied computational problems in complexity theory.
In its most general formulation, one has an algebraic structure $\mathcal{A}=(A, f_1, \ldots, f_k)$, where the $f_i$ are 
mappings $f_i : A^{n_i} \to A$. A circuit over the structure $\mathcal{A}$ is a directed acyclic graph (dag) where every inner node 
is labelled with one of the operations $f_i$ and has exactly $n_i$ incoming edges that are linearly ordered. 
The leaf nodes of the dag are labelled with elements of $A$ (for this, one needs a suitable finite representation of elements from $A$),
and there is a distinguished output node. The task is to evaluate this dag in the natural way, and to return the value of the output node.

In his seminal paper \cite{Lad75},
Ladner  proved that the circuit evaluation problem for the Boolean semiring $\mathbb{B}_2 = (\{0,1\}, \vee, \wedge)$ is 
$\Ptime$-complete. This result marks a cornerstone in the theory of $\Ptime$-completeness \cite{GrHoRu95}, and motivated the investigation
of circuit evaluation problems for other algebraic structures. A large part of the literature is focused on arithmetic (semi)rings like 
$(\mathbb{Z}, +, \cdot)$, $(\mathbb{N}, +, \cdot)$ or the max-plus semiring $(\mathbb{Z} \cup \{-\infty\}, \max, +)$ \cite{AllenderBKM09,AllenderJMV98,Kosaraju90,MillerRK88,MillerT99,ValiantSBR83}. 
These papers mainly consider circuits of polynomial formal degree. For commutative semirings, circuits of polynomial formal degree
can be restructured into an equivalent (unbounded fan-in) circuit of polynomial size and
logarithmic depth \cite{ValiantSBR83}.
This result leads to $\NC$-algorithms for evaluating polynomial degree circuits over commutative semirings \cite{MillerRK88,MillerT99}.
Over non-commutative semirings, circuits of polynomial formal degree do in general not allow a restructuring into 
circuits of logarithmic depth \cite{Kosaraju90}.

In \cite{MillerT99} it was shown that also for finite non-commutative semirings circuit evaluation is in $\NC$ for circuits of polynomial formal degree.
On the other hand, the authors are not aware of any $\NC$-algorithms for evaluating general (exponential degree) circuits  over semirings. 
The lack of such algorithms is probably due to Ladner's result, which seems to exclude any efficient parallel algorithms. 
On the other hand, in the context of semigroups, there exist $\NC$-algorithms for circuit evaluation.
In \cite{BeMcPeTh97}, the following dichotomy result was shown for finite semigroups: If the finite semigroup is solvable (meaning that every subgroup is a solvable group),
then circuit evaluation is in $\NC$ (in fact, in $\DET$, which is the class of all problems that are $\AC^0$-reducible to the computation of an 
integer determinant \cite{Coo85,Coo12}), otherwise circuit evaluation is $\Ptime$-complete. 

In this paper, we extend the work of  \cite{BeMcPeTh97} from finite semigroups to finite semirings. On first sight, it seems 
again that Ladner's result excludes efficient parallel algorithms: It is not hard to show that if the finite semiring has an additive identity $0$
and a multiplicative identity $1 \neq 0$ (where $0$ is not necessarily absorbing with respect to multiplication), then circuit evaluation is $\Ptime$-complete,
see Lemma~\ref{lemma-P-hard}. Therefore, we take the most general reasonable definition of semirings: A semiring is a structure
$(R,+,\cdot)$, where $(R,+)$ is a commutative semigroup, $(R,\cdot)$ is a semigroup, and $\cdot$ distributes (on the left and right) over $+$.
In particular, we neither require the existence of a $0$ nor a $1$. Our main result states that in this general setting
there are only two obstacles to efficient parallel circuit evaluation: non-solvability of the multiplicative structure and the existence of a zero 
and a one (different from the zero) in a subsemiring. More precisely, we show the following two results, where a semiring is called 
$\{0,1\}$-free if there exists no subsemiring in which an additive identity $0$ and a multiplicative identity $1 \neq 0$ exist:
\begin{enumerate}[(1)]
\item If a finite semiring is not $\{0,1\}$-free, then the circuit evaluation problem is $\Ptime$-complete.
\item If a finite semiring  $(R,+,\cdot)$ is $\{0,1\}$-free, then the circuit evaluation problem for $(R,+,\cdot)$ can
be solved with $\AC^0$-circuits that are equipped with oracle gates for (a) graph reachability, (b) the circuit evaluation problem for 
the commutative semigroup $(R,+)$ and (c) the circuit evaluation problem for 
the semigroup $(R,\cdot)$.
\end{enumerate}
Together with the dichotomy result from \cite{BeMcPeTh97} (and the fact that commutative semigroups are solvable)
we get the following result: For every finite semiring $(R,+,\cdot)$, the circuit
evaluation problem is in $\NC$ (in fact, in $\DET$) if $(R,\cdot)$ is solvable and $(R,+,\cdot)$ is $\{0,1\}$-free.
Moreover, if one of these conditions fails,
then circuit evaluation is $\Ptime$-complete.    

The hard part of the proof is to show the above statement (2).
We will proceed in two steps. In the first step we reduce the circuit evaluation problem for a finite semiring $R$ to the 
evaluation of a so called type admitting circuit. This is a circuit where every gate evaluates to an element
of the form $e a f$, where $e$ and $f$ are  multiplicative idempotents of $R$. Moreover, these idempotents $e$ and $f$
have to satisfy a certain compatibility condition that will be expressed by a so called type function. 
In a second step, we present a parallel evaluation algorithm for type admitting circuits. Only for this second step we need
the assumption that the semiring is $\{0,1\}$-free.

In Section~\ref{sec-intersection} we present an application of our main result for circuit evaluation to formal language theory.
We consider the intersection non-emptiness problem for a given context-free language and a fixed regular language $L$. 
If the context-free language is given by an arbitrary context-free grammar, then we show that the intersection non-emptiness problem
is $\Ptime$-complete as long as $L$ is not empty (Theorem~\ref{thm-p-complete-intersection}). It turns out that the reason
for this is non-productivity of nonterminals. We therefore consider a restricted version of the intersection non-emptiness problem, where 
every nonterminal of the input context-free grammar must be productive. To avoid a promise problem (testing productivity
of a nonterminal is $\Ptime$-complete), we in addition provide a witness of productivity for every nonterminal. This witness
consists of exactly one production $A \to w$ for every nonterminal of $A$ 
such that the set of all the set of all selected productions is an acyclic grammar $\mathcal{H}$. This ensures that
$\mathcal{H}$ derives for every nonterminal $A$ exactly one string that is a witness of the productivity of $A$. 
We then show that this restricted version of the intersection non-emptiness problem with the fixed regular language $L$ is equivalent
(with respect to constant depth reductions) to the circuit evaluation problem for a certain finite semiring that is derived from
the syntactic monoid of the regular language $L$.

\paragraph*{Further related work}
We mentioned already the existing work on circuit evaluation for infinite semirings.  The question whether a given circuit over a polynomial ring
evaluates to the zero polynomial is also  known as polynomial identity testing. For polynomial rings over $\Z$ or $\Z_n$ ($n \geq 2$), polynomial
identity testing has a co-randomized polynomial time algorithm \cite{AgrawalB03,IbMo83}. Moreover, the question, whether a deterministic polynomial time algorithm exists
is tightly related to lower bounds in complexity theory, see \cite{ShpilkaY10} for a survey.
For infinite groups, the circuit evaluation problem is also known as the compressed word problem \cite{Loh14}. In the context of parallel algorithms,
it is interesting to note that the third and forth author recently proved that the circuit evaluation problem for finitely generated (but infinite) nilpotent
groups belongs to $\DET$ \cite{KonigL15}. 
For finite non-associative groupoids, the complexity of circuit evaluation was studied in \cite{MooreTLBD00},
and some of the results from \cite{BeMcPeTh97} for semigroups were generalized to the non-associative setting.
In \cite{BeaudryH07}, the problem of evaluating tensor circuits is studied. The complexity of this problem is quite high: Whether a given tensor
circuit over the Boolean semiring evaluates to the $(1 \times 1)$-matrix $(0)$ is complete for nondeterministic exponential time.
Finally, let us mention the papers \cite{McKenzieW07,Travers06}, where circuit evaluation problems are studied 
for the power set structures $(2^{\N}, +, \cdot, \cup, \cap, \overline{\phantom{a}})$ and $(2^{\Z}, +, \cdot, \cup, \cap, \overline{\phantom{a}})$,
where $+$ and $\cdot$ are evaluated on sets via $A \circ B = \{a \circ b \mid a \in A, b \in B\}$.

A variant of our intersection non-emptiness problem was studied in \cite{RubtsovV15}. There, a context-free language $L$ is fixed,
a (deterministic or non-deterministic) finite automaton $\mathcal{A}$ is the input, and the question is, whether $L \cap L(\mathcal{A}) = \emptyset$ holds. The authors present large classes of context-free languages such that for each member the intersection non-emptiness problem
with a given regular language (specified by a non-deterministic automaton) is $\Ptime$-complete (resp., $\NL$-complete).

\section{Computational complexity}

For background in complexity theory the reader might consult \cite{AroBar09}.
We assume that the reader is familiar with the complexity classes 
$\NL$ (non-deter\-ministic logspace) and
 $\Ptime$ (deterministic polynomial time). 
 A function is logspace-com\-putable if it can be computed by a deterministic Turing-machine 
 with a logspace-bounded work tape, a read-only input tape, and a write-only output tape.
 Note that the logarithmic space bound only applies to the work tape.
$\Ptime$-hardness will refer to logspace reductions.

We use standard definitions concerning circuit complexity, see e.g. \cite{Vol99}. 
We only consider polynomially bounded families $(C_n)_{n \geq 0}$ of Boolean circuits, where 
the number of gates of $C_n$ is bounded by a polynomial $p(n)$. For such a family, 
gates of $C_n$ can  be encoded with bit strings of length $O(\log n)$.
The family $(C_n)_{n \geq 0}$ is $\DLOGTIME$-uniform, if  
for given binary coded gates  $u, v$ of $C_n$, one can (i) compute the type of gate $u$ in time
$O(\log n)$ and (ii) check in  time $O(\log n)$ whether $u$ is an input gate for $v$.
Note that the time bound $O(\log n)$ is linear in the input length $|u|+|v|$.
All circuit families in this paper are implicitly assumed to be $\DLOGTIME$-uniform.
 We will consider the class $\AC^0$ of all problems
that can be recognized by a polynomial size circuit family of constant depth built up from NOT-gates (which have fan-in one) and AND- and OR-gates of unbounded fan-in.
The class $\NC^k$ ($k \geq 1$) is defined by polynomial size circuit families
of depth $O(\log^k n)$ that use NOT-gates, and AND- and OR-gates of fan-in two.
One defines $\NC = \bigcup_{k \geq 1} \NC^k$.
The above language classes can be easily generalized to classes of functions by allowing circuits with several
output gates. Of course, this only allows to compute functions $f : \{0,1\}^* \to \{0,1\}^*$ such that $|f(x)| = |f(y)|$
whenever $|x| = |y|$. If this condition is not satisfied, one has to consider a suitably padded version of $f$.

We use the standard notion of constant depth Turing-reducibility:
For functions $f_1, \ldots, f_k$ let $\AC^0(f_1,\ldots, f_k)$ be the class of all functions that can be computed with 
a  polynomial size circuit family of constant depth that uses NOT-gates and unbounded fan-in AND-gates, OR-gates, and 
$f_i$-oracle gates ($1 \leq i \leq k$). Here, an $f_i$-oracle gate receives an ordered tuple of inputs $x_1, x_2, \ldots, x_n$ and outputs 
the bits of $f_i(x_1 x_2 \cdots x_n)$. 
By taking the characteristic function of a language, 
we can also allow a language $L_i  \subseteq \{0,1\}^*$ in place of $f_i$.
Note that the function class $\AC^0(f_1,\ldots, f_k)$ is closed under composition (since the composition of two
$\AC^0$-circuits is again an $\AC^0$-circuit). 
We write $\AC^0(\NL, f_1, \ldots, f_k)$ for $\AC^0(\mathsf{GAP}, f_1, \ldots, f_k)$, where
$\mathsf{GAP}$ is the $\NL$-complete graph accessibility problem.
The class $\AC^0(\NL)$ is studied in \cite{AlvarezBJ91}. It has several alternative characterizations and can
be viewed as a nondeterministic version of functional logspace. 
As remarked in  \cite{AlvarezBJ91}, the restriction of $\AC^0(\NL)$ to $0$-$1$ functions is $\NL$.
Clearly, every logspace-computable function belongs to $\AC^0(\NL)$: The $\NL$-oracle can be used to directly
compute the output bits of a logspace-computable function.

Let $\DET = \AC^0(\mathsf{det})$, where $\mathsf{det}$ is the function that maps a binary encoded
integer matrix to the binary encoding of its determinant, see \cite{Coo85}. Actually, Cook defined
$\DET$ as $\NC^1(\mathsf{det})$ \cite{Coo85}, but the above definition via $\AC^0$-circuits seems
to be more natural. For instance, it implies that $\DET$ is equal to the $\#\Log$-hierarchy,
see also the discussion in \cite{Coo12}. 

We defined $\DET$ as a function class, but the definition can be extended to languages by considering their 
characteristic functions. It is well known that $\NL \subseteq \DET \subseteq \NC^2$ \cite{Coo12}. 
From $\NL \subseteq \DET$, it follows easily that $\AC^0(\NL, f_1, \ldots, f_k) \subseteq \DET$ whenever
$f_1, \ldots, f_k \in \DET$.

\section{Algebraic structures, semigroups, and semirings}

An {\em algebraic structure} $\mathcal{A} = (A,f_1, \dots, f_k)$
consists of a non-empty {\em domain} $A$ and operations $f_i: A^{n_i} \to A$
for $1 \le i \le k$. 
We often identify the domain with the structure, if it is clear from the context.
A {\em substructure} of $\mathcal{A}$ is a subset $B \subseteq A$ that 
closed under each of the operations $f_i$. We identify $B$ with the structure
$(B, g_1, \ldots, g_k)$, where $g_i : B^{n_i} \to B$ is the restriction of $f_i$ to $B^{n_i}$
for all $1 \le i \le k$.

We mainly deal with semigroups and semirings. In the following two subsection
we present the necessary background. 
For further details on semigroup theory (resp., semiring theory) see
\cite{RhSt} (resp., \cite{golan1999semirings}).

\subsection{Semigroups} \label{sec-semigroups}

A {\em semigroup} $(S,\circ)$(or just $S$)  is an algebraic structure with a single associative binary operation.
We usually write $st$ for $s \circ t$.
If $st = ts$ for all $s,t \in S$, we call $S$ {\em commutative}.	
A set $I \subseteq S$ is called a {\em semigroup ideal} if for all $s \in S$, $a \in I$ we have $sa, as \in I$.
An element $e \in S$ is called {\em idempotent} if $ee = e$. It is well-known that
for every finite semigroup $S$ and $s \in S$ there exists an $n \geq 1$ such that $s^n$ is idempotent. 
In particular, every finite semigroup contains an idempotent element.
By taking the smallest common multiple of all these $n$, one obtains an $\omega \geq 1$
such that $s^\omega$ is idempotent for all $s \in S$.
The set of all idempotents of $S$ is denoted with $E(S)$.
If $S$ is finite, then $S E(S) S = S^n$ where $n = |S|$. Moreover, $S^n = S^m$ for all $m \geq n$.
For a set $\Sigma$, the {\em free semigroup} generated by $\Sigma$ is the set $\Sigma^+$ of all finite non-empty words over $\Sigma$
together with the operation of concatenation.

A semigroup $M$ with an identity element $1 \in M$, i.e. $1 m = m 1 = m$ for all $m \in M$,
is called a {\em monoid}. With $S^1$ we denote the monoid that is obtained from a semigroup $S$ by adding a fresh element $1$, which
becomes the identity element of $S^1$. Thus, we extend the multiplication to $S^1 = S \cup \{1\}$
by setting $1s = s1 = s$ for all $s \in S \cup \{1\}$. 
In case $M$ is monoid and $N$ is a submonoid of $M$, we do not require that the identity element of $N$ is the identity
element of $M$.  But, clearly, the identity element of the submonoid $N$ must be an idempotent element of $M$.
In fact, for every semigroup $S$ and every idempotent $e \in E(S)$, the set $eSe = \{ ese \mid s \in S\}$
is a submonoid of $S$ with identity $e$, which is also called a {\em local submonoid} of $S$.
The local submonoid $eSe$ is the maximal submonoid of $S$ whose identity element is $e$. 
A semigroup $S$ is {\em aperiodic} if every subgroup of $S$ is trivial.
A semigroup $S$ is {\em solvable} if every subgroup $G$ of $S$ is a solvable group, i.e.,  
for the series defined by $G_0 = G$ and $G_{i+1} = [G_i,G_i]$ (the commutator subgroup of $G_i$)
there exists an $i \geq 0$ with $G_i=1$.
Since Abelian groups are solvable, every commutative semigroup is solvable.

\subsection{Semirings}

A {\em semiring} $(R,+,\cdot)$ consists of a non-empty set $R$ with two operations $+$ and $\cdot$
such that $(R,+)$ is a commutative semigroup, $(R,\cdot)$ is a semigroup,
and $\cdot$ left- and right-distributes over $+$, i.e., $a \cdot (b+c) = ab + ac$ and 
$(b+c) \cdot a = ba + ca$ (as usual, we write $ab$ for $a \cdot b$). 
Note that we neither require the existence of an additive identity $0$ nor the 
existence of a multiplicative identity $1$.
We denote with $\aR = (R,+)$ the additive semigroup of $R$ and 
with $\mR = (R,\cdot)$ the multiplicative semigroup of $R$.
For $n \geq 1$ and $r \in R$ we write
$n \cdot r$ or just $n r$ for $r + \cdots + r$, where $r$ is added $n$ times. 
With $E(R)$ we denote the set of multiplicative idempotents of $R$,
i.e., those $e \in R$ with $e^2 = e$.
Note that for every multiplicative idempotent $e \in E(R)$, $eRe$ is a subsemiring of $R$ in which the multiplicative
structure is a monoid. 
For a non-empty subset $T \subseteq R$ we denote by $\langle T \rangle$ the subsemiring generated by $T$,
i.e. the smallest set containing $T$ which is closed under addition and multiplication.
An {\em ideal} of $R$ is a subset $I \subseteq R$ such that for all $a,b \in I$, $s \in R$ we have
$a+b,sa, as \in I$. Clearly, every ideal is a subsemiring. Let 
$\B_2 = (\{0,1\},\vee,\wedge)$ be the {\em Boolean semiring}.

 For a given non-empty set $\Sigma$, the {\em free semiring} $\N[\Sigma]$ generated by $\Sigma$ consists of all mappings 
 $f : \Sigma^+ \to \N$ such that $\supp(f) := \{ w \in \Sigma^+ \mid f(w) \neq 0\}$ is finite and non-empty.
 Addition is defined pointwise, i.e., $(f+g)(w) = f(w) + g(w)$, and multiplication is defined by the convolution:
 $(f \cdot g)(w) = \sum_{w = uv} f(u) \cdot g(v)$, where the sum is taken over all factorizations $w = uv$ with
 $u,v \in \Sigma^+$. 
 We view an element $f \in \N[\Sigma]$ as a non-commutative polynomial $\sum_{w \in \supp(f)} f(w) \cdot w$.
 Then addition (resp.~multiplication) in $\N[\Sigma]$ corresponds to addition (resp.~multiplication) of non-commutative
 polynomials. Words $w \in \supp(f)$ are also called {\em monomials} of $f$. A word $w \in \Sigma^+$ is identified
 with the non-commutative polynomial $1 \cdot w$, i.e., the mapping $f$ with $\supp(f) = \{w\}$ and $f(w) = 1$.
 For every semiring $R$ which is generated by $\Sigma$ there exists a canonical surjective homomorphism from $\N[\Sigma]$
 to $R$ which evaluates non-commutative polynomials over $\Sigma$. Since a semiring is not assumed to have 
 a multiplicative identity (resp., additive identity), we have to exclude the empty word from $\supp(f)$ for every $f \in \N[\Sigma]$
 (resp., exclude the mapping $f$ with $\supp(f) = \emptyset$ from $\N[\Sigma]$).
 
A crucial definition in this paper is that of a {\em $\{0,1\}$-free} semiring. This is a semiring $R$ which does {\em not}
contain a subsemiring $T$ with an additive identity $0$ and 
a multiplicative identity $1 \neq 0$. Note that in such a semiring $T$ we do not require that $0$ is absorbing, i.e., $a \cdot 0 = 0 \cdot a = 0$ for all $a \in T$.
The class of $\{0,1\}$-free {\em finite} semirings has several characterizations:

\begin{lemma} \label{lemma-0-1-free}
	For a finite semiring $R$, the following are equivalent:
	\begin{enumerate}
		\item $R$ is not $\{0,1\}$-free.
		\item $R$ contains $\B_2$ or the ring $\Z_d$ for some $d \ge 2$ as a subsemiring.
		\item $R$ is divided by $\B_2$ or $\Z_d$ for some $d \ge 2$ (i.e., $\B_2$ or $\Z_d$ is a homomorphic image of 
		a subsemiring of $R$).	
		\item There exist elements $0,1 \in R$ such that $0 \neq 1$, $0+0=0$, $0+1=1$, $0 \cdot 1 = 1 \cdot 0 = 0 \cdot 0 = 0$, and $1 \cdot 1 = 1$ (but $1+1 \neq 1$ is possible).
	\end{enumerate}
\end{lemma}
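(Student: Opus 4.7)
The plan is to establish $(1) \Leftrightarrow (4)$, $(1) \Rightarrow (2)$, and $(3) \Rightarrow (2)$, since $(2) \Rightarrow (3)$ is trivial and $(2) \Rightarrow (1)$ is immediate (both $\B_2$ and every $\Z_d$ carry an additive identity $0$ distinct from a multiplicative identity $1$). For $(1) \Leftrightarrow (4)$, the forward direction is nearly definitional: if $T \subseteq R$ is a subsemiring with additive identity $0$ and multiplicative identity $1 \neq 0$, all equations in (4) follow from the identity properties, the only less-obvious one being $0 \cdot 0 = 0$, which falls out of distributivity via $0 \cdot 0 + 0 \cdot 1 = 0 \cdot (0+1) = 0 \cdot 1 = 0$. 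Conversely, given $0, 1 \in R$ satisfying (4), I consider $T := \langle \{0, 1\}\rangle$; using $0+0=0$, $1\cdot 1 = 1$, and the absorption laws for $0$, every element of $T$ simplifies to $0$ or to $k \cdot 1$ for some $k \geq 1$, so $0$ and $1$ act as the additive and multiplicative identities of $T$.

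For $(1) \Rightarrow (2)$, I work inside $T = \{0\} \cup \{k \cdot 1 : k \geq 1\}$ and let $n \geq 1, p \geq 1$ be the smallest integers with $n \cdot 1 = (n+p) \cdot 1$. If $n = p = 1$ then $T = \{0, 1\} \cong \B_2$. If $n = 1$ and $p \geq 2$ then $\{k \cdot 1 : 1 \leq k \leq p\}$ is isomorphic to $\Z_p$ via $k \cdot 1 \mapsto k \bmod p$, with $p \cdot 1$ playing the role of the zero. If $n \geq 2$, letting $m$ be the unique multiple of $p$ in $[n, n+p-1]$ and $e := m \cdot 1$, the element $e$ is both additively and multiplicatively idempotent and a direct check gives $eTe = \{0, e\} \cong \B_2$.

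The main technical work lies in $(3) \Rightarrow (2)$. Given a surjective homomorphism $\pi \colon S \to T'$ with $S \subseteq R$ a subsemiring and $T' \in \{\B_2\} \cup \{\Z_d : d \geq 2\}$, I fix $\omega \geq 1$ such that $s^\omega$ is multiplicatively idempotent and $\omega \cdot s$ is additively idempotent for every $s \in S$; a short calculation then shows that $\omega \cdot s^\omega$ is simultaneously additively and multiplicatively idempotent. For $T' = \Z_d$, I take $u \in \pi^{-1}(1)$, set $e := u^\omega$, and observe that the additive period $p$ of $e$ is divisible by $d$ (since $\pi(pe) = 0$), so $p \geq 2$; the argument from $(1) \Rightarrow (2)$ applied to $\{ke : k \geq 1\}$ then produces a $\Z_p$-subsemiring of $R$, after verifying that the cycle $\{ne, \ldots, (n+p-1)e\}$ is isomorphic to $\Z_p$ regardless of the index $n$. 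For $T' = \B_2$ I pick additive-and-multiplicative idempotents $e \in \pi^{-1}(1)$ and $f \in \pi^{-1}(0)$, set $z := \omega \cdot (efe)^\omega \in eSe \cap \pi^{-1}(0)$ (again additively and multiplicatively idempotent), and finally $e_1 := z + e$; a direct computation using $ez = ze = z$ and the idempotence of $e, f, z$ shows that $e_1$ is also additively and multiplicatively idempotent and that $\{z, e_1\}$ forms a subsemiring of $R$ whose operation table matches $\B_2$.

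The delicate step is precisely the construction in the $\B_2$ case: the naive candidates $\{f,e\}$ and $\{efe,e\}$ both fail to be subsemirings, because the additive absorption $z + e = e$ does not hold automatically and $efe$ need not be idempotent. The point of iterating $efe$ to its idempotent power and then replacing the candidate identity $e$ by $e_1 = z + e$ is exactly to force the absorption $z + e_1 = e_1$ and the multiplicative idempotence $e_1 \cdot e_1 = e_1$ to hold simultaneously.
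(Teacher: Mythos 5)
Your proposal is correct, but it organizes the equivalences differently from the paper and handles the hard direction in two cases where the paper has a uniform argument. The paper proves the cycle $(1)\Rightarrow(2)\Rightarrow(3)\Rightarrow(4)\Rightarrow(1)$, and its only delicate step is $(3)\Rightarrow(4)$: since both $\B_2$ and $\Z_d$ contain elements $0,1$ satisfying the four equations, it pulls these back in one stroke by passing to idempotent powers and setting $a'=n(eae)^n$, $e'=a'+e$ -- exactly the construction you use, with your $z=\omega(efe)^\omega$ and $e_1=z+e$, but the paper needs it only to verify the equations of (4), not to exhibit an explicit copy of $\B_2$, and it never needs to treat the $\Z_d$ target separately (note that in $\Z_d$ the additively idempotent element $\omega\cdot 1$ equals $0$, which is precisely why your $\B_2$-style choice of an additively and multiplicatively idempotent preimage of $1$ is unavailable there and forces your case split). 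Your replacement for the $\Z_d$ case -- taking $e=u^\omega$ with $\pi(u)=1$ and showing that the additive cycle $\{ne,\dots,(n+p-1)e\}$ of a multiplicative idempotent is a ring isomorphic to $\Z_p$ with $d\mid p$ -- does not appear in the paper and is a genuinely different (and correct) argument; it buys you the stronger conclusion $(3)\Rightarrow(2)$ directly, i.e.\ an explicit $\B_2$ or $\Z_p$ inside $R$, which the paper only obtains by going around the cycle through (4) and (1). Your $(1)\Rightarrow(2)$ via the index/period of the additive orbit of $1$ is the same idea as the paper's analysis of the subsemiring $\{0\}\cup\{k\cdot 1\}$ via the semirings $B(k,d)$, just phrased with $(n,p)$ instead of threshold/period (do spell out in the $n\ge 2$ case that $m\cdot 1\neq 0$, which holds since $m\cdot 1=0$ would force index $1$). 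Overall: same key construction for the $\B_2$ half, a different decomposition of the implications, and an extra self-contained argument for the $\Z_d$ half, at the cost of a case analysis the paper's uniform $(3)\Rightarrow(4)$ avoids.
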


\begin{proof}
($1 \Rightarrow 2$): Let $T$ be a subsemiring of $R$ which has a zero element $0$ and a one element $1 \neq 0$.
Note that $0 \cdot 0 = 0 \cdot 0 + 0 = 0 \cdot 0 + 1 \cdot 0 = (0+1) \cdot 0 = 1 \cdot 0 = 0$.
Let $T' = \{0\} \cup \{k \cdot 1 \mid k \in \N \}$, which is the subsemiring generated by these elements.
It is isomorphic to some semiring $B(k,d)$ ($k \geq 0$, $d \geq 1$), which is the semiring $(\N,+,\cdot)$
modulo the congruence relation $\sim$ defined by
$i \sim j$ if $0 \le i = j \le k-1$ or ($i,j \ge k$ and $d$ divides $i-j$). Since $0 \neq 1$, we have $(k,d) \neq (0,1)$.
If $k = 0$, then $B(0,d)$ is isomorphic to $\mathbb{Z}_d$ for $d \geq 2$.
If $k \ge 1$, then choose $a \geq k$ such that $d$ divides $a$, for example $a = dk$.
Then $\{0,a \cdot 1\}$ is a subsemiring isomorphic to the Boolean semiring $\B_2$. 

($2 \Rightarrow 3$): This implication  is trivial. 

($3 \Rightarrow 4$): Assume that $\varphi : T \to T'$ is a homomorphism from a subsemiring $T$ of $R$ to
$T'$, where the latter is $\B_2$ or $\Z_d$ with $d \geq 2$. In particular, there exist $0,1 \in T'$ with 
$0 \neq 1$, $0+0=0$, $0+1=1$, $0 \cdot 1 = 1 \cdot 0 = 0 \cdot 0 = 0$, and $1 \cdot 1 = 1$.
Let $n \geq 1$ be such that $n \cdot x$ is additively idempotent and $x^n$ is multiplicatively idempotent
for all $x \in R$. Then $n \cdot x^n$ is  is additively and multiplicatively idempotent
for all $x \in R$. Let $a,e \in T$ be such that $\varphi(a) = 0$ and $\varphi(e) = 1$.
We can replace $a$ by $n \cdot a^n$ and $e$ by  $e^n$. Then, $a+a = aa = a$
and $ee=e$.
For $a' = n \cdot (eae)^n$ we have $\varphi(a') = 0$ and $a'e = ea' = a'+a' = a'a' = a'$.
For $e' = a'+e$ we have $\varphi(e') = 1$ (hence, $a' \neq e'$) 
and $e' e' = a' a' + a' e + e a' + ee = a'+e = e'$,  $a'+e'= a'+a'+e = e'$.
Furthermore, we have $a'e' = a' (a'+e) = a' + a'e = a'$ and similarly $e'a' = a'$.
Hence, $a'$ and $e'$ satisfy all equations from point 4.

($4 \Rightarrow 1$): Assume that there exist 
elements $0,1 \in R$ such that $0 \neq 1$, $0+0=0$, $0+1=1$, $0 \cdot 1 = 1 \cdot 0 = 0 \cdot 0 = 0$, and $1 \cdot 1 = 1$.
Consider the subsemiring generated by $\{0,1\}$, which is 
$\{0\} \cup \{n \cdot 1 \mid n \geq 1 \}$. By the above identities  $0$ (resp., $1$) is an additive (resp., multiplicative) identity
in this subsemiring.
%($1 Rightarrow 4$): First assume that $R$ is not $\{0,1\}$-free. Then there exists a subsemiring $T$ with element $0,1 \in T$, $0 \neq 1$, such 
%that $0$ (resp., $1$) is an additive (resp., multiplicative) identity in $T$.
%In particular, we have $0+0=0$, $0+1=1$, $0 \cdot 1 = 1 \cdot 0 =  0$, and $1 \cdot 1 = 1$.
%Moreover, we have
%$0 \cdot 0 = 0 \cdot 0 + 0 = 0 \cdot 0 + 1 \cdot 0 = (0+1) \cdot 0 = 1 \cdot 0 = 0$.
\end{proof}
As a consequence of Lemma~\ref{lemma-0-1-free} (point 4), one can check in time $O(n^2)$ for a semiring
of size $n$ whether it is $\{0,1\}$-free. We will not need this fact, since in our setting the semiring will be always fixed, i.e., not part of the input.
Moreover, the class of all $\{0,1\}$-free semirings is closed under taking subsemirings (this is trivial) and taking homomorphic images (by point 3).
Finally, the class of $\{0,1\}$-free semirings is also closed under direct products. To see this, assume that $R \times R'$ 
is not $\{0,1\}$-free. Hence, there exists a subsemiring $T$ of $R \times R'$ with an additive zero $(0, 0')$ and a multiplicative one $(1, 1') \neq (0,0')$.
W.l.o.g. assume that $0 \neq 1$. Then the projection $\pi_1(T)$ onto the first component is a subsemiring of $R$, where $0$ is an additive
identity and $1 \neq 0$ is a multiplicative identity.  
By these remarks,  the class of $\{0,1\}$-free finite semirings forms a pseudo-variety of finite semirings. 
Again, this fact will not be used in the rest of the paper, but it might be of independent interest.

\section{Circuit evaluation and main results} \label{sec-circuits}

We define circuits over general algebraic structures. Let $\mathcal{A} = (D,f_1, \dots, f_k)$ be an algebraic structure.
A {\em circuit} over  $\mathcal{A}$ is a triple $\C = (V,A_0,\rhs)$ where $V$ is a finite set of {\em gates},
	$A_0 \in V$ is the {\em output gate} and $\rhs$ (which stands for right-hand side) is a function that assigns to each gate $A \in V$ an element
	$a \in D$ or an expression of the form $f_i(A_1, \dots, A_n)$, where $n=n_i$ and $A_1, \dots, A_n \in V$ are called the {\em input gates for $A$}.
	Moreover, the binary relation $\{ (A,B) \in V \times V \mid A \text{ is an input gate for } B \}$ is required to be acyclic.
	Its reflexive and transitive closure is a partial order on $V$ that we denote with $\leq_{\C}$.
	Every gate $A$ evaluates to an element $[A]_\C \in A$ in the natural way:
	If $\rhs(A) = a \in D$, then $[A]_\C = a$ and if $\rhs(A) = f_i(A_1, \dots, A_n)$ then
	$[A]_\C = f_i([A_1]_\C, \dots, [A_n]_\C)$. Moreover, we define $[\C] = [A_0]_\C$ (the value computed by $\C$).
	If the circuit $\C$ is clear from the context, we also write $[A]$ instead of $[A]_\C$.
	We say that two circuits $\C_1$ and $\C_2$ over the structure $\mathcal{A}$ are {\em equivalent} if $[\C_1] = [\C_2]$.
	Sometimes we also use circuits without an output gate; such a circuit is just a pair $(V,\rhs)$.
	A subcircuit of $\C$ is the restriction of $\C$ to a downwards closed (w.r.t.~$\le_\C$) subset of $V$.
	A gate $A$ with $\rhs(A) = f_i(A_1, \dots, A_n)$ is called an {\em inner gate}, otherwise it is an {\em input gate} of $\C$.
	Quite often, we view a circuit as a directed acyclic graph, where the inner nodes are labelled with an operations $f_i$, and the leaf
	nodes are labelled with elements from $D$. In our proofs it is sometimes convenient to allow arbitrary terms built from $V \cup D$
	using the operations $f_1, \dots, f_k$ in right-hand sides. For instance, over a semiring $(R,+,\cdot)$ we might have
	$\rhs(A) = s\cdot B \cdot t  + C + s$ for $s,t \in R$ and $B,C \in V$.
	A circuit is in {\em normal form}, if all right-hand sides are of the form $a \in D$ or $f_i(A_1, \dots, A_n)$
	with $A_1, \ldots, A_n \in V$. We will make use of the following simple fact:

\begin{lemma} \label{lemma-normal-form}
A given circuit can be transformed in logspace into an equivalent normal form circuit.
\end{lemma}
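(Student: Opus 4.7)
The plan is to eliminate nested terms by introducing one auxiliary gate per proper internal subterm occurrence. Each right-hand side $\rhs(A)$ is, by assumption, a term built from $V \cup D$ using the operations $f_1, \ldots, f_k$; parse it as a tree whose internal nodes are operation symbols and whose leaves are gates or constants. For every non-root internal subterm occurrence $s$ of $\rhs(A)$, I introduce a fresh auxiliary gate named by the pair $(A, p)$, where $p$ is the position in the input at which the encoding of $s$ begins. Its right-hand side is $f_i(B_1, \ldots, B_{n_i})$, where $f_i$ is the head symbol of $s$ and each $B_j$ is either the original gate (if the $j$-th argument is already a gate in $V$), a fresh constant input gate (if it is an element of $D$), or the auxiliary gate corresponding to the $j$-th child subterm. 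The right-hand side of $A$ itself is rewritten analogously by applying this rule at the root of the term, and the output gate is inherited.

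To carry out this transformation in logspace, I rely on the standard fact that parenthesis matching on a string is computable in $\Log$. The gates of the new circuit are enumerated as pairs $(A, p)$ with $p$ a position in the input, and for each such pair one can check in logspace whether $p$ marks the start of a subterm occurrence in $\rhs(A)$; if so, one locates its head symbol and the starting positions of its immediate argument subterms by parenthesis matching, and prints the right-hand side using those positions as names. Leaves that are already gates in $V$ or constants in $D$ are copied verbatim. Iterating through all candidate pairs $(A, p)$ in lexicographic order yields the complete new circuit on the output tape using $O(\log n)$ work space.

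Correctness is an immediate induction on subterm height: every auxiliary gate $(A, p)$ evaluates to the value of the corresponding subterm of $\rhs(A)$ in the original circuit, so the original gates retain their values and in particular $[\C]$ is preserved. There is no real obstacle here; the lemma essentially says that the difference between allowing arbitrary terms as right-hand sides and requiring single operation applications is syntactic sugar that a logspace transducer can peel off, and the only nontrivial ingredient is the logspace parseability (via bracket matching) of the term encodings.
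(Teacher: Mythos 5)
Your construction handles only the part of the lemma that the paper itself dismisses as trivial, and misses the one case the paper's proof is actually about: \emph{copy gates}, i.e.\ gates $A$ with $\rhs(A) = B$ for a single gate $B \in V$. Such a right-hand side is a legal term over $V \cup D$, but it is not in normal form (it is neither an element of $D$ nor of the form $f_i(A_1,\dots,A_n)$), and your rewriting rule does not apply to it: there is no head operation symbol, no internal subterm occurrences are created, and by your own clause ``leaves that are already gates in $V$ \dots are copied verbatim'' the offending right-hand side survives unchanged. So on inputs containing copy gates your output circuit is not in normal form. This is not a corner case one can wave away: the paper explicitly relies on copy-gate elimination when it later invokes this lemma (e.g.\ the constructions in Lemmas~\ref{lemma:leaf-removement} and~\ref{lemma:monomlength} deliberately produce copy gates such as $\rhs_\lambda(A_\lambda) = B_\lambda$ and then appeal to Lemma~\ref{lemma-normal-form} to remove them).

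The reason this case needs an argument is that copy gates can form long chains $A_1 \to A_2 \to \cdots \to A_m$, and to replace $\rhs(A_1)$ by something in normal form you must find the first non-copy gate at the end of the chain; a purely local rewrite (or bracket matching within a single right-hand side) cannot do this. The paper's proof handles it by observing that the graph of copy edges is a forest in which every node has out-degree at most one, so the (deterministic) paths can be followed, and hence the reflexive-transitive closure computed, in logspace; then each copy gate $A$ gets $\rhs(A) := \rhs(B)$ for the terminal node $B$ of its chain. Your term-splitting argument for nested right-hand sides is fine and matches what the paper says in one sentence (``all other right-hand sides \dots have to be split up using fresh gates''), but without the chain-following step the proof of the stated lemma is incomplete.
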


\begin{proof}
The only non-trivial part is the elimination of copy gates $A$ with $\rhs(A) = B$ for a gate $B$; 
all other right-hand sides that violate the normal form have to be split up using fresh gates. This is easily
done in logspace. For copy gates consider the directed graph $G$ that contains for every copy gate $A$ the gate $A$ as well
as the gate $\rhs(A)$. Moreover, there is a directed edge from $A$ to $B=\rhs(A)$. This is a directed forest where the edges
are oriented towards the roots since every node has at most one outgoing edge (and the graph is acyclic).
By traversing all (deterministic) paths,
we can compute the reflexive transitive closure $G^*$ of $G$ in logspace. Using $G^*$ it is straightforward to eliminate copy gates:
For every copy gate $A$ we redefine $\rhs(A) = \rhs(B)$, where $B$ is the unique node in $G^*$ of outdegree zero such that 
$(A,B)$ is an edge of $G^*$.
\end{proof}
%Similarly to the above definition, one can define circuits over a semigroup. 
The {\em circuit evaluation problem} $\CEP(\mathcal{A})$ for some algebraic structure $\mathcal{A}$
(say a semigroup or a semiring) is the following computational problem:

\outputproblem{A circuit $\C$ over $\mathcal{A}$ and an element $a \in D$ from its domain.}{Decide whether $[\C] = a$.}

Note that for a finite structure $\mathcal{A}$, $\CEP(\mathcal{A})$ is basically equivalent to 
its computation variant, where one actually
computes the output value $[\C]$ of the circuit: if $\CEP(\mathcal{A})$ belongs to a complexity class
$\mathsf{C}$, then the computation variant belongs to $\AC^0(\mathsf{C})$, and if the latter belongs to
$\AC^0(\mathsf{C})$ then $\CEP(\mathcal{A})$ belongs to the decision fragment of $\AC^0(\mathsf{C})$.

Clearly, for every finite structure the circuit evaluation problem can be solved in polynomial time
by evaluating all gates along the partial order $\le_\C$.
Ladner's classical $\Ptime$-completeness result for the Boolean circuit value problem \cite{Lad75} can be stated as follows:

\begin{theorem}[\cite{Lad75}] \label{thm-ladner}
For the Boolean semiring $\mathbb{B}_2 = (\{0,1\}, \vee, \wedge)$, the problem $\CEP(\mathbb{B}_2)$ is 
$\Ptime$-complete.
\end{theorem}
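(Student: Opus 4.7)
Membership in $\Ptime$ is immediate: a topological sort of the gates with respect to $\leq_\C$ can be computed in polynomial time, and then each gate is evaluated in constant time once its predecessors are known.

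For $\Ptime$-hardness under logspace reductions I would use the standard tableau reduction, taking care that the circuit uses only $\vee$ and $\wedge$ (there is no negation in $\mathbb{B}_2$). Fix an arbitrary language $L \in \Ptime$, decided by a deterministic Turing machine $M$ running in time $p(n)$ for some polynomial $p$, with tape alphabet $\Gamma$ and state set $Q$. Given an input $x$ of length $n$, let $T = p(n)$. I would construct a circuit $\C_x$ over $\mathbb{B}_2$ whose gates are indexed by triples $(i,t,\sigma)$ for $0 \le i \le T$, $0 \le t \le T$, and $\sigma \in \Gamma \cup (Q \times \Gamma)$; the intended meaning of gate $(i,t,\sigma)$ is the indicator of ``at time $t$, tape cell $i$ contains symbol $\sigma$ (and possibly the head together with the current state).'' This one-hot encoding is the key device that allows a monotone simulation of $M$.

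The right-hand sides are defined as follows. At time $t=0$ the configuration of $M$ on input $x$ is fixed, so each gate $(i,0,\sigma)$ has $\rhs(A) \in \{0,1\}$ determined by $x$. For $t \geq 1$, each cell $(i,t)$ of the tableau depends only on cells $(i-1,t-1), (i,t-1), (i+1,t-1)$, and because $M$ is deterministic the value of $(i,t,\sigma)$ is an $\vee$ over all triples $(\sigma_{-1},\sigma_0,\sigma_{+1})$ of neighborhood contents that force $\sigma$ in the middle cell at the next step, of the $\wedge$ of the three corresponding indicator gates at time $t-1$. This is a purely monotone formula. Finally, the output gate is the $\vee$ over all $i$ and all composite symbols $(q_{\mathrm{acc}}, a)$ of the gates $(i,T,(q_{\mathrm{acc}},a))$, where $q_{\mathrm{acc}}$ is the unique accepting state (w.l.o.g.\ $M$ halts by entering $q_{\mathrm{acc}}$ and looping, so it suffices to check acceptance at time $T$). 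By construction $[\C_x] = 1$ iff $M$ accepts $x$, so the reduction is correct.

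The main thing to check is that $x \mapsto \C_x$ is logspace-computable. Gate names are triples whose components are bounded by $T = p(n)$, hence encodable in $O(\log n)$ bits; the transition table of $M$ and the sets $\Gamma, Q$ are constants. A logspace transducer can enumerate all gate names, compute the constant right-hand sides at $t=0$ by inspecting the appropriate input bit of $x$, and for $t \geq 1$ enumerate the constantly many valid neighborhood triples and write out the corresponding $\vee$-of-$\wedge$ right-hand side. After this, one may apply Lemma~\ref{lemma-normal-form} in logspace to obtain a normal form circuit in which every $\vee$- and $\wedge$-gate has fan-in two. The main (minor) obstacle is exactly the absence of negation: this is handled once and for all by the one-hot encoding, after which the simulation becomes monotone and the rest is routine bookkeeping.
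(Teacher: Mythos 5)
The paper does not prove Theorem~\ref{thm-ladner} at all: it is imported verbatim from Ladner's paper \cite{Lad75}, so there is no internal proof to compare against. Your argument is a correct, self-contained substitute, and it is worth noting that it actually establishes slightly more than a naive reading of Ladner would give: since $\mathbb{B}_2$ has only $\vee$ and $\wedge$, what is really needed here is $\Ptime$-hardness of the \emph{monotone} circuit value problem (classically attributed to Goldschlager rather than Ladner, whose circuits use negation), and your one-hot tableau encoding is exactly the standard device that removes negation: because the machine is deterministic and the time-$0$ layer is one-hot by construction, an easy induction shows that each layer stays one-hot, so the $\vee$-of-$\wedge$ right-hand sides compute precisely the successor configuration and no spurious gates ever become true. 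The remaining loose ends are genuinely routine and you could mention them in a full write-up: cells outside the reachable tape region (treat them as blank, or enlarge the index range by one), the case where no neighborhood triple forces $\sigma$ (the empty $\vee$ is just the constant $0$, which is a legal right-hand side in the paper's circuit formalism), and the conversion of the constant-size $\vee$-of-$\wedge$ right-hand sides into fan-in-two gates, which Lemma~\ref{lemma-normal-form} handles in logspace. With those remarks your proof is complete and, within this paper's framework, arguably more informative than the bare citation, since $\{0,1\}$-freeness failing via $\B_2$ (Lemma~\ref{lemma-P-hard}) ultimately rests on exactly this monotone hardness.
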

For semigroups, the following dichotomy was shown in \cite{BeMcPeTh97}:

\begin{theorem}[\cite{BeMcPeTh97}]
	\label{thm:semigroup-classification}
	Let $S$ be a finite semigroup.
	\begin{itemize}
        \item If $S$ is aperiodic, then $\CEP(S)$ is in $\NL$.
         \item If $S$ is solvable, then $\CEP(S)$ belongs to $\DET$.
          \item If $S$ is not solvable, then $\CEP(S)$ is $\Ptime$-complete.
        \end{itemize}
\end{theorem}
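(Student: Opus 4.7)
The plan is to handle the three clauses of the dichotomy separately, using in each case the ideal and Green-class structure of the finite semigroup $S$.

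For the aperiodic case, I would use the fact that a finite aperiodic semigroup admits a principal series $S = I_0 \supseteq I_1 \supseteq \cdots \supseteq I_k = \emptyset$ whose successive Rees quotients are nilpotent quotients of Rees matrix semigroups over the trivial group (the Rees--Suschkewitsch decomposition applied to an aperiodic semigroup). On each such quotient a product $a_1 \cdots a_n$ is determined by two Rees coordinates together with the question whether the sequence crosses the zero of the quotient. I would then design a nondeterministic logspace procedure that evaluates the output gate by recursively guessing, for each visited gate, the quotient to which its value belongs together with its Rees coordinates, verifying these guesses against the gate's right-hand side. Since the number of quotients, coordinates, and semigroup elements that need to be held in memory at any time is a constant depending only on $S$, plus $O(\log n)$ gate pointers, the certificate fits in logarithmic space.

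For the solvable case, I would induct on the maximum derived length of the subgroups of $S$. The base case is the aperiodic case, and $\NL \subseteq \DET$ delivers the required bound. For the inductive step, inside every maximal subgroup $G$ of $S$ (that is, every $\mathcal{H}$-class which is a group) one quotients by the commutator subgroup $[G,G]$; this packages into a semigroup congruence whose quotient $S'$ has strictly smaller subgroup derived length and whose $\CEP$ lies in $\DET$ by induction. The value $[\C]$ in $S$ then differs from the value of the image circuit in $S'$ by a product of elements in fixed finite abelian groups, and this correction is an iterated abelian-group product, which reduces to iterated matrix multiplication and hence to $\DET$. Composing these $\AC^0$ reductions keeps the overall complexity in $\DET$.

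For the non-solvable case, I would reduce $\CEP(\B_2)$ to $\CEP(S)$ in logspace, invoking Theorem~\ref{thm-ladner} for $\Ptime$-hardness. A non-solvable finite semigroup contains a non-solvable subgroup $G$, inside which one obtains, by a Barrington-style commutator construction, constant-size subcircuits that realize the Boolean AND and OR operations on chosen encodings of $0$ and $1$. Replacing each Boolean gate of an input circuit by its gadget, and the output test by a fixed equality test in $S$, yields an equivalent $S$-circuit that can be constructed from the Boolean circuit in logspace.

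The main obstacle I expect is the aperiodic case. The nondeterministic logspace upper bound requires a certificate-guessing strategy whose state at any moment is essentially one gate pointer together with a bounded number of semigroup elements; showing that the resulting recursive verification can be flattened into a single $\NL$ computation without stack-like blow-up is the delicate point. In contrast, the solvable reduction to $\DET$ rests on the well-developed theory of iterated abelian-group products, and the $\Ptime$-hardness in the non-solvable case is a standard packaging of Ladner's theorem via commutator gadgets, so those two parts are comparatively routine.
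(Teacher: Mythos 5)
The first thing to note is that the paper does not prove this theorem at all: it is imported from \cite{BeMcPeTh97}, where it is established for finite \emph{monoids}, and the paper only adds the short observation needed to pass to semigroups --- the subgroups of $S^1$ are exactly the subgroups of $S$ together with the trivial group, so $S^1$ is aperiodic (resp.\ solvable) whenever $S$ is, and $\CEP(S)$ trivially reduces to $\CEP(S^1)$, which gives the $\NL$ and $\DET$ upper bounds; for hardness, a non-solvable $S$ contains a non-solvable subgroup, whose $\Ptime$-complete circuit evaluation problem embeds into $\CEP(S)$. Your proposal instead tries to re-derive the monoid results of \cite{BeMcPeTh97} from scratch. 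The non-solvable part of your sketch (Barrington-style commutator gadgets inside a non-solvable subgroup, combined with Theorem~\ref{thm-ladner}) is the standard argument and is essentially correct.

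The two upper bounds, however, have genuine gaps exactly where the real work lies. For the aperiodic case, ``guess, for each visited gate, its principal factor and Rees coordinates and verify against the right-hand side'' is not an $\NL$ algorithm as described: a gate's value depends on its entire subcircuit, the dag has unbounded depth and fan-out, and a logspace machine can neither store a guessed assignment for all gates nor re-verify a guess each time a gate is reused; checking consistency only along the path currently being traversed is unsound, since an incorrect but locally consistent sequence of guesses can lead to acceptance of a wrong output value. You yourself flag this ``flattening'' as the delicate point, but it is the entire content of the aperiodic bound, which \cite{BeMcPeTh97} obtain by a nontrivial induction on the ideal structure rather than by certificate guessing. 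For the solvable case, collapsing $[G,G]$ inside every maximal subgroup does not obviously ``package into a semigroup congruence'' with the properties you need: the congruence generated by these identifications can be much coarser than intended (it need not restrict to the coset partition inside each maximal subgroup and can identify elements across $\mathcal{J}$-classes), so neither the claim that the quotient has strictly smaller subgroup derived length nor the asserted relation between $[\C]$ computed in $S$ and in the quotient is justified. Even in the pure group case the ``correction'' is not a single product over a fixed abelian group: it is a product of a number of commutator contributions governed by the circuit (potentially exponentially many monomials), and reducing its computation to counting, i.e.\ to iterated matrix products in $\DET$, is precisely the technical core of \cite{BeMcPeTh97}. If your aim is only the statement as used in this paper, the economical route is the paper's own: quote the monoid case of \cite{BeMcPeTh97} and add the $S^1$ and non-solvable-subgroup observations.
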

Some remarks should be made:
\begin{itemize}
\item In \cite{BeMcPeTh97}, Theorem~\ref{thm:semigroup-classification} is only shown for monoids, but the extension to semigroups is straightforward:
If the finite semigroup $S$ has a non-solvable subgroup, then $\CEP(S)$ is $\Ptime$-complete, since the circuit evaluation problem for a non-solvable finite
group is $\Ptime$-complete. On the other hand, if $S$ is solvable (resp., aperiodic), then also the monoid $S^1$ is solvable  (resp., aperiodic).
This holds, since the subgroups of $S^1$ are exactly
the subgroups of $S$ together with $\{1\}$. Hence, $\CEP(S^1)$ is in $\DET$ (resp., $\NL$), which implies that $\CEP(S)$ is in $\DET$ (resp., $\NL$).
\item In \cite{BeMcPeTh97}, the authors use the original definition $\DET = \NC^1(\mathsf{det})$ of Cook. But the arguments in \cite{BeMcPeTh97}
actually show that for a finite solvable semigroup, $\CEP(S)$ belongs to $\AC^0(\mathsf{det})$ (which is our definition of $\DET$).
\item In \cite{BeMcPeTh97}, the authors study two versions of the circuit evaluation problem for a semigroup $S$: What we call $\CEP(S)$ is called
$\UCEP(S)$ (for ``unrestricted circuit evaluation problem'') in \cite{BeMcPeTh97}. The problem $\CEP(S)$ is defined in  \cite{BeMcPeTh97} as 
the circuit evaluation problem, where in addition the input circuit must have the property that the output gate has no ingoing edges and
all gates are reachable from the output gate. These conditions can be enforced with an $\AC^0(\NL)$-precomputation. Hence, the difference between the two 
variants is only relevant for classes below $\NL$. We only consider the unrestricted version of the circuit evaluation problem (where the input circuit 
is arbitrary). To keep notation simple, we decided to refer with $\CEP$ to the unrestricted version.
\end{itemize}
Let us fix a {\em finite} semiring $R = (R,+,\cdot)$ for the rest of the paper.
Note that $\CEP(\aR)$ (resp., $\CEP(\mR)$) is the restriction of $\CEP(R)$ to circuits without multiplication (resp.,
addition) gates.
Since every commutative semigroup is solvable, Theorem~\ref{thm:semigroup-classification} implies that
$\CEP(\aR)$ belongs to $\DET$.
The main result of this paper is:

\begin{theorem}
	\label{thm-main}
	If the finite semiring $R$ is $\{0,1\}$-free, then the problem $\CEP(R)$ belongs to the class $\AC^0(\NL, \CEP(\aR), \CEP(\mR))$.
	Otherwise $\CEP(R)$ is $\Ptime$-complete.
\end{theorem}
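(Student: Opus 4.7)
The plan is to split the dichotomy into the two independent claims.

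For $\Ptime$-hardness when $R$ is not $\{0,1\}$-free, I would invoke Lemma~\ref{lemma-0-1-free}(2), which supplies a subsemiring of $R$ isomorphic to either $\B_2$ or $\Z_d$ for some $d\geq 2$. Ladner's theorem (Theorem~\ref{thm-ladner}) handles the $\B_2$ case immediately, since a Boolean circuit is literally a $\B_2$-circuit and hence, via the inclusion of the subsemiring into $R$, also an $R$-circuit. For $\Z_d$ I would simulate Boolean gates on $\{0,1\}\subseteq\Z_d$ by the polynomials $x\cdot y$ (for $\wedge$) and $x+y+(d-1)\cdot xy$ (for $\vee$), giving a logspace reduction from Boolean circuit evaluation to $\CEP(\Z_d)$ and, via the inclusion $\Z_d\hookrightarrow R$, to $\CEP(R)$.

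For the upper bound I would follow the two-step strategy outlined in the introduction. After normalising the input via Lemma~\ref{lemma-normal-form}, the first step assigns to every gate $A$ a \emph{type}, i.e.\ a pair $(e,f)\in E(R)\times E(R)$ of multiplicative idempotents with $[A]_\C = e\cdot [A]_\C\cdot f$, and rewrites the circuit so that additions combine only gates of the same type and every multiplication connects compatible types through a shared idempotent bridge. Since $R$ is fixed, only constantly many types exist; consistency of such an assignment therefore reduces to a family of reachability questions on the dag of $\C$, computable in $\AC^0(\NL)$. The result is an equivalent type admitting circuit.

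The main obstacle is the second step: evaluating a type admitting circuit. For every type $(e,f)$, gates of that type take values in the set $eRf$, and pure addition (resp.\ multiplication) subcircuits of a single type can be solved directly through the oracle $\CEP(\aR)$ (resp.\ $\CEP(\mR)$). The subtlety, and the only place where $\{0,1\}$-freeness is used, lies in the interleaving of the two operations: a product of sums expands into cross-terms whose naive evaluation blows up the circuit exponentially, which is exactly the obstruction powering Ladner's theorem. I would try to prove a structural lemma saying that in each local subsemiring $eRe$, $\{0,1\}$-freeness rules out the emulation of a separate zero and one, and hence rules out the Ladner-style sequential gadget; concretely, it should force mixed single-type subcircuits to collapse into a bounded-depth alternation of ``pure'' additive and multiplicative layers whose values can be computed in parallel by alternating oracle calls to $\CEP(\aR)$ and $\CEP(\mR)$, glued together by $\NL$-computable reachability information obtained in step 1. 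Identifying this precise collapse, and verifying that $\{0,1\}$-freeness is exactly the right hypothesis making it work, will be the heart of the technical argument.
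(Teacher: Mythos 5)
Your hardness half is fine and essentially the paper's Lemma~\ref{lemma-P-hard}: Lemma~\ref{lemma-0-1-free} gives a copy of $\B_2$ or $\Z_d$, Ladner handles $\B_2$, and your gadgets $x\cdot y$ and $x+y+(d-1)\cdot xy$ over $\{0,1\}\subseteq\Z_d$ correctly simulate monotone Boolean circuits. The upper bound, however, has a genuine gap, and already your step 1 is not quite right as stated. An arbitrary gate value $a\in R$ need not satisfy $a=eaf$ for any idempotents $e,f$, so one cannot simply ``assign types'' and check consistency by reachability; moreover the original circuit need not be \emph{equivalent} to any type admitting circuit, since the values of such a circuit all lie in $ERE$. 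The paper (Proposition~\ref{thm:step-1}, via Lemmas~\ref{lemma:monomlength} and~\ref{lemma:FSF}) first works in the free semiring $\N[R]$, splits every gate's polynomial into its short monomials (length $<|R|$), which are evaluated explicitly using oracle calls to $\CEP(\aR)$, and its long monomials, which land in the ideal $\langle R^{|R|}\rangle=\langle RER\rangle$; it then factors each long part by its leading and trailing idempotent sandwich $(s,e,f,t)$ and outputs a type admitting circuit \emph{together with an affine correction} $[\C]=\alpha([A_1],\dots,[A_m])$. So this step needs $\AC^0(\NL,\CEP(\aR))$, not just $\AC^0(\NL)$, and a nontrivial circuit surgery, not a labelling.

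The heart of the theorem is step 2, and there you explicitly leave the key lemma open (``I would try to prove a structural lemma\dots''). The collapse you conjecture --- that $\{0,1\}$-freeness forces mixed subcircuits into a bounded-depth alternation of pure additive and multiplicative layers --- is not the mechanism and is not obviously true: the type admitting circuit can still alternate $+$ and $\cdot$ to polynomial depth. What the paper proves instead (Theorem~\ref{thm:rank} plus the two lemmas of Section~\ref{sec-rank}) is that every $\{0,1\}$-free finite semiring admits a \emph{rank function} in the sense of Definition~\ref{def-rank}, built from the preorder $a\preceq b$ iff $b=\ell a r+c$; its crucial third property is an absorption law inside each $eRf$: if $a,b\in eRf$ and $\rank(a+b)=\rank(a)$, then $a+b=a$. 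This is exactly where $\{0,1\}$-freeness enters, through Lemma~\ref{lem:char=(n,1)} ($ef=fe=f$, $e^2=e$, $f^2=f+f=f$ imply $e+f=f$), which is point~4 of Lemma~\ref{lemma-0-1-free} in disguise. The evaluation algorithm then runs at most $|R|$ phases: evaluate maximal addition subcircuits with $\CEP(\aR)$, replace every remaining addition gate by one of its inner summands, evaluate the resulting purely multiplicative circuit with $\CEP(\mR)$, and commit the values on the downward-closed set of locally correct gates; the rank/absorption property guarantees that after phase $k$ every gate whose value has rank at most $k$ is correctly evaluated, which is the quantitative progress measure your sketch lacks. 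Without this rank argument (or a substitute for it), your outline does not yield an algorithm, so the proposal as it stands does not prove the upper bound.
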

Note that if $\CEP(\aR)$ or $\CEP(\mR)$ is $\NL$-hard, then 
$$\AC^0(\NL, \CEP(\aR), \CEP(\mR)) = \AC^0(\CEP(\aR), \CEP(\mR)).$$
For example, this is the case, if $\aR$ or $\mR$ is an aperiodic nontrivial monoid \cite[Proposition~4.14]{BeMcPeTh97}.

The $\Ptime$-hardness statement in Theorem~\ref{thm-main} is easy to show:
\begin{lemma} \label{lemma-P-hard}
	If the finite semiring $R$ is not $\{0,1\}$-free, then the problem $\CEP(R)$ is $\Ptime$-complete.
\end{lemma}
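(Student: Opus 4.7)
The plan is to exhibit a logspace reduction from the Boolean circuit value problem $\CEP(\B_2)$ to $\CEP(R)$; combined with Ladner's theorem (Theorem~\ref{thm-ladner}), this gives the $\Ptime$-hardness, while the $\Ptime$ upper bound is immediate from finiteness of $R$. The starting point is Lemma~\ref{lemma-0-1-free}: since $R$ is not $\{0,1\}$-free, equivalent condition~(2) tells us that $R$ contains either a copy of $\B_2$ or a copy of $\Z_d$ for some $d \geq 2$ as a subsemiring. I split the reduction accordingly.

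If $\B_2$ is a subsemiring of $R$, say via elements $z,u \in R$ playing the roles of $0$ and $1$, then every Boolean circuit $\C$ can be read verbatim as a circuit over $R$: replace each input constant $0, 1$ by $z, u$, each $\vee$-gate by a $+$-gate, and each $\wedge$-gate by a $\cdot$-gate. Since $\{z,u\}$ is closed under both operations of $R$ and behaves like $\B_2$, the value computed is preserved.

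The case $\Z_d \subseteq R$ is slightly more delicate, since $1 + 1 \neq 1$ in $\Z_d$ as soon as $d > 2$, so $+$ cannot directly simulate $\vee$. Multiplication restricted to $\{0,1\} \subseteq \Z_d$ still correctly computes $\wedge$, however, and the polynomial identity $a \vee b = a + b - ab$, valid for $a, b \in \{0,1\}$, can be rewritten in $\Z_d$ as $a + b + (d-1) \cdot a \cdot b$, using $-1 = d-1$ there. I therefore replace each $\vee$-gate $A = B \vee B'$ of the Boolean circuit by a constant-size subcircuit computing $B + B' + (d-1) \cdot B \cdot B'$, where the fixed constant $(d-1)$ is realised as the sum of $d-1$ copies of $1 \in \Z_d \subseteq R$; a case check on the four Boolean input combinations confirms that this simulates $\vee$ on $\{0,1\}$. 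Logspace-implementability is clear (using Lemma~\ref{lemma-normal-form} to put the result back into normal form if desired).

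The expected main obstacle is precisely this second case: finding a constant-size gadget for $\vee$ inside $\Z_d$ that does not rely on additive idempotence of $1$. Once the identity $a + b + (d-1) a b$ is in hand, both cases of the reduction are entirely routine, and the choice between the two gadgets is made once and for all based on the fixed semiring $R$.
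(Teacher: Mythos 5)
Your proof is correct and follows essentially the same route as the paper: invoke Lemma~\ref{lemma-0-1-free} to find $\B_2$ or $\Z_d$ inside $R$, dispose of the $\B_2$ case via Ladner's theorem, and in the $\Z_d$ case arithmetize a complete Boolean basis by a logspace gate-by-gate replacement. The only (immaterial) difference is the gadget: the paper reduces from circuits over $\{0,1,\wedge,\neg\}$ and simulates negation by $1+(d-1)\cdot x$, whereas you keep the $\{\vee,\wedge\}$ basis of $\CEP(\B_2)$ and simulate $\vee$ by $a+b+(d-1)\cdot a\cdot b$; both gadgets are valid.
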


\begin{proof}
	By Lemma~\ref{lemma-0-1-free}, $R$ contains either $\B_2$ or $\Z_d$ for some $d \ge 2$.
	In the former case, $\Ptime$-hardness follows from Ladner's theorem.
	Furthermore, one can reduce the $\Ptime$-complete Boolean circuit value problem
		over $\{0,1,\wedge,\neg\}$ to $\CEP(\Z_d)$ for $d \geq 2$:
	A gate $z = x \wedge y$ is replaced by $z = x \cdot y$ and a gate $y = \neg x$ is replaced by $y = 1 + (d-1) \cdot x$.
\end{proof}
Theorem~\ref{thm:semigroup-classification} and \ref{thm-main} yield the following corollaries:
\begin{corollary}
	\label{coro-main}
	Let $R$ be a finite semiring. 
	\begin{itemize}
         \item If $R$ is not $\{0,1\}$-free or $\mR$ is not solvable, then $\CEP(R)$ is $\Ptime$-complete.
         \item If $R$ is $\{0,1\}$-free and $\mR$ is solvable, then $\CEP(R)$ belongs to $\DET$.
	\item If $R$ is $\{0,1\}$-free and $\mR$ is aperiodic,  then $\CEP(R)$ belongs to $\NL$.
        \end{itemize}
\end{corollary}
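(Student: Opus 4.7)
The plan is to derive the three bullets as consequences of Theorem~\ref{thm-main} combined with the semigroup dichotomy of Theorem~\ref{thm:semigroup-classification}, handling each bullet separately.

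For the $\Ptime$-complete bullet, I would split the hypothesis into its two disjuncts. If $R$ is not $\{0,1\}$-free, this is precisely the second sentence of Theorem~\ref{thm-main} (equivalently Lemma~\ref{lemma-P-hard}). If instead $\mR$ is not solvable, then $\CEP(\mR)$ is $\Ptime$-complete by Theorem~\ref{thm:semigroup-classification}; the restriction of $\CEP(R)$ to multiplication-only circuits gives an identity reduction $\CEP(\mR) \le \CEP(R)$ that transfers hardness, and the matching $\Ptime$ upper bound is the routine bottom-up evaluation of gates in the order $\le_\C$.

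For the $\DET$ bullet, assume $R$ is $\{0,1\}$-free and $\mR$ is solvable. Theorem~\ref{thm-main} puts $\CEP(R)$ into $\AC^0(\NL, \CEP(\aR), \CEP(\mR))$. The additive semigroup $\aR$ is commutative and therefore solvable (Section~\ref{sec-semigroups}), so Theorem~\ref{thm:semigroup-classification} gives $\CEP(\aR) \in \DET$; solvability of $\mR$ gives $\CEP(\mR) \in \DET$ likewise. Since $\NL \subseteq \DET$ and (as noted in Section~2 of the excerpt) $\AC^0(\NL, f_1, \ldots, f_k) \subseteq \DET$ whenever each $f_i \in \DET$, we conclude $\CEP(R) \in \DET$.

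For the $\NL$ bullet, assume $R$ is $\{0,1\}$-free and $\mR$ is aperiodic. Theorem~\ref{thm:semigroup-classification} now refines the multiplicative oracle to $\CEP(\mR) \in \NL$, while Theorem~\ref{thm-main} as stated still only provides $\CEP(R) \in \AC^0(\NL, \CEP(\aR), \CEP(\mR))$. The main obstacle is the additive oracle $\CEP(\aR)$: the commutative-solvable argument only places it in $\DET$, not in $\NL$, since $\aR$ may well contain non-trivial additive subgroups even when $\mR$ is aperiodic. I expect the resolution to require a finer inspection of the reduction used to prove Theorem~\ref{thm-main}: under the aperiodic hypothesis on $\mR$, the reduction to type admitting circuits should already be executable in $\AC^0(\NL, \CEP(\mR))$, so that a genuine call to the additive oracle is not needed. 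Once that strengthening is in place, $\CEP(\mR)\in\NL$ together with the fact that the restriction of $\AC^0(\NL)$ to decision problems coincides with $\NL$ yields $\CEP(R) \in \NL$.
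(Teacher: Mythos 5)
Your handling of the first two bullets is correct and is exactly the derivation the paper intends: $\Ptime$-hardness via Lemma~\ref{lemma-P-hard} in the non-$\{0,1\}$-free case and via the trivial restriction $\CEP(\mR) \leq \CEP(R)$ (multiplication-only circuits) in the non-solvable case, with the generic polynomial-time upper bound; and the $\DET$ bound by feeding $\CEP(\aR), \CEP(\mR) \in \DET$ (commutative implies solvable for $\aR$) and $\NL \subseteq \DET$ into Theorem~\ref{thm-main}.

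For the third bullet you have located the obstacle correctly, but the repair you propose cannot work, and it is worth seeing why. No refinement of the reduction behind Theorem~\ref{thm-main} can make the additive oracle unnecessary, because $\CEP(\aR)$ is literally a restriction of $\CEP(R)$ (circuits without multiplication gates); hence $\CEP(R) \in \NL$ would force $\CEP(\aR) \in \NL$, so the issue is with the statement's hypotheses, not with the efficiency of the reduction. Moreover, aperiodicity of $\mR$ together with $\{0,1\}$-freeness puts no constraint on $\aR$: take $R = \Z_2$ with the usual addition and the constant multiplication $x \cdot y := 0$. This is a semiring; it is $\{0,1\}$-free, since a multiplicative identity $1_T$ of a subsemiring $T$ would satisfy $1_T = 1_T \cdot 1_T = 0$ and then $t = 1_T \cdot t = 0$ for all $t \in T$; and $\mR$ is a null semigroup, hence aperiodic. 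Yet $\CEP(R)$ contains $\CEP(\Z_2,+)$, which is hard for $\oplus\Log$ and thus not in $\NL$ under standard assumptions. So the third bullet cannot be derived from Theorems~\ref{thm:semigroup-classification} and~\ref{thm-main} as stated; the intended one-line derivation (both oracle problems in $\NL$ by Theorem~\ref{thm:semigroup-classification}, then use that the $0$-$1$ restriction of $\AC^0(\NL)$ is $\NL$) additionally needs $\aR$ to be aperiodic. That extra hypothesis does hold in the paper's applications, where addition is union and hence idempotent --- note that the example following Theorem~\ref{thm-dichotomy-PCFG} explicitly checks additive aperiodicity --- so the correct move is to add aperiodicity of $\aR$ to the hypothesis (or to weaken the conclusion to membership in $\AC^0(\NL,\CEP(\aR))$), not to search for a stronger reduction.
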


Let us present an application of Corollary~\ref{coro-main}. 
\begin{example} \label{ex-power-semigroup} 
An important semigroup construction found in the literature is the  power construction.
For a finite semigroup $S$ one defines the {\em power semiring}
$\mathcal{P}(S) = (2^S \setminus \{ \emptyset \}, \cup, \cdot)$ with the multiplication $A \cdot B = \{ ab \mid a \in A, b \in B \}$.
Notice that if one includes the empty set, then the semiring would not be $\{0,1\}$-free:
Take an idempotent $e \in S$. Then $\emptyset$ and $\{e\}$ form a copy of $\mathbb{B}_2$. Hence, the circuit evaluation problem is 
$\Ptime$-complete.

Let us further assume that $S$ is a monoid with identity $1$ (the general case will be considered below).
If $S$ contains an idempotent $e \neq 1$ then also $\mathcal{P}(S)$ is not $\{0,1\}$-free: $\{e\}$ and $\{1,e\}$ form a copy of $\mathbb{B}_2$.
On the other hand, if $1$ is the unique idempotent of $S$, then $S$ must be a group $G$.  Assume that $G$ is solvable; otherwise $\mathcal{P}(G)_{{\scriptscriptstyle \bullet}}$
is not solvable as well and has a $\Ptime$-complete circuit evaluation problem by Theorem~\ref{thm:semigroup-classification}. It is not hard to show that 
the subgroups of  $\mathcal{P}(G)_{{\scriptscriptstyle \bullet}}$ correspond to the quotient groups of subgroups of $G$; see also \cite{MCHa73}. Since $G$ is solvable
and the class of solvable groups is closed under taking subgroups and quotients, 
$\mathcal{P}(G)_{{\scriptscriptstyle \bullet}}$ is a solvable monoid. Moreover $\mathcal{P}(G)$ is $\{0,1\}$-free: 
Otherwise, Lemma~\ref{lemma-0-1-free} implies that there
are non-empty subsets $A, B \subseteq G$ such that $A \neq B$, $A \cup B = B$ (and thus $A \subsetneq B$), $A B = B A = A^2 = A$, and $B^2 = B$.
Hence, $B$ is a subgroup of $G$ and $A \subseteq B$. But then $B = A B = A$, which is a contradiction. By Corollary~\ref{coro-main},
$\CEP(\mathcal{P}(G))$ for a finite solvable group $G$ belongs to $\DET$. 
\end{example}
Let us now classify the complexity of $\CEP(\mathcal{P}(S))$ for arbitrary semigroups $S$.
A semigroup $S$ is called a {\em local group} if for all $e \in E(S)$ the local monoid $eSe$ is a group.
It is known that in every finite local group $S$ of size $n$ the minimal semigroup ideal is $S^n = S E(S) S$, see \cite[Proposition 2.3]{almeida09}.

\begin{theorem}
	\label{thm:power-semiring}
	Let $S$ be a finite semigroup.
	If $S$ is a local group and solvable, then $\CEP(\mathcal{P}(S))$ belongs to $\DET$.
	Otherwise $\CEP(\mathcal{P}(S))$ is $\Ptime$-complete.
\end{theorem}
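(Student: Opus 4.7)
The plan is to apply Corollary~\ref{coro-main}, reducing the statement to the equivalence: $\mathcal{P}(S)$ is $\{0,1\}$-free and $\mathcal{P}(S)_{{\scriptscriptstyle \bullet}}$ is solvable if and only if $S$ is a solvable local group. The easy (contrapositive) half goes as follows. If $S$ has a non-solvable subgroup, the singleton map $s \mapsto \{s\}$ is a semigroup embedding of $S$ into $\mathcal{P}(S)_{{\scriptscriptstyle \bullet}}$, transporting the non-solvable subgroup into $\mathcal{P}(S)_{{\scriptscriptstyle \bullet}}$. If $S$ is not a local group, choose $e \in E(S)$ with $eSe$ not a group; since a finite non-group monoid must contain a non-identity idempotent, there is $e' \in E(eSe) \setminus \{e\}$, and the relations $ee' = e'e = e'$ immediately make $\{e'\}$ and $\{e,e'\}$ into a copy of $\B_2$ inside $\mathcal{P}(S)$, so $\mathcal{P}(S)$ is not $\{0,1\}$-free by Lemma~\ref{lemma-0-1-free}.

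For the converse, assume $S$ is a solvable local group. I would first isolate the following structural lemma: every finite subsemigroup $B$ of $S$ with $B^2 = B$ is a completely simple semigroup. Suppose for contradiction $B$ has a proper non-empty two-sided ideal $J$; since $B = B \cdot E(B) \cdot B$ (as $B = B^{|B|}$), the inclusion $E(B) \subseteq J$ would give $B \subseteq B J B \subseteq J$, so one may pick $e \in E(B) \setminus J$. The local group hypothesis makes $eBe$ a subgroup of the group $eSe$ (with identity $e$), while $eBe \cap J$ is a non-empty subsemigroup of this finite group, hence a subgroup of it containing $e$, contradicting $e \notin J$. Given this lemma, $\{0,1\}$-freeness of $\mathcal{P}(S)$ is immediate via Lemma~\ref{lemma-0-1-free}(4): any putative witnesses $A \subsetneq B$ there would give a proper non-empty two-sided ideal $A$ in the completely simple semigroup $B$, which is impossible.

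The main obstacle is the solvability of $\mathcal{P}(S)_{{\scriptscriptstyle \bullet}}$. Let $H$ be a subgroup of $\mathcal{P}(S)_{{\scriptscriptstyle \bullet}}$ with identity $E$, so $E^2 = E$. The structural lemma applied to $E$ yields that $E$ is completely simple, and its maximal subgroup is $G_e := eEe \le eSe$ for any $e \in E(E)$; since $S$ is solvable, both $eSe$ and $G_e$ are solvable groups. The plan is then to show that the map $\varphi : H \to \mathcal{P}(eSe)_{{\scriptscriptstyle \bullet}}$ defined by $\varphi(A) = eAe$ is an injective group homomorphism whose image lies in the $\mathcal{H}$-class of $G_e$ inside $\mathcal{P}(eSe)_{{\scriptscriptstyle \bullet}}$. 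Verifying $\varphi(AB) = \varphi(A)\varphi(B)$ from the identities $A = EAE$ and $A \cdot A^{-1} = A^{-1} \cdot A = E$, together with the Rees matrix structure of $E$, is the most delicate technical step. Once the embedding is in hand, Example~\ref{ex-power-semigroup}, which records the classical result of \cite{MCHa73} that subgroups of $\mathcal{P}(G)_{{\scriptscriptstyle \bullet}}$ for a group $G$ are sections of $G$, identifies the $\mathcal{H}$-class of $G_e$ as a section of the solvable group $eSe$, hence itself solvable, so $H$ is solvable as well.
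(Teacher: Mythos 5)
Your skeleton is the paper's: reduce everything to Corollary~\ref{coro-main}, kill the non-solvable case via the singleton embedding $s \mapsto \{s\}$, kill the non-local-group case via the copy of $\B_2$ formed by $\{f\}$ and $\{e,f\}$ for an idempotent $f \neq e$ in $eSe$, and derive $\{0,1\}$-freeness from Lemma~\ref{lemma-0-1-free} (point 4) by showing that witnesses $A \subsetneq B$ with $AB=BA=A^2=A$ and $B^2=B$ cannot exist. For that last step the paper cites the fact from \cite{almeida09} that the minimal ideal of a finite local group $B$ is $B^{|B|}=BE(B)B$, whereas you prove the needed statement directly; your argument (if $J \subsetneq B$ were a non-empty ideal, pick $e \in E(B)\setminus J$, note that $eBe$ is a finite subsemigroup of the group $eSe$ and hence a group with identity $e$, and that $eBe \cap J$ is a non-empty subsemigroup of it, hence contains $e$) is correct and self-contained. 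Up to replacing that citation by a short direct proof, this is the paper's proof.

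The genuine divergence, and the one genuine gap, is the solvability of $\mathcal{P}(S)_{{\scriptscriptstyle \bullet}}$: the paper disposes of it by citing \cite[Corollary 2.7]{Auinger20051}, while you propose to embed a subgroup $H$ with identity $E$ into $\mathcal{P}(eSe)_{{\scriptscriptstyle \bullet}}$ via $A \mapsto eAe$ and then invoke the McCarthy--Hayes description from Example~\ref{ex-power-semigroup}. As written, the step you yourself flag as the heart of the matter -- that $A \mapsto eAe$ is multiplicative and injective on $H$ -- is only announced, not proved, and it is precisely the content that replaces the paper's citation; a plan plus ``this is the delicate step'' is not a proof. The claim is true, though, and closes in a few lines with no Rees matrix computation: since $E^2=E$, your structural lemma makes $E$ simple, so the non-empty ideal $EeE$ of $E$ equals $E$; then for $A,B \in H$, using $AE=A$ and $EB=B$, one gets $e(AB)e = e(AEB)e = e\bigl(A(EeE)B\bigr)e = eAeBe = (eAe)(eBe)$, and $A = EAE = (EeE)A(EeE) = E(eAe)E$, which shows that $eAe$ determines $A$. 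Hence $H$ embeds into a subgroup of $\mathcal{P}(eSe)_{{\scriptscriptstyle \bullet}}$ with $eSe$ a solvable group, and \cite{MCHa73} (as recalled in Example~\ref{ex-power-semigroup}) makes that subgroup a section of $eSe$, so $H$ is solvable. With those lines added your argument is complete, and it buys something the paper's does not: a self-contained proof that avoids the external Corollary~2.7 of \cite{Auinger20051}.
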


\begin{proof}
	Let $S$ be a finite local group which is solvable. By \cite[Corollary 2.7]{Auinger20051}
	the multiplicative semigroup $\mathcal{P}(S)_{{\scriptscriptstyle \bullet}}$ is solvable as well.
	It remains to show that the semiring $\mathcal{P}(S)$ is $\{0,1\}$-free: 
	Towards a contradiction assume that $\mathcal{P}(S)$ is not $\{0,1\}$-free.
	By Lemma~\ref{lemma-0-1-free}, there exist non-empty sets
	$A \subsetneq B \subseteq S$ such that $A B = B A = A^2 = A$ and $B^2 = B$.
	Hence, $B$ is a subsemigroup of $S$, which is also a local group, and $A$ is a semigroup ideal in $B$.
	Since the minimal semigroup ideal of $B$ is $B^n$ for $n = |B|$ and $B^n = B$, we obtain $A=B$, which is a contradiction.
	
	The case that $S$ is not a local group follows from the arguments in Example~\ref{ex-power-semigroup}. In that case,
	there exists a local monoid $eSe$ which is not a group and hence contains an idempotent $f \neq e$.
	Since $\{ \{f\}, \{e,f\} \}$ forms a copy of $\B_2$ it follows that $\CEP(\mathcal{P}(S))$ is $\Ptime$-complete. Finally, 
	if $S$ is not solvable, then also $\mathcal{P}(S)^{\scriptscriptstyle}$ is not solvable and
	$\CEP(\mathcal{P}(S))$ is $\Ptime$-complete by Theorem~\ref{thm:semigroup-classification}.
\end{proof}

	\section{Proof of Theorem~\ref{thm-main}}
	
	The proof of Theorem~\ref{thm-main} will proceed in two steps. In the first step we reduce the problem to evaluating
	circuits in which the computation respects a type-function defined in the following.
	In the second step, we show how to evaluate such circuits.

	\begin{definition} \label{def-type}
		Let $E = E(R)$ be the set of multiplicative idempotents.
		Let $\C=(V,\rhs)$ be a circuit in normal form such that $[A]_{\C} \in ERE$ for all gates $A \in V$.
		A type-function for $\C$ is a mapping $\mathsf{type} : V \to E \times E$ such that the following conditions hold:
		\begin{itemize}
			\item For every $A \in V$ such that  $\mathsf{type}(A) = (e,f)$
			we have $[A]_{\C} \in eRf$.
			\item For every addition gate $A \in V$ with $\rhs(A) = B+C$ we have $\mathsf{type}(A) = \mathsf{type}(B) = \mathsf{type}(C)$.
			\item For every multiplication gate $A \in V$ with $\rhs(A) = B \cdot C$, $\mathsf{type}(B) = (e,e')$,  and $\mathsf{type}(C) = (f',f)$
			we have $\mathsf{type}(A) = (e,f)$.
		\end{itemize}
		A circuit is called {\em type admitting} if it admits a type-function.
	\end{definition}
	Note that we do not need an output gate in a type admitting circuit.
	
	\begin{definition}
		A function $\alpha: R^m \to R$ ($m \geq 0$) is called {\em affine} if there are $a_1, b_1, \ldots, a_m,b_m,c \in R$
		such that for all $x_1, \ldots, x_m \in R$:
		\[
			\alpha(x_1, \dots, x_m) = \sum_{i=1}^m a_i x_i b_i + c \quad \text{ or } \quad \alpha(x_1, \dots, x_m) = \sum_{i=1}^m a_i x_i b_i .
		\]
	\end{definition}
	We represent this affine function by the tuple $(a_1,b_1, \ldots, a_m,b_m,c)$ or $(a_1,b_1, \ldots, a_m,b_m)$.
	
	Theorem~\ref{thm-main} is now an immediate corollary of the following two propositions (and the obvious fact
	that an affine function with a constant number of inputs can be evaluated in $\AC^0$).
	
	\begin{proposition}
		\label{thm:step-1}
		Given a circuit $\C$ over the finite semiring $R$, one can compute the following data in $\AC^0(\NL,\CEP(\aR))$:
		\begin{itemize}
			\item an affine function $\alpha: R^m \to R$ for some $0 \leq m \leq |R|^4$,
			\item a type admitting circuit $\C' = (V',\rhs')$, and
			\item a list of gates $A_1, \dots, A_m \in V'$ such that $[\C] = \alpha([A_1]_{\C'}, \dots, [A_m]_{\C'})$.
		\end{itemize}
	\end{proposition}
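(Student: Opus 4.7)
The plan is to reduce evaluation of the general circuit $\C$ over $R$ to the evaluation of a type-admitting circuit $\C'$, leaving only a constant-size affine combining step $\alpha$ outside. The key algebraic fact I would use is that there exists a constant $N = N(|R|)$ such that every product of $N$ or more elements of $R$ lies in $E R E := \{e r f : e,f \in E,\ r \in R\}$; this follows from a standard pigeonhole argument on prefix/suffix products in the finite semigroup $\mR$, locating idempotents of the form $s^{\omega}$ that can be factored out at both ends of a long product. Conceptually, one views each gate $A$ of $\C$ as computing a formal $\N$-linear combination of monomials in the input values, and partitions the monomials of each gate into \emph{long} ones (length $\ge N$, automatically values in $E R E$) and \emph{short} ones (length $< N$, values drawn from a bounded subset of $R$ of size at most $|R|^{N-1}$).

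First I would normalize $\C$ via Lemma~\ref{lemma-normal-form}. Then, for each gate $A$ of $\C$ and each pair $(e,f) \in E \times E$, I introduce a gate $A^{(e,f)}$ in $\C'$ with $\mathsf{type}(A^{(e,f)}) = (e,f)$, intended to collect the contribution to $[A]$ from long monomials whose evaluated product lies in $eRf$. The right-hand sides are defined inductively: for addition $A = B + C$ set $\rhs'(A^{(e,f)}) = B^{(e,f)} + C^{(e,f)}$ (type-preserving); for multiplication $A = B \cdot C$ take $\rhs'(A^{(e,f)})$ to be the ``both-long'' sum $\sum_{g,h \in E} B^{(e,g)} \cdot C^{(h,f)}$ (each such product has type $(e,f)$ by the multiplication rule of Definition~\ref{def-type}) together with auxiliary terms handling the mixed cases where one side is short but the concatenation is long, implemented via ``short-context'' gates that are cast into $eRf$ by multiplying with the idempotents $e$ and $f$. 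The residual short-monomial contribution $s \in R$ to $[\C]$ is an $\aR$-linear combination of the at most $|R|^{N-1}$ distinct short-monomial values occurring in $\C$; I construct an additive circuit over $\aR$ that tallies the path-multiplicities (using the $\NL$ oracle for reachability within bounded multiplicative depth) and invoke the $\CEP(\aR)$ oracle to evaluate it. Finally, $\alpha((x_{e,f})_{(e,f) \in E^2}) = s + \sum_{(e,f) \in E^2} a_{e,f} \cdot x_{e,f} \cdot b_{e,f}$ with $m \le |E|^2 \le |R|^2 \le |R|^4$ arguments; one takes $x_{e,f} := [A_0^{(e,f)}]_{\C'}$ for the output gate $A_0$ and $a_{e,f}, b_{e,f} \in R$ chosen (for instance $a_{e,f}=e$, $b_{e,f}=f$) so that each typed input contributes its value unchanged to the sum.

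The main obstacle is the multiplication case in the construction of $\C'$. For $A = B \cdot C$, a pair $(u,v)$ with $u$ a monomial of $B$ and $v$ a monomial of $C$ produces a long monomial $uv$ of $A$ whenever $|u| + |v| \ge N$, which splits into ``both long'', ``short-long'', ``long-short'', and ``both short but concatenation long'' subcases. The both-long case is captured cleanly by $\sum_{g,h} B^{(e,g)} \cdot C^{(h,f)}$, but the mixed subcases require auxiliary gates tracking short-monomial values of $B$ and $C$, and each such auxiliary gate must be fitted into the type-admitting framework so that its type assignment and the type composition under multiplication come out consistent. Checking that every long monomial of $A$ is tallied exactly once (no double-counting, no omission) while keeping the whole construction, including the short-part additive circuit, inside $\AC^0(\NL, \CEP(\aR))$ is the principal technical challenge.
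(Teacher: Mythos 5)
Your overall skeleton (split each gate's free evaluation into short and long monomials, handle the short part with the $\CEP(\aR)$ oracle, capture the long part in a typed circuit, recombine by a constant-size affine map) matches the paper's strategy, but the step that carries all the difficulty rests on a false algebraic claim. It is \emph{not} true that every sufficiently long product in a finite semigroup lies in $E R E$; the correct fact (and the one the paper uses) is $R^n = R E R$, i.e.\ $s e t$ with the idempotent in the \emph{middle}. For a counterexample take the three-element semigroup $\{a,e,0\}$ with $e^2=e$, $ae=a$, $ea=a^2=0$ and $0$ absorbing (or its power semiring, to have a genuine finite semiring): $a = a e^{N}$ is a product of arbitrary length, yet no idempotent is a left identity for $a$, so $a \notin ERE$. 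Since Definition~\ref{def-type} forces every gate of a type admitting circuit to have its value in $eRf$ for its assigned type, your plan of letting $A^{(e,f)}$ hold the full long-monomial contribution to $[A]$ cannot work: those contributions need not lie in $ERE$ at all. Moreover, even where values do lie in $ERE$, "long monomials whose evaluated product lies in $eRf$" is not a partition (an element can lie in several classes $eRf$), so the gates $A^{(e,f)}$ and the both-long sum $\sum_{g,h} B^{(e,g)}\cdot C^{(h,f)}$ are not well defined and would over- or under-count. Your treatment of the mixed short/long case ("cast into $eRf$ by multiplying with $e$ and $f$") either changes the value ($exf \neq x$ in general) or presupposes exactly the membership in $eRf$ that fails.

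The paper resolves these points with two devices your proposal lacks. First (Lemma~\ref{lemma:monomlength} plus Lemma~\ref{lemma:leaf-removement}), the long part is assembled into a circuit whose only non-ideal constants $h(\fval{X}^\sigma)$ occur guarded inside products with $\lambda$-gates, and these constants are then eliminated by propagating left/right contexts $(\ell,r)$ down to the leaves, yielding a normal-form circuit entirely over the ideal $\langle R^n\rangle = \langle RER\rangle$ \emph{before} any typing is attempted. Second (Lemma~\ref{lemma:FSF}), the idempotents are anchored syntactically, not by values: every input is rewritten as a sum of monomials $s e^3 t$, monomials are classified by their first two and last two letters via the sets $P_A \subseteq R\times E\times E\times R$ (whence the bound $|R|^4$, not $|E|^2$), and the common prefix $se$ and suffix $ft$ are factored out so that the typed gates compute \emph{middle parts} lying in $eRf$, with constant glue elements $f't's'e' \in ERE$ inserted at multiplication gates and the outer $s,t$ absorbed into the affine function $\alpha$. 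Without these two steps — context propagation into an ideal circuit, and the syntactic prefix/suffix factoring — the reduction to a type admitting circuit does not go through, so as written your argument has a genuine gap rather than a repairable imprecision.
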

	
	\begin{proposition}
		\label{thm:step-2}
		If $R$ is $\{0,1\}$-free, then the restriction of $\CEP(R)$ to type admitting circuits is in $\AC^0(\NL, \CEP(\aR), \CEP(\mR))$.
	\end{proposition}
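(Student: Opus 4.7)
The plan is to reduce the evaluation of a type admitting circuit to a bounded combination of oracle calls: $\NL$ for graph-theoretic preprocessing, $\CEP(\aR)$ for purely additive subcircuits, and $\CEP(\mR)$ for purely multiplicative subcircuits. First I would use the $\NL$ oracle to compute reachability and type information, partitioning the gates by their type $(e,f) \in E(R) \times E(R)$ and identifying, within each type class, the maximal \emph{addition blocks} (maximal connected subgraphs of addition gates of a single type) together with the multiplication subcircuits that connect them. Path counting in these subgraphs, also an $\NL$ computation, yields the multiplicities with which each external input contributes to a given addition block.

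The two semigroup oracles then absorb the bulk of the work. Within an addition block of type $(e,f)$, the output is an $\N$-linear combination of the block's external inputs, which are either circuit leaves belonging to $eRf$ or outputs of multiplication gates. Since $(R,+)$ is a finite commutative semigroup, only the multiplicities modulo the additive period matter, and once these are known the entire block reduces to a single $\CEP(\aR)$ instance. Symmetrically, each maximal multiplication subcircuit, once its inputs are available, reduces to a $\CEP(\mR)$ instance, with the type function guaranteeing that the intermediate values stay within a single multiplicative semigroup computation.

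The remaining, and main, difficulty is that addition blocks feed into multiplication subcircuits and vice versa, so the induced dependency order on the blocks could in principle have polynomial depth; this would yield only an $\NC$ upper bound rather than the required $\AC^0$ upper bound. The $\{0,1\}$-freeness hypothesis is exactly what collapses this alternation to a constant. I expect the core of the argument, and the main obstacle, to be a structural lemma showing that in a $\{0,1\}$-free finite semiring the value at every gate of type $(e,f)$ can be expressed by a constant-depth combination of the three oracles, without iterative feedback between addition and multiplication. My approach would be to analyze $eRf$ as a two-sided module over the local monoids $eRe$ and $fRf$, and to use Lemma~\ref{lemma-0-1-free} to rule out the Boolean- and $\Z_d$-like patterns responsible for Ladner-style $\Ptime$-hardness, so that the residual global computation can be bundled into a bounded number of combined oracle calls rather than an unbounded cascade. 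A concrete implementation might express each gate value via an affine function of a constant number of ``canonical'' subgate values, mirroring the affine reduction that already appears in Proposition~\ref{thm:step-1}.
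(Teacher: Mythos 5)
There is a genuine gap. You correctly isolate the main obstacle --- that addition and multiplication layers can alternate polynomially often, so a naive block decomposition gives only an $\NC$-type bound --- but your proposal stops exactly where the actual work begins: the ``structural lemma'' that is supposed to collapse the alternation is only conjectured, and the suggested route (viewing $eRf$ as a two-sided module over $eRe$ and $fRf$ and ``ruling out Boolean- and $\Z_d$-like patterns'') does not by itself yield any bound on the number of rounds. The paper's mechanism is different and quite specific: from the preorder $a \preceq b$ iff $b = \ell a r + c$ one extracts a \emph{rank function} $\rank : R \to \N \setminus \{0\}$ (Definition~\ref{def-rank}), and $\{0,1\}$-freeness enters only through the absorption property that for $a,b \in eRf$ with $\rank(a) = \rank(a+b)$ one has $a + b = a$; this is proved by iterating $a = \ell a r + \ell b r + c$ and applying Lemma~\ref{lem:char=(n,1)} (i.e., $e+f=f$ whenever $ef=fe=f$, $e^2=e$, $f^2=f+f=f$), which is where point~4 of Lemma~\ref{lemma-0-1-free} is used. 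The algorithm of Theorem~\ref{thm:rank} then runs at most $\max\{\rank(a) \mid a \in R\} \le |R|$ phases, each in $\AC^0(\NL, \CEP(\aR), \CEP(\mR))$: evaluate all purely additive subcircuits with the $\CEP(\aR)$ oracle, replace every remaining addition gate by one chosen \emph{inner} argument to obtain a multiplicative circuit evaluated via $\CEP(\mR)$, and use $\NL$ to compute the downward-closed set of locally correct gates, which (by the absorption property and the type function, which guarantees both summands of an addition gate lie in a common $eRf$) contains every gate whose value has the current rank. Note also that the target you set yourself --- expressing every gate value by a constant-depth combination of oracles ``without iterative feedback'' --- is stronger than what is needed or proved; the paper does use feedback, just a constant number of rounds, which suffices because $\AC^0$ with oracles is closed under composition.

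A secondary inaccuracy: computing the multiplicities with which inputs feed an addition block is a counting problem (path counting in a dag), which is not known to be an $\NL$ computation; fortunately this step is unnecessary, since the whole additive block can simply be handed to the $\CEP(\aR)$ oracle, exactly as in Step~1 of the paper's phase.
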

	
	 It is not clear how to test efficiently whether a circuit is type admitting. But this is not a problem for us, 
	since we will apply Proposition~\ref{thm:step-2} only to circuits resulting from Proposition~\ref{thm:step-1},
	which are type admitting by construction.

\subsection{Step 1: Reduction to typing admitting circuits}

In this section, we prove Proposition~\ref{thm:step-1}.
Let $\C$ be a circuit in normal form over our fixed finite semiring $R=(R,+,\cdot)$ of size $n = |R|$.
We assume that $n \geq 2$ (the case $n=1$ is trivial).
Throughout the section we will use $E = E(R)$.
Note that $R^n = RER$ is closed under multiplication with elements from $R$.
Thus, $\langle R^n \rangle$ is an ideal. Every element $a \in \langle R^n \rangle$ 
can be written as a finite sum $a = \sum_{i=1}^k a_i$ with $a_i \in R^n$. Moreover,
since $R$ is a fixed finite semiring, the number $k$ of summands can be bounded 
by a constant that only depends on $R$.

The reduction to type admitting circuits is done in two steps:
$$
\text{circuit over $R$} \ \xrightarrow{\text{Lemma~\ref{lemma:monomlength}}} \ \text{circuit over $\langle R^n \rangle = \langle RER \rangle$} \ \xrightarrow{\text{Lemma~\ref{lemma:FSF}}}   \
\text{type admitting circuit}
$$
Before we prove Lemma~\ref{lemma:monomlength} and \ref{lemma:FSF}, we prove a lemma that allows to eliminate certain input values from a circuit:

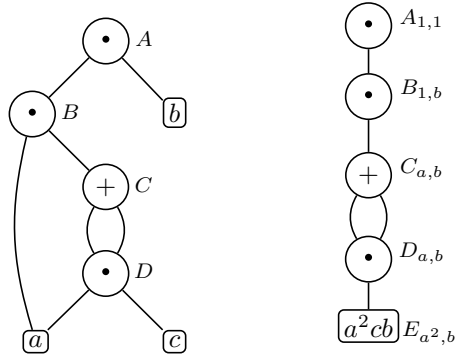
\begin{figure}[t]
	
	\centering
	\ifmac
	\begin{tikzpicture}
	\node[gate, label={[above right = -.5cm and .5cm of A]\footnotesize $A$}] (A) {\Huge $\cdot$};
	\node[gate, label={[above right = -.5cm and .5cm of B]\footnotesize $B$}, below left = 15pt and 15pt of A] (B) {\Huge $\cdot$};
	\node[gate, label={[above right = -.5cm and .5cm of C]\footnotesize $C$}, below right = 15pt and 15pt of B] (C) {$+$};
	\node[gate, label={[above right = -.5cm and .5cm of D]\footnotesize $D$}, below = 15pt of C] (D) {\Huge $\cdot$};
	\node[input, below left = 15pt and 15pt of D] (E) {$a$};
	\node[input, below right = 15pt and 15pt of D] (F) {$c$};
	\node[input, below right = 15pt and 15pt of A] (G) {$b$};
	
	\draw (A) -- (B);
	\draw (A) -- (G);
	\draw (B) -- (C);
	\draw [bend right = 15] (B) to (E);
	\draw [bend left] (C) to (D);
	\draw [bend right] (C) to (D);
	\draw (D) -- (E);
	\draw (D) -- (F);
	
	\end{tikzpicture}
	\hspace{5em}
	\begin{tikzpicture}
	\node[gate, label={[above right = -.5cm and .7cm of A]\footnotesize $A_{1,1}$}] (A) {\Huge $\cdot$};
	\node[gate, label={[above right = -.5cm and .7cm of B]\footnotesize $B_{1,b}$}, below = 9pt of A] (B) {\Huge $\cdot$};
	\node[gate, label={[above right = -.5cm and .7cm of C]\footnotesize $C_{a,b}$}, below = 12pt of B] (C) {$+$};
	\node[gate, label={[above right = -.5cm and .7cm of D]\footnotesize $D_{a,b}$}, below = 14pt of C] (D) {\Huge $\cdot$};
	
	\node[input, label={[above right = -.6cm and .8cm of E]\footnotesize $E_{a^2,b}$}, below = 10pt of D] (E) {$a^2cb$};

	\draw (A) -- (B);
	\draw (B) -- (C);
	\draw [bend left] (C) to (D);
	\draw [bend right] (C) to (D);
	\draw (D) -- (E);
	
	\end{tikzpicture}
	\else
	\begin{tikzpicture}
	\node[gate, label={[above right = -.5cm and .3cm of A]\footnotesize $A$}] (A) {\Huge $\cdot$};
	\node[gate, label={[above right = -.5cm and .3cm of B]\footnotesize $B$}, below left = 15pt and 15pt of A] (B) {\Huge $\cdot$};
	\node[gate, label={[above right = -.5cm and .3cm of C]\footnotesize $C$}, below right = 15pt and 15pt of B] (C) {$+$};
	\node[gate, label={[above right = -.5cm and .3cm of D]\footnotesize $D$}, below = 15pt of C] (D) {\Huge $\cdot$};
	\node[input, below left = 15pt and 15pt of D] (E) {$a$};
	\node[input, below right = 15pt and 15pt of D] (F) {$c$};
	\node[input, below right = 15pt and 15pt of A] (G) {$b$};
	
	\draw (A) -- (B);
	\draw (A) -- (G);
	\draw (B) -- (C);
	\draw [bend right = 15] (B) to (E);
	\draw [bend left] (C) to (D);
	\draw [bend right] (C) to (D);
	\draw (D) -- (E);
	\draw (D) -- (F);
	
	\end{tikzpicture}
	\hspace{5em}
	\begin{tikzpicture}
	\node[gate, label={[above right = -.5cm and .4cm of A]\footnotesize $A_{1,1}$}] (A) {\Huge $\cdot$};
	\node[gate, label={[above right = -.5cm and .4cm of B]\footnotesize $B_{1,b}$}, below = 9pt of A] (B) {\Huge $\cdot$};
	\node[gate, label={[above right = -.5cm and .4cm of C]\footnotesize $C_{a,b}$}, below = 12pt of B] (C) {$+$};
	\node[gate, label={[above right = -.5cm and .4cm of D]\footnotesize $D_{a,b}$}, below = 14pt of C] (D) {\Huge $\cdot$};
	
	\node[input, label={[above right = -.6cm and .5cm of E]\footnotesize $E_{a^2,b}$}, below = 10pt of D] (E) {$a^2cb$};

	\draw (A) -- (B);
	\draw (B) -- (C);
	\draw [bend left] (C) to (D);
	\draw [bend right] (C) to (D);
	\draw (D) -- (E);
	
	\end{tikzpicture}
	\fi
	\caption{Illustration of the proof of Lemma~\ref{lemma:leaf-removement} with the ideal $I = \{c\}$ in $S = \{a,b,c\}$.}
	\label{fig-removal-M}
\end{figure}

\begin{lemma}
	\label{lemma:leaf-removement}
	Assume that $I \subseteq R$ is a non-empty ideal of $R$.
	Let $\C= (V,A_0,\rhs)$ be a circuit in normal form. 
	Consider the set $U = \{ A \in V \mid A \text{ is an inner gate or } \rhs(A) \in I \}$ and 
	assume that $A_0 \in U$ and for all $A,B,C\in V$ the following holds:
	\begin{itemize}
		\item If $\rhs(A)=B \cdot C$ then $B \in U$ or $C \in U$.
		\item If $\rhs(A)=B + C$ then $B,C \in U$.
	\end{itemize}
	There is a logspace-computable function that returns for 
	a given circuit $\C$ with the above properties an equivalent circuit $\D$ in normal form over the ideal (and hence subsemiring) $I$.
\end{lemma}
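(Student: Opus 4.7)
The plan is to replace each gate $A$ of $\C$ by at most $|R^1|^2$ copies $A_{s,t}$ indexed by pairs $(s,t) \in R^1 \times R^1$, where $R^1 = R \cup \{1\}$ and $1$ is a formal identity symbol. The intended value of $A_{s,t}$ in $\D$ is $s \cdot [A]_\C \cdot t$, and the output gate of $\D$ will be $(A_0)_{1,1}$, which evaluates to $[A_0]_\C = [\C]$. Since $R$ is fixed and finite, this blows up the gate count only by a constant factor, so the required logspace bound is not endangered.

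First I would establish the preliminary claim that $[A]_\C \in I$ for every $A \in U$, by induction along $\le_\C$. For an input gate in $U$ one has $\rhs(A) \in I$ by definition of $U$; for an addition gate $\rhs(A) = B + C$ both $B, C \in U$ by hypothesis and $I$ is closed under $+$; for a multiplication gate $\rhs(A) = B \cdot C$ at least one of $B, C$ is in $U$ by hypothesis and the ideal property of $I$ absorbs the other factor. In particular $[A_0]_\C \in I$, so the resulting circuit $\D$ really is a circuit over $I$.

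Next I would define $\rhs_\D(A_{s,t})$ by case analysis on $\rhs_\C(A)$. If $A$ is an input gate with $\rhs_\C(A) = a \in I$, then $A_{s,t}$ is an input gate with constant value $sat \in I$ (interpreting $1$ symbolically on either side). If $\rhs_\C(A) = B + C$, then $B, C \in U$ and I set $A_{s,t} = B_{s,t} + C_{s,t}$. If $\rhs_\C(A) = B \cdot C$ with both $B, C \in U$, I put $A_{s,t} = B_{s,1} \cdot C_{1,t}$. In the remaining mixed case (say $B \in U$ but $C \notin U$, so that $C$ is an input with $\rhs_\C(C) = c \notin I$), the identity $s[A]t = s[B] \cdot ct$ lets me absorb $c$ into the right index by declaring $A_{s,t}$ to be a copy gate of $B_{s, ct}$; the symmetric situation is handled analogously. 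A straightforward induction on $\le_\C$ then shows $[A_{s,t}]_\D = s \cdot [A]_\C \cdot t$, and because we only ever create copies of gates in $U$, the preliminary claim together with the ideal property of $I$ guarantees that every such value lies in $I$. The few copy gates introduced in the mixed multiplication case are cleared away at the end via Lemma~\ref{lemma-normal-form} to put $\D$ in normal form.

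The construction is manifestly computable in logspace: for every triple $(A,s,t)$, determining $\rhs_\D(A_{s,t})$ requires only local lookup of $\rhs_\C(A)$ in the input plus one fixed arithmetic operation such as $ct$ or $sb$ in the fixed semiring $R$. I do not anticipate a real obstacle here; the only delicate points are the bookkeeping for the symbolic identity $1 \in R^1$ (which need not be an element of $R$) and ensuring that the recursion visits only indices that admit a meaningful interpretation, both of which are taken care of by the case distinction above.
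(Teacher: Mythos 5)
Your proposal is correct and follows essentially the same route as the paper's proof: gates $A_{\ell,r}$ indexed by $R^1\times R^1$ computing $\ell\,[A]_\C\,r$, the same case distinction (with the non-$U$ factor of a product absorbed into the index, creating copy gates), output gate $(A_0)_{1,1}$, and a final appeal to Lemma~\ref{lemma-normal-form}. The only cosmetic difference is that you state the auxiliary claim $[A]_\C\in I$ for $A\in U$ explicitly, which the paper leaves implicit in its induction.
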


\begin{proof}
	Let $U \subseteq V$ be defined as in the lemma.
	We first compute in logspace a circuit $\C' = (V',(A_0)_{1,1},\rhs')$ which contains a gate $A_{\ell,r}$
	for all gates $A \in U$ and $\ell,r \in R^1$ (recall that $R^1$ is $R$ together with a fresh multiplicative identity $1$)
	such that $[A_{\ell,r}]_{\C'} = \ell \cdot [A]_\C \cdot r$.
	Note that $V'$ is non-empty since $A_0 \in U$.
	The gates of $\C'$ are indexed by elements of $R^1$ instead of $R$ to simplify the notation in the following.
	To define the right-hand sides let us take a gate $A \in U$ and $\ell,r \in R^1$.
	
	\medskip
	
	\noindent
	{\em Case 1.} $\rhs(A) = a \in I$. We set $\rhs'(A_{\ell,r}) = \ell a r$. Note that
	$\ell a r \in I$, since $I$ is an ideal in $R$.
	
	\medskip
	
	\noindent
	{\em Case 2.} $\rhs(A) = B + C$. Then we must have $B,C \in U$ and we set $\rhs'(A_{\ell,r}) = B_{\ell,r} + C_{\ell,r}$.
	
	\medskip
	
	\noindent
	{\em Case 3.}  $\rhs(A) = B \cdot C$. Then $B \in U$ or $C \in U$.
	If both $B$ and $C$ belong to $U$, then we set $\rhs'(A_{\ell,r}) = B_{\ell,1} \cdot C_{1,r}$.
	If $C \not\in U$ then $B \in U$ and $\rhs(C) = c \in R \setminus I$. We set $\rhs'(A_{\ell,r}) = B_{\ell,cr}$. 
	The case that $B\not\in U$ is symmetric.
	
	\medskip
	
	\noindent
	By Lemma~\ref{lemma-normal-form} we can transform 
	the circuit $\C'$ in logspace into a normal form circuit $\D$.
	The correctness of the above construction can be easily shown by induction along the partial order $\leq_{\C}$.
\end{proof}
Figure~\ref{fig-removal-M} illustrates the construction for $R = \{a,b,c\}$ and the ideal $I = \{c\}$ (the concrete semiring structure
is not important). Note that $a^2 c b \in I$. The circuit on the right-hand side is only the part of the constructed circuit that is rooted in the 
gate $A_{1,1}$. Copy gates are not eliminated.

A circuit $\mathcal{A} = (V,A_0,\rhs)$ over the semiring $(\mathbb{N},+,\cdot)$ is also called an {\em arithmetic circuit}. 
We further assume that all input gates of an arithmetic circuit have a right-hand side from $\{0,1\}$. W.l.o.g.~we can also
assume that there are unique input gates with right-hand sides $0$ and $1$, respectively.
The {\em multiplication depth} of a
gate $A \in V$ is the maximal number of multiplication gates along every path from an input gate to $A$ (where $\mathcal{A}$
is viewed as a directed acyclic graph). The multiplication depth of $\mathcal{A}$ is the maximal multiplication depth among all gates in $\mathcal{A}$.   
An {\em addition circuit} is an arithmetic circuit without multiplication gates.
The following lemma is shown in \cite[Lemma~6]{KonigL15}:

\begin{lemma} \label{lemma-cocoon}
	Let $c$ be a fixed constant.
	A given arithmetic circuit $\mathcal{A}$ of multiplication depth at most $c$, where in addition every multiplication gate is labelled with its multiplication depth,
	can be transformed in logspace into an addition circuit $\mathcal{A}'$ that contains every gate $A$ of $\mathcal{A}$ and moreover satisfies $[A]_{\mathcal{A}} = [A]_{\mathcal{A}'}$.
\end{lemma}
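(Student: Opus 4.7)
The plan is to eliminate multiplication gates layer by layer, from the bottom up, exploiting the distributive law together with the fact that all input values of the arithmetic circuit lie in $\{0,1\}$. The key observation is the following substitution trick: if $\mathcal{B}$ is an addition subcircuit with root $B$ and leaves labelled by $0$ or $1$, and $C$ is any gate, then forming a fresh copy $\mathcal{B}'$ of $\mathcal{B}$ in which every leaf originally labelled $1$ is identified with $C$ (and every leaf originally labelled $0$ is replaced by the constant $0$) gives a root that evaluates to $[B]\cdot[C]$. The reason is linearity: the value of an addition-only circuit equals the number of leaf-to-root paths in the DAG whose leaf is labelled $1$, and replacing the value $1$ at each such leaf by $[C]$ simply multiplies the total by $[C]$. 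Thus a multiplication gate $A = B \cdot C$ whose left input $B$ already has an addition subcircuit can be rewritten entirely with addition gates by making $A$ the new root of such a fresh copy.

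Given this trick, I would construct $\mathcal{A}'$ by induction on the multiplication depth $d = 0, 1, \ldots, c$, maintaining the invariant that every gate $A$ of multiplication depth at most $d$ is the root of a fresh addition subcircuit with $[A]_{\mathcal{A}'} = [A]_{\mathcal{A}}$. Constants and addition gates retain their original right-hand sides. For each multiplication gate $A = B \cdot C$ at depth exactly $d$, the inductive hypothesis provides addition subcircuits rooted at $B$ and $C$; apply the substitution trick to the addition subcircuit of $B$, producing a fresh copy whose root is $A$, with $1$-leaves pointing to $C$ and $0$-leaves replaced by $0$. Correctness $[A]_{\mathcal{A}'} = [B]_{\mathcal{A}} \cdot [C]_{\mathcal{A}} = [A]_{\mathcal{A}}$ follows by an induction on $d$.

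To make the construction logspace-uniform, I would name each fresh gate by a tuple $(g_0, m_1, \ldots, m_k)$, where $g_0$ is a gate of $\mathcal{A}$ and $m_1, \ldots, m_k$ is the list of multiplication gates whose successive substitutions produced this copy. The multiplication depths of $m_1, \ldots, m_k$ are strictly decreasing, so $k \leq c$, and each name fits in $O(c \log n) = O(\log n)$ bits. The labelling of multiplication gates with their depths (which the lemma explicitly provides) is what allows a logspace machine to parse such a name and decide locally whether the corresponding right-hand side is an inherited addition of analogously named children, a pointer to the neighbouring child $C$ of $m_k$, or the constant $0$. Each layer multiplies the number of gates by at most a factor of $n$, so the total size is $O(n^{c+1})$, which is polynomial since $c$ is constant.

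The main obstacle is the bookkeeping. One must ensure that the substitution shares $C$'s subcircuit through a single copy (so that sizes grow by only a linear factor per level) and that the gate-naming scheme encodes all relevant dependencies, so that every right-hand side can be emitted from purely local information. The first point is handled by copying only the addition subcircuit of $B$ and keeping the already-built subcircuit for $C$ shared; the second is handled by the depth labels on multiplication gates, which let the logspace machine locate the triggering gate at each level of a name and read off the correct structural rule for that gate.
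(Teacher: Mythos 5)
The paper does not prove this lemma itself but cites it as \cite[Lemma~6]{KonigL15}, and your argument is a correct reconstruction of essentially that standard construction: eliminate multiplication gates in order of (labelled) multiplication depth by substituting the shared right operand for the $1$-leaves of a fresh copy of the left operand's addition subcircuit, with copies named by chains of at most $c$ triggering multiplication gates of strictly monotone depth so that names stay $O(\log n)$ bits and the whole transformation is logspace-uniform. The only loose end is in your taxonomy of right-hand sides: a copied gate whose underlying gate of $\mathcal{A}$ is an already-converted multiplication gate $m$ is neither an ``inherited addition in the same context'' nor a redirected $1$-leaf --- its right-hand side must point to names whose context is extended by $m$ itself --- but since the appended gate again has strictly smaller multiplication depth, your naming scheme and length bound accommodate this case without change.
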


For the following proofs, it is sometimes more convenient to evaluate a circuit $\C=(V,A_0,\rhs)$ over the free semiring $\N[R]$
generated by the set $R$.\footnote{Of course, there is no chance of efficiently evaluating a circuit over the free semiring $\N[R]$, since this might produce
	a doubly exponential number of monomials of exponential length. Circuit evaluation over $\N[R]$ is only used as a tool in our proofs.}
Recall that this semiring consists of all mappings $f : R^+ \to \N$ with finite and non-empty support, where $R^+$ consists of all finite 
non-empty words over the alphabet $R$. 
So, there are two ways to evaluate $\C$: We can evaluate $\C$ over $R$ (and this is our main interest) and
we can evaluate $\C$ over $\N[R]$. In order to distinguish these two ways of evaluation, we write 
$\fval{A}_\C \in \N[R]$ for the value of gate $A\in V$ in $\C$, when $\C$ is evaluated in $\N[R]$.
Again we omit the index $\C$ if it is clear from the context. Moreover, $\fval{\C} = \fval{A_0}_{\C}$.
Note that $\fval{A}_\C$ is a mapping from $R^+$ to $\N$; hence for a word $w \in R^+$, $\fval{A}_\C(w)$ is a natural number.
Let $h$ be the canonical semiring homomorphism from $\N[R]$ to $R$ that evaluates a non-commutative polynomial
in the semiring $R$ (the range of this mapping is the subsemiring
generated by $R$). Thus, we have $[A]_\C = h(\fval{A}_\C)$ for every gate $A$ and $[\C] = h(\fval{\C})$.
An example of a free evaluation of a circuit is shown in Figure~\ref{fig-short-long} on the left.

Recall that $|R| = n \geq 2$.
We define 
$$R^{<n} = \{ w \in R^+ \mid |w| < n\} \ \text{ and } \ R^{\geq n} = \{ w \in R^+ \mid |w| \geq n\}.$$
Note that
these are sets of finite words over the alphabet $R$.  So, these notations should not be confused with the notation
$R^n$, which is a subset of $R$ (the set of all $n$-fold products). In fact, we have $h(R^{\ge n}) = R^n = RER$.
For every non-commutative polynomial $f \in \N[R]$ we define $f^\sigma, f^\lambda \in \N[R] \cup \{\bot\}$
(the short part and the long part of $f$) as follows ($\bot$ is a new symbol that stands for ``undefined''):
\begin{enumerate}
	\item If $\supp(f) \subseteq R^{< n}$, then $f^\sigma = f$ and $f^\lambda = \bot$.
	\item If $\supp(f) \subseteq R^{\geq n}$, then $f^\sigma = \bot$ and $f^\lambda = f$.
	\item \label{decomp}
	Otherwise let $f^\sigma,f^\lambda \in \N[R]$ such that $f = f^\sigma + f^\lambda$,
	$\supp(f^\sigma) \subseteq R^{< n}$ and $\supp(f^\lambda) \subseteq R^{\ge n}$.
\end{enumerate}
Note that either $f^\sigma \neq \bot$ or $f^\lambda \neq \bot$, and that the decomposition in \ref{decomp} is unique.
Moreover, if $f^\sigma \neq\bot\neq f^\lambda$, then $f = f^\sigma + f^\lambda$.

\begin{example}
	Let $R = \{a,b,c\}$ and thus $n = 3$. Let $f =  2 abbca + 3 caab + bab + 4 ac + 7b \in \N[\{a,b,c\}]$.
	We have $f^\sigma = 4 ac + 7b$
	and $f^\lambda = 2 abbca + 3 caab + bab$.
\end{example}

\begin{lemma}
	\label{lemma:monomlength}
	There is a function in $\AC^0(\NL,\CEP(\aR))$ that returns for a given  circuit
	$\C= (V,A_0,\rhs)$ either
	\begin{itemize}
		\item the semiring element $[\C] \in R$ (namely if $\fval{\C}^\lambda = \bot)$, or
		\item a circuit $\D$ over the subsemiring $\langle R^n \rangle = \langle RER \rangle$ such that $[\C] = [\D]$
		(namely if $\fval{\C}^\sigma = \bot)$, or
		\item a circuit $\D$ over the subsemiring $\langle R^n \rangle = \langle RER \rangle$ and a semiring element $\sigma \in R$ such that $[\C] = [\D] + \sigma$
		(namely if $\fval{\C}^\sigma \neq \bot \neq \fval{\C}^\lambda)$.
	\end{itemize}
\end{lemma}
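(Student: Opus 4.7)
The plan is to work with the free-semiring interpretation $\fval{A}_\C \in \N[R]$ (never actually materialized) and to separate, for every gate $A$, its short part $\fval{A}^\sigma$ (supported on words of length $<n$) from its long part $\fval{A}^\lambda$ (supported on words of length $\geq n$). The short part is carried by constantly many integer coefficients $\fval{A}(w)$, $w \in R^{<n}$, and will be evaluated to an element of $R$ via $\CEP(\aR)$; the long part maps under $h$ into the ideal $\langle R^n\rangle = \langle RER\rangle$ and will be represented symbolically as a circuit $\D$ over that subsemiring. The two key tools are Lemma~\ref{lemma-cocoon} (bounded-multiplication-depth arithmetic circuits become addition circuits in logspace) and Lemma~\ref{lemma:leaf-removement} (which replaces a circuit satisfying a mild structural condition by one over a prescribed ideal).

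First I would build one arithmetic circuit $\hat{\mathcal{A}}$ over $\N$ with a gate $(A,w)$ for each $A\in V$ and each word $w\in R^{\leq 2(n-1)}$, using the recursion $\fval{A}(w)=[\rhs(A)=w]$ at leaves, $\fval{A}(w)=\fval{B}(w)+\fval{C}(w)$ at $+$-gates, and $\fval{A}(w)=\sum_{w=uv,\,u,v\in R^+}\fval{B}(u)\cdot\fval{C}(v)$ at $\cdot$-gates. A straightforward induction shows the multiplication depth of $(A,w)$ is at most $|w|-1$, so the multiplication depth of $\hat{\mathcal{A}}$ is bounded by the constant $2n-3$. Lemma~\ref{lemma-cocoon} turns $\hat{\mathcal{A}}$ in logspace into an addition circuit $\hat{\mathcal{A}}'$ preserving the gates $(A,w)$. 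The predicate $P_w(A) := [\fval{A}(w)>0]$ is then decided by $\NL$-reachability in $\hat{\mathcal{A}}'$: $(A,w)$ is positive iff some $1$-input has a directed path to it. From these predicates I read off the status $\fval{A}^\sigma\stackrel{?}{=}\bot$ directly, and the status $\fval{A}^\lambda\stackrel{?}{=}\bot$ via the structural characterization: $\fval{A}^\lambda\neq\bot$ iff there is a descendant multiplication gate $A'=B'\cdot C'$ of $A$ (reachable in the DAG of $\C$) such that $P_u(B')\land P_v(C')$ holds for some $u,v\in R^{<n}$ with $|u|+|v|\geq n$. This is another $\NL$-reachability, once the marking of such "producing" multiplication gates is computed in $\AC^0$ from the $P_w$'s.

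To obtain $\sigma_B := h(\fval{B}^\sigma)\in R$ for each $B$ with $\fval{B}^\sigma\neq\bot$, I take, for each $w\in R^{<n}$ with $P_w(B)$, the sub-addition-circuit of $\hat{\mathcal{A}}'$ rooted at $(B,w)$, restrict it to its live gates, and replace every $1$-input by the constant $h(w)\in R$; the resulting addition circuit over $\aR$ evaluates to $\fval{B}(w)\cdot h(w)$ via one $\CEP(\aR)$ call, and summing over $w$ in $\aR$ yields $\sigma_B$. For the long part I construct a circuit $\D_0$ with a gate $A'$ per $A$ with $\fval{A}^\lambda\neq\bot$, recursively defined by $A'=B'+C'$ (dropping any $\bot$-summand) for $\rhs(A)=B+C$, and by
\[
A' \;=\; \tau_A \;+\; \sigma_B\cdot C' \;+\; B'\cdot \sigma_C \;+\; B'\cdot C'
\]
(with $\bot$-summands dropped) for $\rhs(A)=B\cdot C$. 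Here $\tau_A = h((\fval{B}^\sigma\fval{C}^\sigma)^\lambda) = \sum \fval{B}(u)\fval{C}(v)\cdot h(uv)$, summed over $u,v\in R^{<n}$ with $|uv|\geq n$; each summand is realized by a bounded-depth arithmetic circuit for $\fval{B}(u)\cdot\fval{C}(v)$, converted by Lemma~\ref{lemma-cocoon} and with $1$-inputs replaced by $h(uv)\in R^n\subseteq\langle R^n\rangle$. The only leaves of $\D_0$ that may fail to lie in $\langle R^n\rangle$ are the scalars $\sigma_B,\sigma_C\in R$, but each appears only as one factor of a multiplication whose other factor is an inner gate, so Lemma~\ref{lemma:leaf-removement} with $I=\langle R^n\rangle$ applies and yields the desired circuit $\D$ over $\langle R^n\rangle$ with $[\D]=h(\fval{\C}^\lambda)$.

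The main obstacle I anticipate is the correction term $\tau_A$: the coefficients $\fval{B}(u)\fval{C}(v)$ may be doubly exponential, so they cannot be materialized as integers; the rescue is that $|u|+|v|<2n$ is constant, so these coefficients are values of arithmetic circuits of constant multiplication depth, and Lemma~\ref{lemma-cocoon} lets us reinterpret each "$1$"-input as a copy of $h(uv)\in R^n$, turning $\tau_A$ into an addition circuit over $\langle R^n\rangle$. Every remaining step—arithmetic-to-addition conversion, liveness and producing-gate reachability, recursive construction of $\D_0$, and the application of Lemma~\ref{lemma:leaf-removement}—is either logspace or a single oracle call to $\NL$ or $\CEP(\aR)$, which places the whole procedure in $\AC^0(\NL,\CEP(\aR))$.
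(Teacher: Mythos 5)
Your proposal is correct and follows essentially the same route as the paper's proof: coefficients of short monomials via a constant-multiplication-depth arithmetic circuit converted by Lemma~\ref{lemma-cocoon}, positivity and existence of long monomials via $\NL$-reachability, evaluation of $h(\fval{A}^\sigma)$ and of the correction terms $\tau_A$ through $\CEP(\aR)$ by substituting semiring constants for the $1$-inputs, the same four-term decomposition of $\fval{A}^\lambda$ at multiplication gates, and finally Lemma~\ref{lemma:leaf-removement} with $I=\langle R^n\rangle$. The only cosmetic deviations are your extra gates for words up to length $2(n-1)$ and a slight sloppiness in not explicitly restricting the sums defining $\sigma_B$ and $\tau_A$ to positive-coefficient terms (gates of value $0$ have no counterpart in $\aR$), which is already implicit in your liveness restriction.
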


\begin{proof}
   By Lemma~\ref{lemma-normal-form} we can assume that $\C$ is in normal form.
	In the following, we omit the index $\C$ in $[A]_\C$ and $\fval{A}_\C$, where $A \in V$ is a gate of the circuit $\C$.
	
	\medskip
	\noindent
	{\em Step 1.}
	We first compute in $\AC^0(\NL)$ the set of all gates $A \in V$ such that $\fval{A}^\sigma \neq \bot$.
	For this, we construct in logspace an arithmetic circuit $\mathcal{A}$ 
	over the semiring $(\N,+,\cdot)$ with gates $A_w$ where $A \in V$ and $w \in R^{< n}$
	such that $[A_w]_{\mathcal{A}} = \fval{A}(w)$ as follows:
	\begin{itemize}
		\item If $\rhs(A) = a \in R$ then
		$\rhs(A_w) = \begin{cases} 1 & \text{if }w=a \\ 0 & \text{otherwise.}\end{cases}$
		\item If $\rhs(A) = B+C$ then $\rhs(A_w) = B_w + C_w$.
		\item If $\rhs(A) = B \cdot C$ then $\rhs(A_w) = \sum_{w=uv} B_u \cdot C_v$, where the sum goes over all $u,v \in R^{< n}$ with $w = uv$.
	\end{itemize}
	Note that the empty sum is interpreted as $0$ and that $\mathcal{A}$ has constant multiplication depth.
	Moreover, the multiplication depth of a gate $A_w$ is $|w|-1$.
	By Lemma~\ref{lemma-cocoon}  we can transform in logspace $\mathcal{A}$ into an equivalent addition circuit, which 
	we still denote with $\mathcal{A}$.
	The circuit $\mathcal{A}$ contains all gates $A_w$ ($A \in V$, $w \in R^{< n}$) and possibly some additional gates.
	
	We can assume that $\mathcal{A}$ has a unique input gate $Z$ with right-hand side $1$.
	Let $U_\sigma$ be the set of all gates $X$ of $\mathcal{A}$ such that
	$Z \le_{\mathcal{A}} X$. These are exactly those gates of $\mathcal{A}$
	that evaluate to a number larger than zero. Hence, for all $A \in V$ and $w \in R^{< n}$, we have
	$A_w \in U_\sigma$ if and only if $\fval{A}(w) > 0$.
	Moreover, $\fval{A}^\sigma \neq \bot$ if and only if $A_w \in U_\sigma$ for some $w \in R^{< n}$.
	The set $U_\sigma$ can be computed in $\AC^0(\NL)$. Hence, we can also compute for every $A \in V$ the information
	whether  $\fval{A}^\sigma \neq \bot$ and, in case $\fval{A}^\sigma \neq \bot$, the set $\supp(\fval{A}^\sigma) =  \supp(\fval{A}) \cap R^{<n}$.
	
	\medskip
	\noindent
	{\em Step 2.}
	For each gate $A \in V$ with $\fval{A}^\sigma \neq \bot$ we now compute 
	the semiring element $h(\fval{A}^\sigma) \in R$. For this we construct in logspace a circuit over $\aR$
	that evaluates to $h(\fval{A}^\sigma)$.  Hence,  $h(\fval{A}^\sigma)$ can be computed using oracle access to $\CEP(\aR)$.
	
	We first remove from the arithmetic addition circuit $\mathcal{A}$ all gates that are not in $U_\sigma$. This may
	produce copy gates (which is not a problem). Moreover, gate $Z$ is now the only input gate of $\mathcal{A}$.
	For a semiring element $a \in R$ we define the circuit $\mathcal{C}_a$ (over $\aR$) by taking the addition circuit  $\mathcal{A}$
	and redefining $\rhs_\sigma(Z) = a$. Then, for every gate $A_w \in U_\sigma$ ($A \in V$, $w \in R^{< n}$) we have
	$[A_w]_{\mathcal{C}_a} =  \fval{A}(w) \cdot a$.
	In particular, if $\fval{A}^\sigma \neq \bot$, then
	$$
	h(\fval{A}^\sigma) = \sum_{w \in \supp(\fval{A}) \cap R^{<n}} \!\!\!\!\!\!\!\! \fval{A}(w) \cdot h(w) \ = 
	\sum_{w \in \supp(\fval{A}) \cap R^{<n}} \!\!\!\!\!\!\!\! [A_w]_{\mathcal{C}_{h(w)}} .
	$$
	From the circuits $\mathcal{C}_{h(w)}$ we can construct in logspace a circuit over $\aR$ for this semiring element. 	
	Evaluating this circuit using oracle access to $\CEP(\aR)$ yields the element $h(\fval{A}^\sigma)$.
	
	\medskip
	\noindent
	{\em Step 3.}
	Next, we compute in $\AC^0(\NL)$ the set of all gates $A \in V$ such that $\fval{A}^\lambda \neq \bot$.
	Since $n \geq 2$ (our initial assumption on the semiring $R$), we have $\fval{A}^\lambda \neq \bot$
	if and only if there exist a gate $A' \leq_{\C} A$ with
	$\rhs(A') = B \cdot C$ and words $w_1, w_2 \in R^{< n}$ such that $|w_1 w_2| \geq n$,
	$\fval{B}(w_1) > 0$ (i.e., $B_{w_1} \in U_\sigma$) and $\fval{C}(w_2) > 0$ (i.e., $C_{w_2} \in U_\sigma$).
	This condition can be tested in $\NL$. 
	Hence, we can assume that the set of all $A \in V$ with
	$\fval{A}^\lambda \neq \bot$ is computed. If $\fval{A_0}^\lambda = \bot$, then we must have $\fval{A_0}^\sigma \neq \bot$
	and we return the previously computed semiring element 	$h(\fval{A_0}^\sigma)$, which is $[\C]$ in this case.
	Let us now assume that $\fval{A_0}^\lambda \neq \bot$.

	\medskip
	\noindent
	{\em Step 4.}
	We then construct a circuit $\C_\lambda = (V_\lambda, (A_0)_{\lambda}, \rhs_\lambda)$, which contains 
	for every gate $A \in V$ with $\fval{A}^\lambda \neq \bot$ a gate $A_\lambda$ such that 
	$[A_\lambda]_{\C_\lambda} = h(\fval{A}^\lambda)$.  In particular, $[\C_\lambda] = 
	h(\fval{\C}^\lambda)$.
	
	In a first step, we compute in $\AC_0$ the set $M^\lambda$ of all multiplication gates $A \in V$ such that
	the following conditions hold:  $\fval{A}^\lambda \neq \bot$,
	$\rhs(A) = B \cdot C$ for $B,C \in V$, $\fval{B}^\sigma \neq \bot \neq \fval{C}^\sigma$, and 
	there exist $u  \in \supp(\fval{B}^\sigma)$, $v \in \supp(\fval{C}^\sigma)$ with $|uv| \ge n$.
	This means that in the product $\fval{B} \cdot \fval{C}$ a monomial of length at least $n$ 
	arises from monomials $u \in \supp(\fval{B})$, $v \in \supp(\fval{C})$, both of which have length smaller than $n$.
	
	Next, for every multiplication gate $A \in M^\lambda$, where $\rhs(A) = B \cdot C$ we compute in $\AC_0(\CEP(\aR))$ the 
	semiring element
	$$
	m_A := \sum_{u,v} (\fval{B}(u) \fval{C}(v)) \cdot h(uv) \in \langle R^n \rangle,
	$$
	where the sum is taken over all words $u \in \supp(\fval{B}^\sigma)$, $v \in \supp(\fval{C}^\sigma)$ with $|uv| \ge n$.
	Let us take the addition circuit $\mathcal{A}$ constructed in Step 1 and 2 above. 
	We add to $\mathcal{A}$  a single layer of multiplication gates $A_{u,v}$, where
	$B_u, C_v \in U_\sigma$. The right-hand side of $A_{u,v}$ is $B_u \cdot C_v$. 
	Using Lemma~\ref{lemma-cocoon}, we can transform this circuit in logspace into an equivalent
	addition circuit; let us denote this circuit with $\mathcal{A}^2$. Clearly, gate $A_{u,v}$ evaluates
	to the number $\fval{B}(u)\fval{C}(v) \in \N$. This number is larger than zero, since $B_u, C_v \in U_\sigma$.
	Hence, by replacing  in the addition circuit $\mathcal{A}^2$ the input value $1$ by the semigroup element $h(uv)$,
	we obtain  a circuit over the semigroup $\aR$, which we can evaluate using oracle access to $\CEP(\aR)$.
	The value of gate $A_{u,v}$ yields the semiring element $(\fval{B}(u) \fval{C}(v)) \cdot h(uv)$.
	Finally, the sum of all these values (for all $u \in \supp(\fval{B}^\sigma)$, $v \in \supp(\fval{C}^\sigma)$ with $|uv| \ge n$)
	can be computed by another $\AC^0$-computation (it is a sum of a constant number of semiring elements).
	
	It remains to define the  right-hand sides of the gates $A_\lambda$ in $\C_\lambda$.
	We distinguish the following cases (note that $A$ must be an inner gate of $\C$ since $n \geq 2$ and $\C$ is in normal form).
	
	\medskip
	\noindent
	{\em Case 1.} $\rhs(A) = B+C$. Then we must have $\fval{B}^\lambda \neq \bot$ or 
	$\fval{C}^\lambda \neq \bot$ (otherwise $\fval{A}^\lambda = \bot$)
	and we set
	\[
	\rhs_\lambda(A_\lambda) =
	\begin{cases}
	B_\lambda, & \text{if } \fval{C}^\lambda = \bot, \\
	C_\lambda, & \text{if } \fval{B}^\lambda = \bot, \\
	B_\lambda + C_\lambda, & \text{otherwise}.
	\end{cases}
	\]
	{\em Case 2.} $\rhs(A) = B \cdot C$, $A \in M^\lambda$, and $\bot \notin \{\fval{B}^\sigma, \fval{B}^\lambda,\fval{C}^\sigma, \fval{C}^\lambda\}$. Then we set
	\begin{equation}
	\label{eq:rhs-mult}
	\rhs_\lambda(A_\lambda) = B_\lambda \cdot C_\lambda + h(\fval{B}^\sigma) \cdot C_\lambda + B_\lambda \cdot h(\fval{C}^\sigma) + m_A .
	\end{equation}
	If $A \not\in M^\lambda$ but $\bot \notin \{\fval{B}^\sigma, \fval{B}^\lambda,\fval{C}^\sigma, \fval{C}^\lambda\}$ then we take the same definition
	but omit the summand $m_A$.
		
	Let us explain the definition \eqref{eq:rhs-mult}. We have 
	\begin{eqnarray*}
		\fval{A}^\sigma + \fval{A}^\lambda &= &\fval{A} \\
		& = & \fval{B} \cdot \fval{C} = (\fval{B}^\sigma + \fval{B}^\lambda) \cdot (\fval{C}^\sigma + \fval{C}^\lambda) \\
		& = & \fval{B}^\lambda \cdot \fval{C}^\lambda +  \fval{B}^\sigma \cdot \fval{C}^\lambda + \fval{B}^\lambda \cdot \fval{C}^\sigma +
		\fval{B}^\sigma \cdot \fval{C}^\sigma.
	\end{eqnarray*}
	By selecting from the last line all monomials of  length at least $n$, we get
	$$
	\fval{A}^\lambda = \fval{B}^\lambda \cdot \fval{C}^\lambda +  \fval{B}^\sigma \cdot \fval{C}^\lambda + \fval{B}^\lambda \cdot \fval{C}^\sigma + m_A .
	$$		 
	Applying to this equality the morphism $h$ and noting that $BC_{u,v}^{h(uv)}$ evaluates to $(\fval{B}(u)\fval{C}(v)) \cdot h(uv)$  
	shows that \eqref{eq:rhs-mult} is indeed the right definition for $\rhs_\lambda(A_\lambda)$.
	
	\medskip
	\noindent
	{\em Case 3.} $\rhs(A) = B \cdot C$ and $\bot\in\{\fval{B}^\sigma, \fval{B}^\lambda,\fval{C}^\sigma, \fval{C}^\lambda\}$.
	Then the corresponding terms on the right-hand side of \eqref{eq:rhs-mult} are omitted. More precisely 
	if $\fval{X}^\sigma = \bot$ ($X \in \{B,C\}$) then we omit in \eqref{eq:rhs-mult} the product involving $h(\fval{X}^\sigma)$ 
	as well as the semiring element $m_A$, and if $\fval{X}^\lambda = \bot$ ($X \in \{B,C\}$) then we omit in \eqref{eq:rhs-mult} the two products involving $X_\lambda$.
	The reader may also interpret $\bot$ as zero and then do the obvious simplifications in \eqref{eq:rhs-mult}  (but note that the semiring 
	$\N[R]$ has no additive zero element).
	For example, if $\fval{B}^\lambda = \fval{C}^\sigma = \bot$ and $\fval{B}^\sigma \neq \bot \neq \fval{C}^\lambda$
	then $\rhs_\lambda(A_\lambda) = h(\fval{B}^\sigma) C_\lambda$. Since $\fval{A}^\lambda \neq \bot$, at least one of the summands in 
	\eqref{eq:rhs-mult} remains.
	
	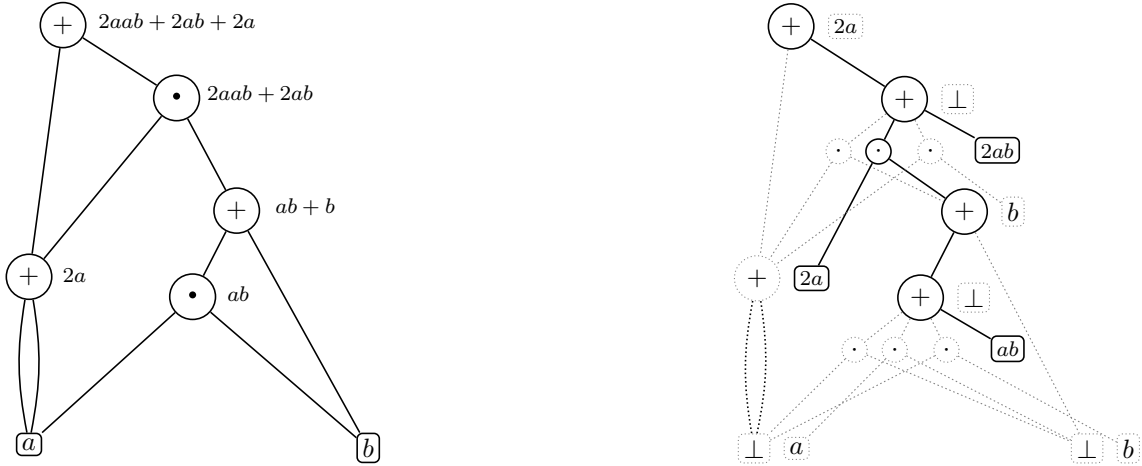
\begin{figure}[t]
		
		\centering
		\ifmac
		\begin{tikzpicture}
		\node[gate, label={[above right = -.5cm and 1.5cm of A]\footnotesize $2aab+2ab+2a$}] (A) {$+$};
		\node[gate, label={[above right = -.5cm and 1.1cm of A]\footnotesize $2aab+2ab$}, below right = 15pt and 30pt of A] (B) {\Huge $\cdot$};
		\node[gate, label={[above right = -.5cm and .9cm of A]\footnotesize $ab+b$}, below right = 30pt and 10pt of B] (C) {$+$};
		\node[gate, label={[above right = -.5cm and .6cm of A]\footnotesize $ab$}, below left = 20pt and 4pt of C] (D) {\Huge $\cdot$};
		\node[input, below left = 45pt and 50pt of D] (E) {$a$};
		\node[gate, label={[above right = -.5cm and 0.6cm of E2]\footnotesize $2a$}, above = 50pt of E]  (E2) {$+$};
		\node[input, below right = 45pt and 55pt of D] (F) {$b$};
		
		\draw[] (A) -- (E2);
		\draw[bend left = 10] (E2) edge (E);
		\draw[bend right = 10] (E2) edge (E);
		\draw[] (A) -- (B);
		\draw[] (B) -- (E2);
		\draw[] (B) -- (C);
		\draw[] (C) -- (D);
		\draw[] (C) -- (F);
		\draw[] (D) -- (E);
		\draw[] (D) -- (F);
		
		\end{tikzpicture}
		\else
		\begin{tikzpicture}
		\node[gate, label={[above right = -.5cm and .4cm of A]\footnotesize $2aab+2ab+2a$}] (A) {$+$};
		\node[gate, label={[above right = -.5cm and .4cm of A]\footnotesize $2aab+2ab$}, below right = 15pt and 30pt of A] (B) {\Huge $\cdot$};
		\node[gate, label={[above right = -.5cm and .4cm of A]\footnotesize $ab+b$}, below right = 30pt and 10pt of B] (C) {$+$};
		\node[gate, label={[above right = -.5cm and .4cm of A]\footnotesize $ab$}, below left = 20pt and 4pt of C] (D) {\Huge $\cdot$};
		\node[input, below left = 45pt and 50pt of D] (E) {$a$};
		\node[gate, label={[above right = -.5cm and .4cm of E2]\footnotesize $2a$}, above = 50pt of E]  (E2) {$+$};
		\node[input, below right = 45pt and 55pt of D] (F) {$b$};
		
		\draw[] (A) -- (E2);
		\draw[bend left = 10] (E2) edge (E);
		\draw[bend right = 10] (E2) edge (E);
		\draw[] (A) -- (B);
		\draw[] (B) -- (E2);
		\draw[] (B) -- (C);
		\draw[] (C) -- (D);
		\draw[] (C) -- (F);
		\draw[] (D) -- (E);
		\draw[] (D) -- (F);
		
		\end{tikzpicture}
		\fi
		\hfill
		\begin{tikzpicture}
		
		\node[gate] (A) {$+$};
		\node[gate, below right = 15pt and 30pt of A] (B) {$+$};
		\node[gate, below right = 30pt and 10pt of B] (C) {$+$};
		\node[gate, below left = 20pt and 4pt of C] (D) {$+$};
		\node[gate, bot, above = 50pt of E] (E2) {$+$};
		\node[input, right = 5pt of E2] (E2-s) {\footnotesize $2a$};
		\node[input, bot, below left = 45pt and 50pt of D] (E) {$\bot$};
		\node[input, bot, below right = 45pt and 50pt of D] (F) {$\bot$};
		
		\node[aux-gate, bot, below left = 10pt and 15pt of D] (D1) {$\cdot$};
		\node[aux-gate, bot, below left = 10pt and 0pt of D] (D2) {$\cdot$};
		\node[aux-gate, bot, below right = 10pt and 0pt of D] (D3) {$\cdot$};
		\node[input, below right = 8pt and 20pt of D] (D4) {\footnotesize $ab$};
		
		\node[aux-gate, bot, below left = 10pt and 15pt of B] (B1) {$\cdot$};
		\node[aux-gate, below left = 10pt and 0pt of B] (B2) {$\cdot$};
		\node[aux-gate, bot, below right = 10pt and 0pt of B] (B3) {$\cdot$};
		\node[input, below right = 8pt and 20pt of B] (B4) {\footnotesize $2ab$};
		
		\node[input, bot, right = 5pt of A] (A-s) {\footnotesize $2a$};
		\node[input, bot, right = 5pt of B] (B-s) {$\bot$};
		\node[input, bot, right = 5pt of C] (C-s) {$b$};
		\node[input,bot, right = 5pt of D] (D-s) {$\bot$};
		\node[input, bot, right = 5pt of E] (E-s) {$a$};
		\node[input, bot, right = 5pt of F] (F-s) {$b$};
		
		\draw[,bot] (A) -- (E2);
		\draw[] (A) -- (B);
		
		\draw[] (C) -- (D);
		\draw[,bot] (C) -- (F);
		
		\draw[,bot] (B) -- (B1);
		\draw[] (B) -- (B2);
		\draw[,bot] (B) -- (B3);
		\draw[] (B) -- (B4);
		
		\draw[,bot] (D) -- (D1);
		\draw[,bot] (D) -- (D2);
		\draw[,bot] (D) -- (D3);
		\draw[] (D) -- (D4);
		
		\draw[,bot] (D1) -- (E);
		\draw[,bot] (D1) -- (F);
		\draw[,bot] (D2) -- (E-s);
		\draw[,bot] (D2) -- (F);
		\draw[,bot] (D3) -- (E);
		\draw[,bot] (D3) -- (F-s);
		
		\draw[,bot] (B1) -- (E2);
		\draw[,bot] (B1) -- (C);
		\draw[] (B2) -- (E2-s);
		\draw[] (B2) -- (C);
		\draw[,bot] (B3) -- (E2);
		\draw[,bot] (B3) -- (C-s);
		
		\draw[bend left=10,bot] (E2) edge (E);
		\draw[bend right=10,bot] (E2) edge (E);
		
		\end{tikzpicture}
		\caption{The construction of the circuit $\C_\lambda$ in the proof of Lemma~\ref{lemma:monomlength}.}
		\label{fig-short-long}
	\end{figure}

	\medskip
	\noindent
	Figure~\ref{fig-short-long} shows an example of the above construction of the circuit $\C^\lambda$, where $n = 2$. The shaded parts are those parts that 
	are removed because they would yield zero terms. These are exactly those parts of the right-hand sides that are removed in the above 
	Case~3. To the right of each gate $X$, the value $\fval{X}^\sigma$ is written. Note that the output gate evaluates to $2aab + 2ab$
	which is indeed $(2aab + 2ab+2a)^\lambda$.

	\medskip
	\noindent
	{\em Step 5.}
	We now apply Lemma~\ref{lemma-normal-form} and transform in logspace  $\C_\lambda$ into a circuit $\C'$ in normal form.
	We have to argue that the circuit $\C'$ satisfies the conditions from Lemma~\ref{lemma:leaf-removement}
	for the ideal $I = \langle R^n \rangle$. 
	The input values of the circuit $\C_\lambda$ are elements from $\langle R^n \rangle$  
	(they occur as the $m_A$ in \eqref{eq:rhs-mult})
	and  the $h(\fval{X}^\sigma)$ for $X \in V$. Only the input values $h(\fval{X}^\sigma)$ can belong to $R \setminus \langle R^n\rangle$
	(they can also belong to $\langle R^n\rangle$). 
	The circuit $\C'$ is obtained from $\C_\lambda$ by (i) eliminating copy gates
	(that may arise from the above Case~1) and (ii) splitting up right-hand sides of the form \eqref{eq:rhs-mult} (or a simpler form, see Case~3).
	Note that in the circuit $\C'$ an input gate $Z$ with $\rhs_{\C'}(Z) \in R \setminus \langle R^n\rangle$ can only occur in right-hand sides of multiplication gates.
	Such a right-hand side must be of the form $B_\lambda \cdot Z$ or $Z \cdot C_\lambda$ (which is obtained from splitting up the expression in  \eqref{eq:rhs-mult}).
	Here, $B_\lambda$ and $C_\lambda$ are gates from the circuit $\C_\lambda$. But a
	gate $A_\lambda$ of the circuit $\C_\lambda$ cannot be transformed into an input gate of $\C'$ with 
	a right-hand side from $R \setminus \langle R^n\rangle$ (note that the values  $h(\fval{X}^\sigma)$  are ``guarded'' in \eqref{eq:rhs-mult} by
	multiplications with gates $Y_\lambda$).
	This shows that the conditions for Lemma~\ref{lemma:leaf-removement}
	are satisfied.
	
	Finally, Lemma~\ref{lemma:leaf-removement} allows us to transform in logspace the circuit $\C'$ into an equivalent normal form circuit $\D$ over the subsemiring 
	$\langle R^n \rangle = \langle RER \rangle$.
	This circuit $\D$ satisfies $[\D] = h(\fval{\C}^\lambda)$. We output this circuit together with the previously computed value $h(\fval{\C}^\sigma)$ (if this 
	value is not $\bot$). Then the output specification from the lemma is satisfied.
\end{proof}
The next lemma transforms a normal form circuit over $\langle RER \rangle$ into a type admitting circuit.

\begin{lemma}
	\label{lemma:FSF}
	Given a normal form circuit
	$\C=(V,A_0,\rhs)$ over $\langle RER \rangle$, one can compute in $\AC^0(\NL)$:
	\begin{itemize}
		\item  a type admitting circuit $\C' = (V',\rhs')$ (without output gate),
		\item a non-empty list of distinguished gates $A_1, \ldots, A_m \in V'$, where $m \leq |R|^4$, and
		\item elements $\ell_1, r_1, \ldots, \ell_m,r_m \in R$ such that $[\C] = \sum_{i=1}^m \ell_i [A_i]_{\C'} r_i$.
	\end{itemize}
\end{lemma}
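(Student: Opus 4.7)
My plan is to construct $\C'$ by attaching to each gate $A$ of $\C$ and each pair $(e,f) \in E \times E$ a typed gate $A^{e,f}$ of type $(e,f)$, mirroring the structure of $\C$. The starting observation is that any $a \in \langle RER \rangle$ admits a bounded representation $a = \sum_k \ell_k g_k r_k$ with $g_k \in E$ and $\ell_k, r_k \in R$, since $\langle RER \rangle$ is the additive closure of $RER$ and every element of $RER$ is already of the form $r g s$. For each input gate of $\C$ I thus place into $\C'$ a constant number of typed input gates of value $g_k$ and type $(g_k, g_k)$ (using $g_k = g_k \cdot g_k \cdot g_k \in g_k R g_k$), one per summand of the decomposition.

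For addition gates the rule is the typed $A^{e,f} = B^{e,f} + C^{e,f}$. For a multiplication gate $A = B \cdot C$ the typed product rule of Definition~\ref{def-type} allows middle-type mismatch: types $(e, e')$ times $(f', f)$ yield $(e, f)$ even when $e' \neq f'$. I exploit this by inserting a ``bridge'' typed input gate $P_{e', f'}$ of type $(e', f')$ and value $e' x f'$ for a suitable $x \in R$ coming from the distributive expansion, so that $B^{e, e'} \cdot P_{e', f'} \cdot C^{f', f}$ forms a typed chain of type $(e, f)$ realized by two typed multiplications. Across the whole construction I maintain the invariant that $[A]_\C$ admits a bounded representation $\sum_k \ell_{A,k} v_{A,k} r_{A,k}$, where each $v_{A,k}$ is the value of a typed gate of $\C'$; this invariant propagates through $+$ by concatenating the representations and through $\cdot$ by the three-fold typed product just described, taking $v_j^B \cdot v_k^C$ with bridge $f_j^B (r_j^B \ell_k^C) e_k^C$ for the cross term. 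Reading off this representation at $A_0$ yields the distinguished gates $A_i$ and scalars $\ell_i, r_i \in R$ with $[\C] = \sum_{i=1}^m \ell_i [A_i]_{\C'} r_i$, and aggressive normalization (collapsing summands that share the same typed value and the same right scalar by adding left scalars in $R$) bounds $m$ by at most $|R|^3 \leq |R|^4$.

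The main obstacle is two-fold. Semantically, the naive ``matrix-multiplication'' rule $A^{e,f} = \sum_g B^{e,g} \cdot C^{g,f}$ does \emph{not} compute $e [A]_\C f$, because $\sum_{g \in E} g$ is generally not a multiplicative identity in $R$; correctness relies on the bridge inputs and on the $RER$-decomposition of the factors propagated through the circuit. Algorithmically, the whole transformation must sit in $\AC^0(\NL)$, which precludes any recursion on the depth of $\C$: the typed gates, bridges, and normalized distinguished tuples must all be produced by a fixed-depth syntactic transformation of $\C$, with the $\NL$-oracle used only for bookkeeping such as reachability between gates and identification of duplicate triples for normalization.
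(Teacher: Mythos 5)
Your two core ideas (represent every gate value as a bounded sum of terms $\ell\cdot v\cdot r$ with $v$ flanked by multiplicative idempotents, and glue typed products with ``bridge'' constants of the form $f_j^B(r_j^B\ell_k^C)e_k^C$) are exactly the paper's, but there is a genuine gap in the bookkeeping that is supposed to keep your representation bounded. At a multiplication gate you create one bridge and one cross term per \emph{pair} of summands of the two children, and at $A_0$ you need $m\le |R|^4$, so the number of summands per gate must stay constant; propagation through an addition gate by concatenation doubles it, so everything hinges on your normalization step, and that step is not available in $\AC^0(\NL)$. Collapsing summands ``with the same typed value'' requires deciding equality of values of gates of $\C'$, i.e.\ evaluating the circuit you are constructing; collapsing instead by ``same typed gate and same right scalar'' makes the collapsed left scalars sums of unboundedly many elements of $R$ accumulated along the circuit, so computing them is an instance of $\CEP(\aR)$ rather than $\AC^0(\NL)$ --- and these scalars are not mere output bookkeeping, since they enter the bridge constants $r_j^B\ell_k^C$ at multiplication gates higher up, hence the construction of $\C'$ itself needs their values. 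Finally, indexing typed gates only by the type $(e,f)$ makes the merge at an addition gate ill-defined: $\ell_1[B^{e,f}]r_1+\ell_2[C^{e,f}]r_2$ cannot in general be rewritten as $\ell[A^{e,f}]r$ with $[A^{e,f}]=[B^{e,f}]+[C^{e,f}]$ unless the outer scalars already coincide.

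The paper escapes this by making the decomposition canonical instead of propagated. Input gates are rewritten as constant sums $\sum_i s_ie_i^3t_i$, so every monomial of the free-semiring value $\fval{A}_\C\in\N[R]$ begins with some $se$ and ends with some $ft$ where $e,f\in E(R)$ and $s,t\in R$. The typed gates are indexed by the full tuples $(s,e,f,t)$, and $A'_{s,e,f,t}$ is \emph{defined} to compute the middle part of the sum of all monomials of $\fval{A}_\C$ beginning with $se$ and ending with $ft$. This yields at most $|R|^4$ summands per gate with fixed outer scalars $\ell=se$ and $r=ft$ (constant products, no evaluation needed); addition gates merge summands with equal index inside $\C'$ itself; the multiplication rule uses bridges $f't's'e'$ exactly as you envisage; and the only global information needed is which tuples actually occur at a gate (the sets $P_A$, necessary because empty sums are illegal in a semiring without zero), a reachability condition answered by the $\NL$ oracle. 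Re-indexing your typed gates by $(\ell,e,f,r)$ with this canonical semantics would turn your sketch into the paper's proof; as written, the normalization either blows up exponentially or needs semiring evaluation that the lemma does not grant.
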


\begin{proof}
	
	Let us interpret the circuit $\C = (V,A_0,\rhs)$ over the free semiring $\N[R]$.
	For each input gate $A$ we can write $[A]$ as $\sum_{i=1}^k s_i e_i^3 t_i$ for 
	a constant $k$ (that only depends on $R$), $s_i,t_i \in R$, $e_i \in E$ and
	redefine $\rhs(A) = \sum_{i=1}^k s_i e_i^3 t_i$ (a sum of $k$ monomials of length $5$).
	Thus for all $A \in V$ we have $\supp(\fval{A}_\C) \subseteq (RE^3R)^+ \subseteq RER^*ER$.
	
	Let us define for every inner gate $A$ of $\C$ the set
	$$
	P_A = \{ (s,e,f,t) \in R \times E \times E \times R \mid \supp(\fval{A}_\C) \cap s e R^* f t \neq \emptyset \}.
	$$
	Hence, $|P_A| \leq |R|^4$.
	We claim that the sets $P_A$ can be computed in $\AC^0(\NL)$. For this, note that 
	$(s,e,f,t) \in P_A$ if and only if (i) $e=f$ and there exists an input gate $C$ of $\C$ 
	such that the following conditions hold:
	\begin{itemize}
		\item $\rhs(C)$ contains the monomial $se^3t$.
		\item There is a path from $C$ to $A$ (possible the empty path) where all gates are addition gates.
	\end{itemize}
	or (ii) there exist input gates $C_1$, $C_2$ of $\C$, and a multiplication gate $B$ such that the following conditions hold:
	\begin{itemize}
		\item $\rhs(C_1)$ contains the monomial $se^3t'$ for some $t' \in R$.
		\item $\rhs(C_2)$ contains the monomial $s'f^3 t$ for some $s' \in R$.
		\item There is a path from $C_1$ to $B$ such that for every edge $(X,Y)$ along this path, where $Y$ is a multiplication gate,
		$\rhs(Y) = X \cdot Z$ for some gate $Z$.
		\item There is a path from $C_2$ to $B$ such that for every edge $(X,Y)$ along this path, where $Y$ is a multiplication gate,
		$\rhs(Y) = Z \cdot X$ for some gate $Z$.
		\item There is a path from $B$ to $A$ (possible the empty path) where except for $B$ all gates are addition gates. 
	\end{itemize}
	These conditions 
	can be checked in $\NL$.
	
	We next compute in logspace a new circuit $\mathcal{D}$ that contains for every  gate $A$ of $\C$ 
	and every tuple $(s,e,f,t) \in P_A$ a gate $A_{s,e,f,t}$ such that the following holds,
	where as usual $\fval{A_{s,e,f,t}}_\D$ denotes the evaluation of the gate $A_{s,e,f,t}$ 
	in the free semiring $\N[R]$ and $L(A,s,e,f,t) := \supp(\fval{A}_\C) \cap se R^* ft$ (which is non-empty since $(s,e,f,t) \in P_A$):
	\begin{equation} \label{eq-f_A,l,r}
	\fval{A_{s,e,f,t}}_\D = \sum_{w \in L(A,s,e,f,t)}  \fval{A}_\C(w) \cdot w
	\end{equation}
	Intuitively, we decompose the polynomial $\fval{A}_\C$ into several summands according to the first two and last two symbols in every monomial. 
	We define the rules of $\mathcal{D}$ as follows, where $A$ is a gate of $\C$: 
	
	\medskip
	\noindent
	{\em Case 1.} $\rhs(A) = \sum_{i=1}^k s_i e_i^3 t_i$.
	Then, we have $P_A = \{ (s_i,e_i,e_i,t_i) \mid 1 \le i \le k \}$ and we set 
	$$\rhs(A_{s_i,e_i,e_i,t_i}) = s_ie_i^3t_i.
	$$
	{\em Case 2.} $\rhs(A) = B \cdot C$ and $(s,e,f,t) \in P_A$.
	We set
	$$
	\rhs(A_{s,e,f,t}) = \sum_{(s,e,f',t') \in P_B} \sum_{(s',e',f,t) \in P_C} B_{s,e,f',t'} \cdot C_{s',e',f,t} .
	$$
	{\em Case 3.} $\rhs(A) = B + C$ and $(s,e,f,t) \in P_A$.
	We set
	$$\rhs(A_{s,e,f,t}) = B_{s,e,f,t} + C_{s,e,f,t}.
	$$
	With these right-hand sides, property \eqref{eq-f_A,l,r} is easy to verify.
	
	The idea of the last step is the following: Let $\overline{u} = (s,e,f,t) \in P_A$.
	Every non-commutative polynomial $\fval{A_{\overline{u}}}_\D$ has the property that each of its monomials
	starts with $see$ and ends with $fft$. By factoring out the common prefix $se$ and suffix $ft$, respectively,
	we can write $\fval{A_{\overline{u}}}_\D = s e g f t $, where $g \in e\,\N[R] f$ or $g=e$
	(the latter case occurs if $A$ is an input gate
	with right-hand side $se^3t$, in which case we have $e=f$). 
	We now construct in logspace a circuit $\C'$, which contains 
	gates $A'_{s,e,f,t}$ (where $A_{s,e,f,t}$ is a gate of $\D$ as above) such that in the free semiring $\N[R]$,
	$A'_{s,e,f,t}$ evaluates to the above polynomial $g$.
	We define the right-hand side of $A'_{s,e,f,t}$ again by a case distinction, where we use the right-hand sides
	for $\D$ that we defined in the Cases~1-3 above.
	
	\medskip
	\noindent
	{\em Case 1.} $e=f$ and  $\rhs(A_{s,e,e,t}) = se^3t$.
	Then, we set $\rhs(A'_{s,e,e,t}) = e$.
	
	\medskip
	\noindent
	{\em Case 2.} $\rhs(A_{s,e,f,t}) = \sum_{(s,e,f',t') \in P_B} \sum_{(s',e',f,t) \in P_C} B_{s,e,f',t'} \cdot C_{s',e',f,t}$.
	We set 
	\begin{equation} \label{eq-complex-rhs}
	\rhs(A'_{s,e,f,t}) = \sum_{(s,e,f',t') \in P_B} \sum_{(s',e',f,t) \in P_C} B'_{s,e,f',t'} (f' t' s' e') C_{s',e',f,t} .
	\end{equation}
	{\em Case 3.} $\rhs(A_{s,e,f,t}) = B_{s,e,f,t} + C_{s,e,f,t}$.
	We set 
	$$
	\rhs(A'_{s,e,f,t}) = B'_{s,e,f,t} + C'_{s,e,f,t}.
	$$
	It is now straightforward to verify that for every gate $A$ of $\C$ we have:
	$$
	\fval{A}_{\C} = \sum_{(s,e,f,t) \in P_A} \fval{A_{s,e,f,t}}_{\D} = \sum_{(s,e,f,t) \in P_A} s e \fval{A'_{s,e,f,t}}_{\C'} f t 
	$$
	Hence, if we evaluate the circuits $\C$, $\D$, and $\C'$ in the semiring $R$ we get
	\begin{eqnarray*}
		[A]_{\C} =  \sum_{(s,e,f,t) \in P_A} [A_{s,e,f,t}]_{\D} & = & \sum_{(s,e,f,t) \in P_A} s e [A'_{s,e,f,t}]_{\C'} f t  .
	\end{eqnarray*}
	Note that $[A'_{s,e,f,t}]_{\C'} \in eRf$, which holds, since $[A'_{s,e,f,t}]_{\C'} = h(  \fval{A'_{s,e,f,t}}_{\C'} )$ and every
	monomial of $\fval{A'_{s,e,f,t}}_{\C'}$ starts with $e$ and ends with $f$.
	Moreover, every input value of the circuit $\C'$ is from $ERE$: These are the elements $e$ (in Case 1) and $f' t' s' e'$ (in Case 2). 
	
	Note that $\C'$ admits a type function; We set $\mathsf{type}(A'_{s,e,f,t}) = (e,f)$. Moreover, using Lemma~\ref{lemma-normal-form} we 
	transform  $\C'$ in logspace into normal form by splitting up right-hand sides
	of the form \eqref{eq-complex-rhs}. Thereby we extend the type-mapping to the new gates that are
	introduced. For instance, if we introduce a gate with right-hand side $B'_{s,e,f',t'} (f' t' s' e')$ (which occurs
	in \eqref{eq-complex-rhs}), then this gate gets the type $(e,e')$, and the gate that computes (in two steps)
	$B'_{s,e,f',t'} (f' t' s' e') C_{s',e',f,t}$ gets the type $(e,f)$. This ensures that the three conditions from Definition~\ref{def-type} 
	are satisfied.
\end{proof}
Combining Lemma~\ref{lemma:monomlength} and \ref{lemma:FSF} immediately yields Proposition~\ref{thm:step-1}.

\subsection{Step 2: A parallel evaluation algorithm for type admitting circuits} \label{sec-rank}

In this section we prove Proposition~\ref{thm:step-2}.
We present a parallel evaluation algorithm for type admitting circuits.
This algorithm terminates after at most $|R|$ rounds, if $R$ has a so called rank-function,
which we define first. As before, let $E = E(R)$. 
 
\begin{definition} \label{def-rank}
	We call a function $\rank: R \to \N \setminus \{0\}$ a {\em rank-function} for $R$
	if it satisfies the following conditions for all $a,b \in R$:
	\begin{enumerate}
		\item $\rank(a) \le \rank(a+b)$
		\item $\rank(a), \rank(b) \le \rank(a \cdot b)$
		\item If $a,b \in eRf$ for some $e,f \in E$ and $\rank(a) = \rank(a+b)$, then $a = a+b$.
	\end{enumerate}
\end{definition}
Note that if $\mR$ is a monoid, then one can choose $e=1=f$ in the third condition in Definition~\ref{def-rank},
which is therefore equivalent to:
If $\rank(a) = \rank(a+b)$ for $a,b \in R$, then $a = a+b$.

\begin{example}[Example~\ref{ex-power-semigroup} continued] \label{ex-rank}
Let $G$ be a finite group and consider the semiring $\mathcal{P}(G)$.
One can verify that the function $A \mapsto |A|$, where $\emptyset \neq A \subseteq G$, is a rank-function for $\mathcal{P}(G)$.
On the other hand, if $S$ is a finite semigroup, which is not a group, then $S$ cannot be cancellative.
Assume that $ab = ac$ for $a,b,c \in S$ with $b \neq c$. Then $\{a\} \cdot \{ b,c\} = \{ab\}$. This shows that
the function $A \mapsto |A|$ is not a rank-function for $\mathcal{P}(S)$.
\end{example}

\begin{theorem} 
	\label{thm:rank}
	If the finite semiring $R$ has a rank-function $\rank$, 
	then the restriction of $\CEP(R)$ to type admitting circuits belongs to $\AC^0(\NL, \CEP(\aR), \CEP(\mR))$.
\end{theorem}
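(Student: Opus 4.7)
The plan is to run an iterative algorithm with $\rho := \max_{a \in R} \rank(a)$ many rounds, where in each round we perform an $\AC^0$ computation with oracles for $\NL$, $\CEP(\aR)$, and $\CEP(\mR)$. Because $R$ is fixed and finite, $\rho \le |R|$ is a constant, so composing $\rho$ such rounds still yields a procedure in $\AC^0(\NL, \CEP(\aR), \CEP(\mR))$. The invariant to maintain after round $k$ is: for every gate $A$ with $\rank([A]_\C) \le k$, the value $[A]_\C$ has already been computed, and $A$ may be treated as an input gate labelled with $[A]_\C$ in the current reduced circuit. After $\rho$ rounds every gate is resolved and one reads off $[\C]$.

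The driving algebraic observation is property (3) of the rank function combined with type admissibility: every gate $A$ of type $(e,f)$ has $[A]_\C \in eRf$, and for any $a,b \in eRf$ with $\rank(a+b) = \rank(a)$ one has $a+b = a$. Hence within a purely additive subcircuit whose gates all share the common type $(e,f)$, any summand of rank strictly less than the root's rank is absorbed by any summand attaining the root's rank. Dually, by property (2), multiplication only increases rank, so for a multiplication gate $A = B \cdot C$ we have $\rank([B]_\C), \rank([C]_\C) \leq \rank([A]_\C)$, which is what makes the bottom-up-by-rank processing well-founded.

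A single round then consists of three substeps: (i) use the $\NL$ oracle to locate, in the current reduced circuit, the maximal monochromatic additive and multiplicative subcircuits whose leaves are already resolved; (ii) call $\CEP(\aR)$ on the additive ones and $\CEP(\mR)$ on the multiplicative ones to produce tentative values at their roots; (iii) using an $\AC^0$ lookup into the fixed finite table of $\rank$, test the rank of each tentative value and confirm those whose rank is exactly $k+1$, updating the reduced circuit by replacing the corresponding gates with leaves carrying the just-computed semiring element.

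The main obstacle is the correctness argument for each round: we must show that the tentative value produced for a gate $A$ with $\rank([A]_\C) = k+1$ at round $k+1$ really equals $[A]_\C$, in spite of the fact that the subcircuit feeding $A$ may still contain gates of even higher rank whose true values are unknown at this stage. The proof should proceed by induction on $k$, arguing via property (3) that any yet-unresolved higher-rank contribution, when eventually substituted in, would only add summands that get absorbed by the rank-$(k+1)$ value already computed; this argument relies crucially on type admissibility, which confines every additive combination to a single $eRf$-block where property (3) is applicable. Handling the interaction between addition and multiplication across type changes, and justifying that ``monochromatic'' carving of subcircuits loses no information, is the technically delicate piece I expect to require the most care.
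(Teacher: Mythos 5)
There is a genuine gap, and it sits exactly at the point you flag as the ``technically delicate piece.'' Your framework (a constant number of rounds bounded by $\max_{a\in R}\rank(a)$, the invariant that after round $k$ all gates of rank at most $k$ are evaluated, and the use of rank property~(3) together with type admissibility to absorb summands inside a fixed block $eRf$) coincides with the paper's. But your description of a single round --- carve out maximal \emph{monochromatic} additive and multiplicative subcircuits whose leaves are already resolved, evaluate them with $\CEP(\aR)$ resp.\ $\CEP(\mR)$, and keep those tentative values whose rank is $k+1$ --- does not suffice. After round $k$ the unresolved gates below a gate $A$ with $\rank([A]_\C)=k+1$ all have rank exactly $k+1$ (by properties (1) and (2) no gate below $A$ can have \emph{higher} rank, so your worry about ``higher-rank contributions'' below $A$ is misplaced), and these unresolved gates may form arbitrarily deep regions that alternate between addition and multiplication. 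No monochromatic subcircuit with resolved leaves reaches the roots of such regions, so one of your rounds resolves only a bounded ``layer'' of them; iterating would need depth-many rounds, destroying the $\AC^0$ bound. Moreover, the rank test in your step (iii) is not a soundness certificate: a value computed from a carved piece can have rank $k+1$ without being the true gate value, and the algorithm cannot even compute $\rank([A]_\C)$ before it knows $[A]_\C$.

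The paper's proof supplies precisely the missing mechanism. In phase $k$ it first evaluates all subcircuits consisting solely of addition and input gates (one $\CEP(\aR)$ call plus $\NL$), after which every addition gate has an inner child $\mathsf{inner}(A)$. It then builds an auxiliary \emph{multiplicative} circuit $\C'$ by replacing the right-hand side of every addition gate $A$ with the single gate $\mathsf{inner}(A)$, and evaluates $\C'$ with one $\CEP(\mR)$ call. The algebraic point is that for an addition gate $A$ with $\rhs(A)=B+C$, $B=\mathsf{inner}(A)$ unresolved, one gets $\rank([B]_\C)\ge k$ and $\rank([A]_\C)=\rank([B]_\C)=k$, so typing plus property~(3) give $[A]_\C=[B]_\C+[C]_\C=[B]_\C$; hence pretending that $A$ merely copies $B$ is correct on the gates that matter. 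Correctness of individual gates is then certified not by a rank test but by the \emph{local correctness} check $[A]_{\C'}=[B]_{\C'}+[C]_{\C'}$ on the surrogate values, and the algorithm substitutes values only on the downward-closed set $W$ of gates all of whose predecessors are locally correct; an induction on depth shows every gate of rank at most $k$ lies in $W$. Without this conversion of the mixed unresolved region into a single multiplicative circuit and the $W$-certification, your round neither evaluates all rank-$(k+1)$ gates nor guarantees that what it does evaluate is correct, so the proposal as it stands does not establish the theorem.
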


\begin{proof}
	Let $\C = (V,A_0,\rhs)$ be a circuit with the type function $\mathsf{type}$.
	We present an algorithm which partially evaluates the circuit in a constant number of phases, where each phase can be carried
	out in $\AC^0(\NL, \CEP(\aR), \CEP(\mR))$ and
	the following invariant is preserved:
	\begin{inv}
		After phase $k$ all gates $A$ with $\rank([A]_\C) \le k$ are evaluated, i.e., are input gates.
	\end{inv}
	\noindent
	In the beginning, i.e., for $k=0$, the invariant clearly holds (since $0$ is not in the range of the rank-function).
	After $\mathsf{max} \{ \rank(a) \mid a \in R \}$ (which is a constant) many phases the output gate $A_0$ is evaluated.
	We present phase $k$ of the algorithm, assuming that the invariant holds after phase $k-1$.
	Thus, all gates $A$ with $\rank([A]_\C) < k$ of the current circuit $\C$ are input gates.
	The goal of phase $k$ is to evaluate all gates $A$ with $\rank([A]_\C) = k$.
	For this, we proceed in two steps:
	
	\medskip
	\noindent
	{\em Step 1.}
	 As a first step the algorithm evaluates all subcircuits that only contain addition and input gates.
	This maintains the invariant and is possible in $\AC^0(\NL, \CEP(\aR))$.
	After this step, every addition-gate $A$ has at least one inner input gate,
	which we denote by $\mathsf{inner}(A)$ (if both input gates are inner gates, then choose one arbitrarily).
	The $\NL$-oracle access is needed to compute 
	the set of all gates $A$ for which no multiplication gate $B \leq_{\C} A$ exists.
	
	\medskip
	\noindent
	{\em Step 2.}
	Define the multiplicative circuit $\C' = (V,A_0,\rhs')$ by
	 \begin{equation} \label{def-C'-rhs}
           \rhs'(A) = \begin{cases}
           \mathsf{inner}(A) & \text{ if $A$  is an addition-gate,} \\
           \rhs(A)                 & \text{ if  $A$ is a multiplication gate or input gate.}
           \end{cases}
          \end{equation}
       	The circuit $\C'$ can be brought into normal form by Lemma~\ref{lemma-normal-form} and then
	evaluated using the oracle for $\CEP(\mR)$.
	A gate $A \in V$ is called {\em locally correct} if (i) $A$ is an input gate or multiplication gate of $\C$, or (ii)
	$A$ is an addition gate of $\C$ with $\rhs(A) = B+C$ and $[A]_{\C'} = [B]_{\C'} + [C]_{\C'}$.
	We compute the set
	\[
		W = \{ A \in V \mid B \text{ is locally correct for all gates } B \text{ with } B \le_\C A \}
	\]
	in $\AC^0(\NL)$. A simple induction shows that for all $A \in W$ we have $[A]_\C = [A]_{\C'}$. Hence
	we can set $\rhs(A) = [A]_{\C'}$ for all $A \in W$. This concludes phase $k$ of the algorithm.
	
	\medskip
	\noindent
	To prove that the invariant still holds after phase $k$,
	we show that for each gate $A \in V$ with $\rank([A]_{\C}) \le k$ we have $A \in W$.
	This is shown by induction over the depth of $A$ in $\C$. Assume that $\rank([A]_{\C}) \le k$.
	By the first two conditions from Definition~\ref{def-rank}, all gates $B <_\C A$ satisfy $\rank([B]_\C) \le k$.
	Thus, the induction hypothesis yields $B \in W$ and hence $[B]_\C = [B]_{\C'}$ for all gates $B <_\C A$.
	
         It remains to show that $A$ is locally correct, which is clear if $A$ is an input gate or a multiplication gate.
	So assume that $\rhs(A) = B+C$ where $B = \mathsf{inner}(A)$, which implies $[A]_{\C'} = [B]_{\C'}$ by 
	\eqref{def-C'-rhs}. 
	Since $B$ is an inner gate, which is not evaluated after phase $k-1$,
	it holds that $\rank([B]_\C) \ge k$ and therefore $\rank([A]_\C) = \rank([B]_\C) = k$.
	By Definition~\ref{def-type} there exist idempotents $e,f \in E$  with 
	$\mathsf{type}(B) = \mathsf{type}(C) = (e,f)$ and thus $[B]_\C, [C]_\C \in eRf$.
	The third condition from Definition~\ref{def-rank} implies that $[A]_\C = [B]_\C + [C]_\C = [B]_\C$.
	We get $$[A]_{\C'} = [B]_{\C'} = [B]_\C = [A]_\C = [B]_\C + [C]_\C = [B]_{\C'} + [C]_{\C'}.$$
	Therefore $A$ is locally correct.
\end{proof}

\begin{example}[Example~\ref{ex-power-semigroup} continued]
        Figure~\ref{fig-rank} shows a circuit $\C$ over
	the power semiring $\mathcal{P}(G)$ of the group $G = (\Z_5,+)$.
	Recall from Example~\ref{ex-rank} that the function $A \mapsto |A|$ is a rank function for $\mathcal{P}(G)$.
	We illustrate one phase of the algorithm.
	All gates $A$ with $\rank([A]) < 3$ are evaluated in the circuit $\C$ shown in (a).
	The goal is to evaluate all gates $A$ with $\rank([A]) = 3$.
	The first step would be to evaluate maximal $\cup$-circuits, which is already done.
	In the second step the circuit $\C'$ (shown in (b)) from the proof of Lemma~\ref{thm:rank} is computed and evaluated using the oracle
	for $\CEP(\Z_5,+)$. The dotted wires do not belong to the circuit $\C'$.
	All locally correct gates are shaded. Note that the output gate is locally correct
	but its right child is not locally correct.
	All other shaded gates form a downwards closed set, which is the set $W$ from the proof.
	These gates can be evaluated such that in the resulting circuit (shown in (c))
	all gates which evaluate to elements of rank 3 are evaluated.
\end{example}
%%%%
\iftrue
%%%%
\begin{figure}[t]
	\centering
	\ifmac
	\begin{minipage}[b]{.32\linewidth}
		\scalebox{0.8}{
\begin{tikzpicture}
\node[gate, label={[above = 0cm of A]\scriptsize $\{0,1,2,3,4\}$}] (A) {$\cup$};
\node[gate, label={[above left = 0cm and .3cm of B]\scriptsize $\{1,2,3\}$}, below left = 15pt and 15pt of A] (B) {$\cup$};
\node[gate, label={[above right = 0cm and .5cm of C]\scriptsize $\{0,1,2,3,4\}$}, below right = 15pt and 15pt of A] (C) {$\cup$};
\node[gate, label={[above left = 0cm and .3cm of D]\scriptsize $\{1,2,3\}$}, below left = 15pt and 5pt of B] (D) {$+$};
\node[gate, label={[above right = 0cm and .5cm of E]\scriptsize $\{1,2,3,4\}$}, below right = 15pt and 5pt of C] (E) {$+$};
\node[input, below left = 15pt and -10pt of D] (F) {\footnotesize $\{0,1\}$};
\node[input, below = 67pt of A] (G) {\footnotesize $\{1,2\}$};
\node[input, below right = 15pt and -10pt of E] (H) {\footnotesize $\{0,2\}$};

\draw (A) -- (B);
\draw (A) -- (C);
\draw (B) -- (D);
\draw (C) -- (E);
\draw (D) -- (F);
\draw (D) -- (G);
\draw (E) -- (G);
\draw (E) -- (H);
\draw (C) -- (F);
\draw (B) -- (G);

\end{tikzpicture}
		}
	\end{minipage}
	\hfill
	\begin{minipage}[b]{.32\linewidth}
		\scalebox{0.8}{
 \begin{tikzpicture}
 \node[gate, marked, label={[above = 0cm of A]\scriptsize $\{1,2,3,4\}$}] (A) {$\cup$};
 \node[gate, marked, label={[above left = 0cm and .3cm of B]\scriptsize $\{1,2,3\}$}, below left = 15pt and 15pt of A] (B) {$\cup$};
 \node[gate, label={[above right = 0cm and .3cm of C]\scriptsize $\{1,2,3,4\}$}, below right = 15pt and 15pt of A] (C) {$\cup$};
 \node[gate, marked, label={[above left = 0cm and .3cm of D]\scriptsize $\{1,2,3\}$}, below left = 15pt and 5pt of B] (D) {$+$};
 \node[gate, marked, label={[above right = 0cm and .5cm of E]\scriptsize $\{1,2,3,4\}$}, below right = 15pt and 5pt of C] (E) {$+$};
 \node[input, marked, below left = 15pt and -10pt of D] (F) {\footnotesize $\{0,1\}$};
 \node[input, marked, below = 67pt of A] (G) {\footnotesize $\{1,2\}$};
 \node[input, marked, below right = 15pt and -10pt of E] (H) {\footnotesize $\{0,2\}$};
 
 \draw[dotted] (A) -- (B);
 \draw (A) -- (C);
 \draw (B) -- (D);
 \draw (C) -- (E);
 \draw (D) -- (F);
 \draw (D) -- (G);
 \draw (E) -- (G);
 \draw (E) -- (H);
 \draw[dotted] (C) -- (F);
 \draw[dotted] (B) -- (G);
 
 \end{tikzpicture}

		}
	\end{minipage}
	\hfill
	\begin{minipage}[b]{.32\linewidth}
		\scalebox{0.8}{
               \begin{tikzpicture}
               \node[gate] (A) {$\cup$};
               \node[input, below left = 15pt and 0pt of A] (B) {\footnotesize $\{1,2,3\}$};
               \node[gate, below right = 15pt and 10pt of A] (C) {$\cup$};
               \node[input, below left = 15pt and -15pt of B] (D) {\footnotesize $\{1,2,3\}$};
               \node[input, below right = 15pt and -10pt of C] (E) {\footnotesize $\{1,2,3,4\}$};
               \node[input, below left = 15pt and -20pt of D] (F) {\footnotesize $\{0,1\}$};
               \node[input, below = 69pt of A] (G) {\footnotesize $\{1,2\}$};
               \node[input, below right = 15pt and -20pt of E] (H) {\footnotesize $\{0,2\}$};
               
               \draw (A) -- (B);
               \draw (A) -- (C);
               \draw (C) -- (E);
               \draw (C) -- (F);
               
               \end{tikzpicture}

		}
	\end{minipage}
	\else
	\begin{minipage}[b]{.32\linewidth}
		\scalebox{0.8}{
                        \begin{tikzpicture}
                        \node[gate, label={[above = 0cm of A]\scriptsize $\{0,1,2,3,4\}$}] (A) {$\cup$};
                        \node[gate, label={[above left = 0cm and -.3cm of B]\scriptsize $\{1,2,3\}$}, below left = 15pt and 15pt of A] (B) {$\cup$};
                        \node[gate, label={[above right = 0cm and -.4cm of C]\scriptsize $\{0,1,2,3,4\}$}, below right = 15pt and 15pt of A] (C) {$\cup$};
                        \node[gate, label={[above left = 0cm and -.3cm of D]\scriptsize $\{1,2,3\}$}, below left = 15pt and 5pt of B] (D) {$+$};
                        \node[gate, label={[above right = 0cm and -.3cm of E]\scriptsize $\{1,2,3,4\}$}, below right = 15pt and 5pt of C] (E) {$+$};
                        \node[input, below left = 15pt and -10pt of D] (F) {\footnotesize $\{0,1\}$};
                        \node[input, below = 67pt of A] (G) {\footnotesize $\{1,2\}$};
                        \node[input, below right = 15pt and -10pt of E] (H) {\footnotesize $\{0,2\}$};
                        
                        \draw (A) -- (B);
                        \draw (A) -- (C);
                        \draw (B) -- (D);
                        \draw (C) -- (E);
                        \draw (D) -- (F);
                        \draw (D) -- (G);
                        \draw (E) -- (G);
                        \draw (E) -- (H);
                        \draw (C) -- (F);
                        \draw (B) -- (G);
                        
                        \end{tikzpicture}
			
		}
	\end{minipage}
	\hfill
	\begin{minipage}[b]{.32\linewidth}
		\scalebox{0.8}{
			\begin{tikzpicture}
			\node[gate, marked, label={[above = 0cm of A]\scriptsize $\{1,2,3,4\}$}] (A) {$\cup$};
			\node[gate, marked, label={[above left = 0cm and -.3cm of B]\scriptsize $\{1,2,3\}$}, below left = 15pt and 15pt of A] (B) {$\cup$};
			\node[gate, label={[above right = 0cm and -.3cm of C]\scriptsize $\{1,2,3,4\}$}, below right = 15pt and 15pt of A] (C) {$\cup$};
			\node[gate, marked, label={[above left = 0cm and -.2cm of D]\scriptsize $\{1,2,3\}$}, below left = 15pt and 5pt of B] (D) {$+$};
			\node[gate, marked, label={[above right = 0cm and -.3cm of E]\scriptsize $\{1,2,3,4\}$}, below right = 15pt and 5pt of C] (E) {$+$};
			\node[input, marked, below left = 15pt and -10pt of D] (F) {\footnotesize $\{0,1\}$};
			\node[input, marked, below = 67pt of A] (G) {\footnotesize $\{1,2\}$};
			\node[input, marked, below right = 15pt and -10pt of E] (H) {\footnotesize $\{0,2\}$};
			
			\draw[dotted] (A) -- (B);
			\draw (A) -- (C);
			\draw (B) -- (D);
			\draw (C) -- (E);
			\draw (D) -- (F);
			\draw (D) -- (G);
			\draw (E) -- (G);
			\draw (E) -- (H);
			\draw[dotted] (C) -- (F);
			\draw[dotted] (B) -- (G);
			
			\end{tikzpicture}
		}
	\end{minipage}
	\hfill
	\begin{minipage}[b]{.32\linewidth}
		\scalebox{0.8}{
			\begin{tikzpicture}
			\node[gate] (A) {$\cup$};
			\node[input, below left = 15pt and 0pt of A] (B) {\footnotesize $\{1,2,3\}$};
			\node[gate, below right = 15pt and 10pt of A] (C) {$\cup$};
			\node[input, below left = 15pt and -15pt of B] (D) {\footnotesize $\{1,2,3\}$};
			\node[input, below right = 15pt and -10pt of C] (E) {\footnotesize $\{1,2,3,4\}$};
			\node[input, below left = 15pt and -20pt of D] (F) {\footnotesize $\{0,1\}$};
			\node[input, below = 69pt of A] (G) {\footnotesize $\{1,2\}$};
			\node[input, below right = 15pt and -20pt of E] (H) {\footnotesize $\{0,2\}$};
			
			\draw (A) -- (B);
			\draw (A) -- (C);
			\draw (C) -- (E);
			\draw (C) -- (F);
			
			\end{tikzpicture}
		}
	\end{minipage}
	\fi
	\caption{The parallel evaluation algorithm over the power semiring $\mathcal{P}(\Z_5)$.}
	\label{fig-rank}
\end{figure}
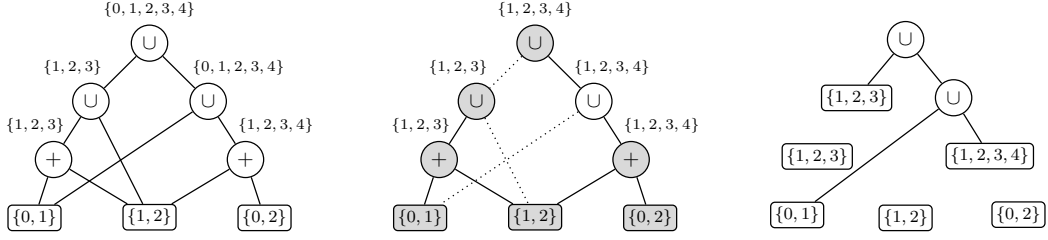
%%%%
\fi
For the proof of Proposition~\ref{thm:step-2},
it remains to show that every finite $\{0,1\}$-free semiring has a rank-function. 

\begin{lemma}
	\label{lem:char=(n,1)}
	Let $R$ be $\{0,1\}$-free. Let $e,f \in R$ such that
	\begin{itemize}
		\item $ef = fe = f$,
		\item $e^2 = e$,
		\item $f^2 = f+f=f$.
	\end{itemize}
	Then $e+f=f$.
\end{lemma}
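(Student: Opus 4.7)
The plan is to prove the contrapositive: assume $e+f \neq f$ and derive a contradiction to $\{0,1\}$-freeness by producing, inside $R$, a pair of elements satisfying condition~(4) of Lemma~\ref{lemma-0-1-free}. The natural candidates are
\[
 0 := f, \qquad 1 := e+f,
\]
so the whole task reduces to checking the six identities in point~(4) of that lemma using only the given relations $e^2 = e$, $f^2 = f$, $f+f = f$, $ef = fe = f$.

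The verifications are short routine computations, all driven by the additive idempotency $f+f=f$ (so also $f+f+f=f$, etc.) and the fact that $e$ absorbs $f$ multiplicatively from both sides:
\[
 0+0 = f+f = f = 0, \qquad 0+1 = f+(e+f) = e+f = 1,
\]
\[
 0\cdot 0 = f^2 = f = 0,
\]
\[
 0\cdot 1 = f(e+f) = fe+f^2 = f+f = f = 0, \quad 1\cdot 0 = (e+f)f = ef+f^2 = f+f = f = 0,
\]
\[
 1\cdot 1 = (e+f)(e+f) = e^2 + ef + fe + f^2 = e + f + f + f = e+f = 1.
\]
The assumption $e+f \neq f$ gives precisely the missing condition $0 \neq 1$. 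Hence by Lemma~\ref{lemma-0-1-free}(4$\Rightarrow$1), $R$ is not $\{0,1\}$-free, contradicting the hypothesis of the present lemma.

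There is no real obstacle here; the only thing to be slightly careful about is that we need $f+f = f$ (not just $f^2 = f$) in order to collapse $0+1$ and $1\cdot 1$ to $e+f$, and we need $ef = fe = f$ (not merely $ef,fe \in \{e,f\}$) to force $0 \cdot 1 = 1 \cdot 0 = f$. All four hypotheses of the lemma are used, and the conclusion $e+f = f$ follows.
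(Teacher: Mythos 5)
Your proposal is correct and is exactly the paper's argument: the paper also takes $0 = f$ and $1 = e+f$, verifies the equations of Lemma~\ref{lemma-0-1-free}(4), and concludes from $\{0,1\}$-freeness that $e+f = f$ must hold. You have merely written out the routine distributivity computations that the paper leaves implicit.
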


\begin{proof}
       With $f=0$ and $e+f=1$ all equations from Lemma~\ref{lemma-0-1-free} (point 4) hold. Hence, we must have $e+f=f$.
 %	The set $\{f, f+e, f+2e, \dots \}$ is a subsemiring with zero element $f$ and one element $f+e$,
%	which isomorphic to some $B(n,d)$. Since $R$ is $\{0,1\}$-free, we must have $(n,d) = (0,1)$,
%	hence $e+f=f$.
\end{proof}

\begin{lemma}
	If the finite semiring $R$ is $\{0,1\}$-free, then $R$ has a rank-function.
\end{lemma}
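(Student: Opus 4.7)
The plan is to give an explicit rank-function by setting
$$\rank(a) := |D(a)| + 1, \quad \text{where } D(a) := \{c \in R : c + a = a\}.$$
This is a positive integer for every $a \in R$.

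Condition~(1) of Definition~\ref{def-rank} is immediate: if $c \in D(a)$ then $c + (a+b) = (c+a) + b = a + b$, so $D(a) \subseteq D(a+b)$ and hence $\rank(a) \le \rank(a+b)$. Condition~(2) is analogous but more delicate: right-multiplying $c + a = a$ by $b$ gives $cb + ab = (c+a)b = ab$, so the translation $c \mapsto cb$ sends $D(a)$ into $D(ab)$; to conclude $|D(a)| \le |D(ab)|$ one argues that enough of the translated elements remain distinct, which is where the $\{0,1\}$-free assumption enters (an analogous argument via left multiplication handles $\rank(b) \le \rank(ab)$).

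The crux is condition~(3). I would prove the contrapositive: if $a, b \in eRf$ and $a + b \neq a$, then $D(a) \subsetneq D(a+b)$. The natural witness is $c := b$ itself. Clearly $b \notin D(a)$, since $b + a = a + b \neq a$, so it remains to show $b \in D(a+b)$, i.e.\ $a + 2b = a + b$. Suppose for contradiction that $a + 2b \neq a + b$; then the orbit $\{a + nb : n \ge 1\}$, which lies inside $eRf$, is a nontrivial periodic orbit of the finite set $R$ under the operation ``add~$b$''. Choosing $\omega \ge 1$ large enough that $\omega \cdot x$ is additively idempotent and $x^\omega$ multiplicatively idempotent for every $x \in R$, one extracts from this orbit a multiplicative idempotent $e'$ (built from powers of $b$) and an element $f'$ which is simultaneously additively and multiplicatively idempotent and satisfies $e' f' = f' e' = f'$. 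This is exactly the configuration of Lemma~\ref{lem:char=(n,1)}, whose conclusion $e' + f' = f'$, combined with the constraint $a, b \in eRf$, propagates back to force the absorption $a + 2b = a + b$, contradicting the assumption.

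The main obstacle is precisely this last step: orchestrating the passage from Lemma~\ref{lem:char=(n,1)}'s conclusion to the desired equation $a + 2b = a + b$. One must construct the idempotents $e', f'$ so that they really satisfy $(e')^2 = e'$, $(f')^2 = f' + f' = f'$, and $e' f' = f' e' = f'$, while staying inside the local subsemiring where the original $a$ and $b$ live; and one must then trace the absorption back through the $\omega$-power constructions to the original elements via their canonical decomposition $a = eaf$, $b = ebf$. This careful bookkeeping between additive $\omega$-multiples, multiplicative $\omega$-powers, and the type data $(e,f)$ is where the $\{0,1\}$-freeness is used essentially.
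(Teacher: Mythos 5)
Your proposed rank function does not satisfy Definition~\ref{def-rank}, and the key absorption claim underlying condition~(3) is false. Concretely, consider the semiring $R=\{1,2,3\}$ obtained from $(\{1,2,3,\dots\},+,\cdot)$ by identifying all numbers $\geq 3$ (saturating addition and multiplication). It is $\{0,1\}$-free: the only additively idempotent element is $3$, and $3+x=3$ for all $x$, so the configuration of Lemma~\ref{lemma-0-1-free}(4) cannot occur. Here $1\in E(R)$, and with $e=f=1$ and $a=b=1\in eRf$ we have $a+b=2$ but $a+2b=3\neq a+b$; so your central claim that $a+b\neq a$ forces $b\in D(a+b)$ (i.e.\ $a+2b=a+b$) fails, and no amount of bookkeeping with Lemma~\ref{lem:char=(n,1)} can rescue it, because the statement you are trying to derive is simply not true. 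The same example kills the rank function itself: $D(1)=D(2)=\emptyset$, so $\rank(1)=\rank(1+1)=1$ while $1\neq 1+1$, violating condition~(3). (Your treatment of condition~(2) is also incomplete, since the map $c\mapsto cb$ from $D(a)$ to $D(ab)$ need not be injective and you give no argument for the cardinality bound, but the failure of condition~(3) is already fatal.)

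For comparison, the paper does not count absorbed elements; it defines $a\preceq b$ iff $b=\ell a r+c$ for some (possibly missing) $\ell,r,c\in R$, takes $\rank$ to be any function realizing this preorder, and gets conditions~(1) and~(2) for free from $a\preceq a+b$ and $a,b\preceq ab$. The work is in condition~(3): from $\rank(a+b)=\rank(a)$ one gets $a=\ell(a+b)r+c$, normalizes $\ell=e\ell e$, $r=frf$ using $a=eaf$, $b=ebf$, iterates this identity $n^2$ times, and applies Lemma~\ref{lem:char=(n,1)} to the pairs $(e,\,n\ell^n)$ and $(f,\,n r^n)$ to absorb an extra copy of $b$ into the resulting sum, concluding $a=a+b$. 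That is a genuinely different mechanism: $\{0,1\}$-freeness is used to show that certain additively-and-multiplicatively idempotent elements absorb the idempotents $e,f$ additively, not to show that arbitrary elements of $eRf$ absorb each other after one addition, which, as the example above shows, they need not.
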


\begin{proof}
	For $a,b \in R$ we define $a \preceq b$ if $b$ can be obtained from $a$ by iterated additions 
	and left- and right-multiplications of elements from $R$. This is equivalent to the following condition:
	\[
	\exists \ell, r, c \in R :
	b = \ell a r + c \text{ (where each of the elements $\ell, r, c$ can be also missing)}
	\]
	Since $\preceq$ is a preorder on $R$, there is a function $\rank: R \to \N\setminus \{0\}$	
	such that for all $a,b \in R$ we have
	\begin{itemize}
		\item $\rank(a) = \rank(b)$ iff $a \preceq b$ and $b \preceq a$,
		\item $\rank(a) \le \rank(b)$ if $a \preceq b$.
	\end{itemize}
	We claim that $\rank$ satisfies the conditions of Definition~\ref{def-rank}. The first two conditions are clear,
	since $a \preceq a+b$ and $a,b \preceq ab$.
	For the third condition, let $e,f \in E$, $a,b \in eRf$ such that $\rank(a+b) = \rank(a)$, which is equivalent to
	$a+b \preceq a$.
	Assume that $a = \ell (a+b) r + c = \ell a r + \ell b r + c$ for some $\ell, r, c \in R$ (the case without $c$ can be handled in the same way).
	Since $a = eaf$	and $b = ebf$, we have $a = \ell e(a+b)f r + c$ and hence we can assume that $\ell$ and $r$ are not missing.
	By multiplying with $e$ from the left and $f$ from the right we get $a = (e \ell e)(a+b)(f r f) + (ecf)$,
	so we can assume that $\ell = e \ell e$ and $r = frf$.
	After $m$ repeated applications of $a= \ell a r + \ell b r + c$ we obtain
	\begin{equation} \label{eq:a=...}
	a = \ell^mar^m + \sum_{i=1}^m \ell^ibr^i + \sum_{i=0}^{m-1} \ell^icr^i.
	\end{equation}
	Let $n \ge 1$ such that $n x$ is additively idempotent and $x^n$ is multiplicatively idempotent for all $x \in R$.
	Hence $n x^n$ is both additively and multiplicatively idempotent for all $x \in R$.
	If we choose $m = n^2$, the right hand side of \eqref{eq:a=...} contains the partial sum
	$\sum_{i=1}^n \ell^{i n} b r^{i n}$.
	Furthermore, $e(n \ell^n) = (n \ell^n)e = n \ell^n$ and $f(n r^n) = (n r^n)f = n r^n$. Therefore, Lemma~\ref{lem:char=(n,1)} implies
	that $n \ell^n = n\ell^n + e$ and $n r^n = n r^n + f$, and hence:
	\begin{align*}
	\sum_{i=1}^n \ell^{in} b r^{in} &= n (\ell^{n} b r^{n}) = n^2 (\ell^{n} b r^{n}) = 
	(n\ell^n) b (nr^n)  =    (n\ell^n + e) b (nr^n)  \\       
	& = (n\ell^n) b (nr^n) + eb(nr^n) =  (n\ell^n) b (nr^n) + eb(nr^n+f)  
	\\
	&= (n\ell^n) b (nr^n) + eb(nr^n) + ebf  = \left(\sum_{i=1}^n \ell^{i n} b r^{i n}\right) + b.
	\end{align*}
	Thus, we can replace in \eqref{eq:a=...} the partial sum $\sum_{i=1}^n \ell^{i n} b r^{i n}$ by
	$\sum_{i=1}^n \ell^{i n} b r^{i n} + b$, 
	which proves that $a = a + b$.
\end{proof}

\section{An application to formal language theory} \label{sec-intersection}

We present an application of our complexity results for circuit evaluation to formal language theory.
Recall that a {\em context-free grammar} (over $\Sigma$) is a tuple $\G = (V,\Sigma,S,P)$ consisting of a finite set of variables $V$,
a finite alphabet $\Sigma$, a start variable $S \in V$ and a set $P$ of productions $A \to \alpha$ where $A \in V$ and $\alpha \in (V \cup \Sigma)^*$.
We write $L_\G(A)$ for the language of $A$, i.e. the set of words $w \in \Sigma^*$ which can be derived from $A$ using the productions in $P$,
and write $L(\G)$ for $L_\G(S)$. 
Every circuit over the free monoid $\Sigma^*$ can be seen as a context-free grammar producing exactly one word. Such a circuit
is also called a {\em straight-line program}, briefly SLP. It is a context-free grammar $\mathcal{H} = (V,\Sigma,S,P)$ that
contains for every variable $A \in V$ exactly one rule of the form $A \to \alpha$. Moreover, $\mathcal{H}$ is acyclic, i.e., 
there is no non-empty derivation from a variable $A$ to a word containing $A$. We denote with $\val_{\mathcal{H}}(A)$
the unique word in the language $L_{\mathcal{H}}(A)$. Moreover, let $\val(\mathcal{H}) = \val_{\mathcal{H}}(S)$.

Given an alphabet $\Sigma$ and a language $L \subseteq \Sigma^*$, the {\em intersection non-emptiness problem for $L$},
denoted by \textsf{CFG-IP}$(L,\Sigma)$, is the following decision problem:

\problem{A context-free grammar $\G$ over $\Sigma$}{Does $L(\G) \cap L \neq \emptyset$ hold?}

For every regular language $L$, this problem is solvable in polynomial-time by
constructing a context-free grammar for $L(\G) \cap L$ from the given grammar $\G$ and a finite automaton for $L$.
The constructed grammar then has to be tested for emptiness, which is possible in polynomial time.
However, testing emptiness of a given context-free language is $\Ptime$-complete \cite{JonesL76}.
An easy reduction shows that the problem \textsf{CFG-IP}$(L,\Sigma)$ is $\Ptime$-complete for any non-empty language $L$.

\begin{theorem} \label{thm-p-complete-intersection}
	For every non-empty language $L \subseteq \Sigma^*$, $\mathsf{CFG}$-$\mathsf{IP}(L,\Sigma)$ is $\Ptime$-complete.
\end{theorem}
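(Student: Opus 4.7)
The polynomial-time upper bound is already explained in the paragraph preceding the theorem (construct a context-free grammar for $L(\G) \cap L$ and test it for emptiness, which is in $\Ptime$), so the only task is to prove $\Ptime$-hardness of $\mathsf{CFG}$-$\mathsf{IP}(L,\Sigma)$ for an arbitrary non-empty $L \subseteq \Sigma^*$.

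My plan is a logspace reduction from the emptiness problem for context-free grammars, which is $\Ptime$-complete by \cite{JonesL76}. Since $L$ is fixed and non-empty, we may fix once and for all a word $w \in L$ (hard-coded into the reduction). Given a context-free grammar $\G = (V,\Delta,S,P)$ over some alphabet $\Delta$, I would construct in logspace a grammar $\G' = (V \cup \{S'\}, \Sigma, S', P')$ over $\Sigma$ as follows: for every production $A \to \alpha$ in $P$, put into $P'$ the production $A \to \pi(\alpha)$, where $\pi : (V \cup \Delta)^* \to V^*$ is the erasing homomorphism that leaves variables untouched and sends every terminal in $\Delta$ to the empty word; additionally, add the single production $S' \to S w$ with the fresh start variable $S'$.

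By construction, $L_{\G'}(A) \subseteq \{\varepsilon\}$ for every $A \in V$, and moreover $\varepsilon \in L_{\G'}(A)$ if and only if $L_\G(A) \neq \emptyset$ (both sides express that $A$ is productive in $\G$). Therefore $L(\G') = \{w\}$ if $L(\G) \neq \emptyset$, and $L(\G') = \emptyset$ otherwise. Since $w \in L$, this yields $L(\G') \cap L \neq \emptyset$ iff $L(\G) \neq \emptyset$, which gives the desired reduction. The construction is clearly logspace computable because $w$ is a fixed string and $\pi$ can be applied locally to each production.

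There is essentially no obstacle: the only point worth checking is that erasing the terminals of $\G$ preserves productivity in both directions, which is immediate from induction on derivations. Hardness holds for every non-empty $L$, regardless of the complexity of $L$ itself, because $w$ is baked into the reduction as a constant.
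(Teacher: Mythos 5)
Your proof is correct and takes essentially the same route as the paper: a logspace reduction from the $\Ptime$-complete (non-)emptiness problem for context-free grammars \cite{JonesL76}, obtained by transforming the input grammar so that its language is either empty or trivially intersects $L$. The only difference is the gadget: the paper replaces every terminal by a fresh variable generating $\Sigma^*$ (so the resulting language is $\emptyset$ or $\Sigma^*$), whereas you erase the terminals and append a hard-coded witness $w \in L$ via $S' \to S w$ (so the language is $\emptyset$ or $\{w\}$); both correctly use only the fixed, non-empty $L$, and your productivity argument for the erased grammar is sound.
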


\begin{proof}
	Let $\G = (V,\Sigma,S,P)$ be a context-free grammar. We reduce emptiness of $\G$ to the intersection non-emptiness problem as follows.
	Let $X \not\in V$ be a new variable. We replace all occurrences of terminal symbols in productions of $\G$ by
	$X$ and then add the rules $X \to \varepsilon$ and $X \to a X$ for all $a \in \Sigma$ (thus, $X$ produces $\Sigma^*$).
	Observe that the new grammar $\G'$ satisfies $L(\G) \neq \emptyset$ if and only if $L(\G') = \emptyset$.
	Further, $L(\G')$ is either $\emptyset$ or $\Sigma^*$.
	Hence, $L(\G) \neq \emptyset$ if and only if $L(\G') \cap L \neq \emptyset$.
	Clearly, the reduction can be performed in logspace.
\end{proof}
By Theorem~\ref{thm-p-complete-intersection} we have to put some restriction on context-free grammars in order
to get $\NC$-algorithms for the intersection non-emptiness problem. It turns out that productivity of all variables is the right assumption.
Thus, we require that $L_\G(A) \neq \emptyset$ for all $A \in V$. In order to avoid a promise problem (testing productivity
of a variable is $\Ptime$-complete) we add to the input grammar $\G = (V,\Sigma,S,P)$ an SLP $\mathcal{H} = (V,\Sigma,S,R)$
which  {\em uniformizes} $\G$ in the sense that $R$ contains for every variable $A \in V$ exactly one rule $(A \to \alpha) \in P$.
Hence, the word $\val_{\mathcal{H}}(A) \in L_{\G}(A)$ is a witness for $L_\G(A) \neq \emptyset$.

\begin{example}
Here is a context-free grammar, where the underlined productions form a uniformizing SLP:
$$
S \to SS, \ S \to aSb, \ \underline{S \to A}, \  A \to aA, \ \underline{A \to B}, \ B \to bB, \ \underline{B \to b}
$$
\end{example}

We study the following decision problem \textsf{PCFG-IP}$(L,\Sigma)$ in the rest of this section:

\problem{A productive context-free grammar $\G$ over $\Sigma$ and a uniformizing SLP $\mathcal{H}$ for $\G$.}{Does 
$L(\G) \cap L \neq \emptyset$ hold?}

The goal of this section is to classify regular languages $L$ by the complexity of \textsf{PCFG-IP}$(L,\Sigma)$.

\subsection{Reduction to circuit evaluation}

In the following we prove that \textsf{PCFG-IP}$(L,\Sigma)$ is equivalent (with respect to constant depth
reductions) to the circuit evaluation problem for a suitable finite semiring that is derived from $L$.

We start with a few standard notations from algebraic language theory.
A language $L \subseteq \Sigma^*$ is {\em recognized} by a monoid $M$
if there exists a homomorphism $h: \Sigma^* \to M$ such that $h^{-1}(F) = L$ for some $F \subseteq M$. 
It is known that a language is regular if and only if it is recognized by a finite monoid.
The {\em syntactic congruence} $\equiv_L$ is the equivalence relation on $\Sigma^*$ that is defined by
$u \equiv_L v$ ($u,v \in \Sigma^*$) if the following holds: $\forall x,y \in \Sigma^* : xuy \in L \Leftrightarrow xvy \in L$.
It is indeed a congruence relation on the free monoid $\Sigma^*$.
The quotient monoid $\Sigma^*/{\equiv_L}$ is the smallest monoid which recognizes $L$; it is called 
the {\em syntactic monoid of $L$}.
From now on we fix a language $L \subseteq\Sigma^*$, a surjective homomorphism $h: \Sigma^* \to M$ onto the syntactic monoid $M$ of $L$
and a set $F \subseteq M$ satisfying $h^{-1}(F) = L$.
As a variation of computation problem $\CEP(\mathcal{P}(M))$, we define the decision problem $\CEP(\mathcal{P}(M),F)$:

\problem{A circuit over $\mathcal{P}(M)$}{Does $[\C] \cap F \neq \emptyset$ hold?}

\begin{lemma} \label{lemma-PCFG-CEP}
	$\mathsf{PCFG}$-$\mathsf{IP}(L,\Sigma)$ is equivalent to $\CEP(\mathcal{P}(M),F)$ with respect to constant depth reductions.
\end{lemma}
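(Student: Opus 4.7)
The plan is to exhibit two constant-depth reductions. Both will exploit the fact that for a grammar $\mathcal{G} = (V,\Sigma,S,P)$, the map $A \mapsto h(L_\mathcal{G}(A))$ sends variables to subsets of $M$, and hence lives in the finite semiring $\mathcal{P}(M) \cup \{\emptyset\}$; moreover, it satisfies the obvious recursive equations induced by the productions.

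For the reduction from $\mathsf{PCFG}$-$\mathsf{IP}(L,\Sigma)$ to $\CEP(\mathcal{P}(M),F)$, I would build from $\mathcal{G}$ and its uniformizing SLP $\mathcal{H}$ a circuit $\C$ over $\mathcal{P}(M)$ whose output gate evaluates to $h(L_\mathcal{G}(S))$. Introduce layered gates $G_{A,i}$ for $A \in V$ and $0 \le i \le N$ with $N = |V| \cdot |M|$. The base values $G_{A,0}$ should equal $\{h(\val_\mathcal{H}(A))\}$; rather than computing these elements, I would realize them as a subcircuit obtained by translating $\mathcal{H}$ literally into $\mathcal{P}(M)$: every terminal $a$ becomes the constant singleton $\{h(a)\}$, every variable becomes a $G_{A,0}$-gate, and concatenation becomes $\cdot$. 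Then I set
\[
    G_{A,i+1} \;=\; G_{A,i} \;\cup\; \bigcup_{(A\to X_1\cdots X_k)\in P} G_{X_1,i}\cdot G_{X_2,i} \cdots G_{X_k,i},
\]
where variables $X_j$ refer to the layer-$i$ gates and terminals $a_j$ are replaced by $\{h(a_j)\}$ (and an empty right-hand side $\alpha = \varepsilon$ becomes the constant $\{1_M\}$). The output gate is $G_{S,N}$. Since each $G_{A,i}$ is non-empty and contained in $T_A := h(L_\mathcal{G}(A))$, and a straightforward induction on derivation height shows every $m \in T_A$ lies in some $G_{A,i}$, monotonicity together with the bound on the number of additions ($\le |V||M|$) gives $G_{A,N} = T_A$ for all $A$. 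The whole construction is obviously in $\AC^0$.

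For the converse reduction, given a circuit $\C=(V,A_0,\rhs)$ over $\mathcal{P}(M)$, I would fix once and for all a representative $w_m \in h^{-1}(m)$ for each $m \in M$ (possible because $h$ is surjective and $M$ is fixed; take $w_{1_M}=\varepsilon$), and construct $\mathcal{G}$ with one variable $V_A$ per gate $A$ and productions: $V_A \to w_m$ for every $m \in T$ if $\rhs(A)=T \in \mathcal{P}(M)$; $V_A \to V_B$ and $V_A\to V_C$ if $\rhs(A)=B\cup C$; $V_A\to V_B V_C$ if $\rhs(A)=B\cdot C$. A routine induction along $\le_\C$ shows $h(L_\mathcal{G}(V_A)) = [A]_\C$, so $L(\mathcal{G})\cap L \ne \emptyset$ iff $[\C]\cap F\ne\emptyset$. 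The grammar is productive because every $[A]_\C \in \mathcal{P}(M)$ is non-empty. For the uniformizing SLP $\mathcal{H}$, select one production per variable (any choice works: e.g.\ $V_A \to w_{m}$ for a fixed $m\in T$ in the constant case, $V_A \to V_B$ for a union gate, the unique production for a product gate); acyclicity of $\mathcal{H}$ follows directly from acyclicity of $\C$.

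The main subtlety, and arguably the only non-trivial point, is Direction 1: since $\mathcal{P}(M)$ excludes $\emptyset$, we cannot run the fixed-point iteration starting from the empty sets. The uniformizing SLP is precisely what allows us to seed the iteration with non-empty singletons, and the productivity of $\mathcal{G}$ guarantees that all iterates remain inside the target sets $T_A$. Verifying that the monotone iteration, initialized at these singletons, still converges to the full $T_A$ (and not to some smaller fixed point containing the seeds) is the one place where care is needed; this is handled by the induction on derivation height sketched above.
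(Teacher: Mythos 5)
Your proposal is correct and follows essentially the same route as the paper: a layered monotone fixed-point iteration over $\mathcal{P}(M)$, seeded with the singletons $\{h(\val_{\mathcal{H}}(A))\}$ obtained from the uniformizing SLP and run for $|V|\cdot|M|$ rounds, for one direction, and the direct gate-by-gate translation of a circuit over $\mathcal{P}(M)$ into a productive grammar for the other. You are merely a bit more explicit than the paper on two points it glosses over (why the seeded iteration still converges to the least fixed point, and how to extract the uniformizing SLP from the acyclic circuit), which is fine.
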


\begin{proof} We first reduce $\mathsf{PCFG}$-$\mathsf{IP}(L,\Sigma)$  to $\CEP(\mathcal{P}(M),F)$. 
Let $\G = (V,\Sigma,S,P)$ be a productive context-free grammar
	and let $\mathcal{H} = (V,\Sigma,S,R)$ be a uniformizing SLP for $\G$.
	To decide whether $L(\G) \cap L \neq \emptyset$, we construct a circuit whose gates compute all 
	sets $X_A = h(L_{\G}(A)) \in \mathcal{P}(M)$
	for $A \in V$. Then we test whether $X_S$ intersects $F$.
	
	In preparation we compute from $\mathcal{H}$ a circuit whose gates evaluate to 
	the singleton sets $X_A^{(0)} := \{ h(\val_{\mathcal{H}}(A)) \}$ for $A \in V$.
	Every production $(A \to \alpha_0 A_1 \alpha_1 \cdots A_k \alpha_k) \in R$ 
	with $A_1, \dots, A_k \in V$ and $\alpha_0, \dots, \alpha_k \in \Sigma^*$
	is translated to the definition
	\[
		\rhs(A) = \{ h(\alpha_0) \} \cdot A_1 \cdot \{ h(\alpha_1) \} \cdots  A_k \cdot \{ h(\alpha_k) \}
	\]
	in the circuit.
	Now the tuple $(X_A)_{A \in V}$ is the least fixed-point of the following monotone operator $\mu$:
	\begin{gather}
		\mu:  \left(2^M \right)^{|V|} \to \left(2^M \right)^{|V|} \\
		\mu((Y_A)_{A \in V}) =  \big(Y_A \cup \, \bigcup h(\alpha_0) Y_{A_1} h(\alpha_1) \cdots Y_{A_k} h(\alpha_k)\big)_{A \in V} 
		     \label{eq:hom-prod}
	\end{gather}
	where the union in \eqref{eq:hom-prod} ranges over all productions $A \to \alpha_0 A_1 \alpha_1 \cdots A_k \alpha_k \in P$
	for $A_1, \dots, A_k \in V$ and $\alpha_0, \dots, \alpha_k \in \Sigma^*$.
	The smallest fixpoint of $\mu$ can be computed by the fixed-point iteration
	\begin{equation}
		Y_A^{(0)} = \emptyset, \quad Y_A^{(n+1)} = \mu((Y_A^{(n)})_{A \in V}) \label{eq:fp-it}
	\end{equation}
	which reaches the least fixed-point after at most $|V| \cdot |M|$ steps. Equation \eqref{eq:fp-it}
	gives rise to an $\AC^0$-computable circuit over the semiring $(2^M,\cup,\cdot)$
	computing $X_S = h(L(\G))$.
	Since we disallow the empty set in $\mathcal{P}(M)$, we instead initialize the fixed-point iteration in \eqref{eq:fp-it}
	with the non-empty subsets $X_A^{(0)} \subseteq X_A$. This yields a circuit over $\mathcal{P}(M)$ for $X_S$.
	
	Let us now reduce $\CEP(\mathcal{P}(M),F)$ to $\mathsf{PCFG}$-$\mathsf{IP}(L,\Sigma)$.
	Let $\C = (V,A_0,\rhs)$ be a circuit over $\mathcal{P}(M)$. We define a grammar $\G = (V,\Sigma,A_0,P)$ as follows:
	\begin{itemize}
		\item If $\rhs(A) = \{m_1,\ldots, m_k \} \in \mathcal{P}(M)$, add the rules 
		$A \to w_i $ to $P$ ($1 \leq i \leq k$) where $w_i \in \Sigma^*$ is any word with $h(w_i) = m_i$.
		\item If $\rhs(A) = B \cup C$, add the rules $A \to B$ and $A \to C$ to $P$.
		\item If $\rhs(A) = B \cdot C$, add the rules $A \to BC$ to $P$.
	\end{itemize}
	Then every gate $A \in V$ evaluates to $h(L_\G(A))$.
	In particular, we have $h(L(\G)) = [\C]$.
	Therefore, $[\C] \cap F \neq \emptyset$ if and only if $L(\G) \cap L \neq \emptyset$.
\end{proof}

Now clearly $\CEP(\mathcal{P}(M),F)$ is logspace reducible to $\CEP(\mathcal{P}(M))$
but not necessarily vice versa as the following example shows:

\begin{example}
	Consider the language $L =  \{a,b\}^* a \{a,b\}^* \subseteq \{a,b\}^*$ of all words which contain the symbol $a$.
	Its syntactic monoid is the two-element monoid $M = \{1,e\}$ where $e$ is an idempotent element.
	We have $L = h^{-1}(\{e\})$ for the homomorphism $h: \{a,b\}^* \to M$ defined by $h(a) = e$, $h(b) = 1$.
	One can decide $\CEP(\mathcal{P}(M),\{e\})$ in $\NL$: For a circuit $\C$ we have
	$e \in [\C]$ if and only if an input gate with $e$ on the right-hand side is reachable from the output gate.
	However, since $M$ is not a local group, $\CEP(\mathcal{P}(M))$ is $\Ptime$-complete by Theorem~\ref{thm:power-semiring}.
	This can be also seen directly: The sets $\{e\}$ and $\{1,e\}$ form a Boolean semiring. On the other hand, 
	for the purpose of deciding $\CEP(\mathcal{P}(M),\{e\})$ one does not have to distinguish the sets
	$\{e\}$ and $\{1,e\}$. Identifying these two sets in $\mathcal{P}(M)$ yields a  $\{0,1\}$-free semiring
	whose circuit evaluation problem is in $\NL$. 
\end{example}
The example above motivates to define a congruence relation on $\mathcal{P}(M)$ where congruent subsets
are either both disjoint from $F$ or both not.
Define the equivalence relation $\sim_F$ on $\mathcal{P}(M)$ by
\[
	A_1 \sim_F A_2 \iff \forall \ell, r \in M \colon \ell A_1 r \cap F \neq \emptyset \iff \ell A_2 r \cap F \neq \emptyset 
\]
for subsets $A_1,A_2 \in \mathcal{P}(M)$. The following lemmata summarize the basic properties of $\sim_F$.

\begin{lemma}
	\label{lem:congruence}
	The following properties hold.
	\begin{enumerate}[(1)]
		\item $A_1 \sim_F A_2$ implies $(LA_1R \cap F \neq \emptyset \iff LA_2R \cap F \neq \emptyset)$ for all $L,R \subseteq M$.
		\item The relation $\sim_F$ is a congruence relation. In particular, $\mathcal{P}(M)/{\sim_F}$ is a semiring.
		\item Every $\sim_F$-class contains a largest subset with respect to  $\subseteq$.
	\end{enumerate}
\end{lemma}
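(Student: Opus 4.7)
\medskip

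The plan is to treat the three items in order, with (1) feeding into (2) and (3).

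For part (1), I would just expand the two subsets of $M$ as unions of singletons. Since $L A_i R = \bigcup_{\ell \in L, r \in R} \ell A_i r$, we have $LA_iR \cap F \neq \emptyset$ iff there exist $\ell \in L$, $r \in R$ with $\ell A_i r \cap F \neq \emptyset$. The definition of $\sim_F$ says exactly that the latter condition is the same for $i = 1$ and $i = 2$, so the claim follows by a direct two-line argument.

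For part (2), I must check compatibility with both $\cup$ and $\cdot$. For union, compatibility is easy from the identity $\ell(A_1 \cup B_1)r \cap F = (\ell A_1 r \cap F) \cup (\ell B_1 r \cap F)$: non-emptiness of the union is preserved under the hypotheses $A_1 \sim_F A_2$ and $B_1 \sim_F B_2$. For product, the idea is a two-step swap using part (1). Fix $\ell, r \in M$. Writing $\ell A_1 B_1 r = \bigcup_{b \in B_1} \ell A_1 \{b\} r$, each summand on the right is controlled by $A_1 \sim_F A_2$ via the singleton version of (1) applied with $\ell$ and $br$, so $\ell A_1 B_1 r \cap F \neq \emptyset \iff \ell A_2 B_1 r \cap F \neq \emptyset$. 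Symmetrically, decomposing $\ell A_2 B_1 r = \bigcup_{a \in A_2} \ell \{a\} B_1 r$ and invoking $B_1 \sim_F B_2$ yields $\ell A_2 B_1 r \cap F \neq \emptyset \iff \ell A_2 B_2 r \cap F \neq \emptyset$. Chaining these equivalences gives $A_1 B_1 \sim_F A_2 B_2$. Once $\sim_F$ is a congruence, the quotient $\mathcal{P}(M)/{\sim_F}$ inherits the semiring structure of $\mathcal{P}(M)$ automatically.

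For part (3), given a $\sim_F$-class $[A]$, I would define
\[
\widehat{A} = \bigcup_{B \in [A]} B,
\]
which is a non-empty subset of $M$ because every $B \in [A]$ is non-empty and $A \in [A]$. The claim is that $\widehat{A} \sim_F A$; once this is established, $\widehat{A}$ is by construction the largest element of its class. For any $\ell, r \in M$,
\[
\ell \widehat{A} r \cap F \;=\; \bigcup_{B \in [A]} (\ell B r \cap F),
\]
so $\ell \widehat{A} r \cap F \neq \emptyset$ iff $\ell B r \cap F \neq \emptyset$ for some $B \in [A]$. But for \emph{every} $B \in [A]$ we have $\ell B r \cap F \neq \emptyset \iff \ell A r \cap F \neq \emptyset$, so both quantifiers collapse to $\ell A r \cap F \neq \emptyset$, giving $\widehat{A} \sim_F A$.

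The only point that is not purely mechanical is the product case in (2), where one has to perform the two swaps carefully and use (1) as a lemma — the rest reduces to unfolding definitions.
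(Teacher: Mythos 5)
Your proposal is correct and follows essentially the same route as the paper: part (1) by unfolding $LA_iR$ into the sets $\ell A_i r$, part (2) by the same two-step swap (the paper invokes (1) with $L=\{\ell\}$, $R=B_1r$ and then $L=\ell A_2$, $R=\{r\}$, which is exactly your singleton decomposition), and part (3) by taking the union over the class. The only cosmetic difference is in (3), where the paper deduces closure of each class under binary union from (2) and then uses finiteness, while you verify directly that $\bigcup_{B \in [A]} B \sim_F A$; both arguments are fine.
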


\begin{proof}
	Property (1) is clear because $LA_iR \cap F \neq \emptyset$ if and only if $\ell A_i r \cap F \neq \emptyset$
		for some $\ell \in L, r \in R$.		
	For (2), assume $A_1 \sim_F A_2$ and $B_1 \sim_F B_2$. Then for all $\ell, r \in M$ we have
		\begin{align*}
		\ell (A_1 \cup B_1) r \cap F \neq \emptyset & \iff (\ell A_1 r \cup \ell B_1 r) \cap F \neq \emptyset \\
		& \iff \ell A_1 r \cap F \neq \emptyset \quad \text{or} \quad \ell B_1 r \cap F \neq \emptyset \\
		& \iff \ell A_2 r \cap F \neq \emptyset \quad \text{or} \quad \ell B_2 r \cap F \neq \emptyset \\
		& \iff (\ell A_2 r \cup \ell B_2 r) \cap F \neq \emptyset \\
		& \iff \ell (A_2 \cup B_2) r \cap F \neq \emptyset
		\end{align*}
		and, by (1),
		\begin{align*}
		\ell A_1 (B_1 r) \cap F \neq \emptyset \iff (\ell A_2) B_1 r \cap F \neq \emptyset \iff \ell A_2 B_2 r \cap F \neq \emptyset .
		\end{align*}	
	For (3) note that by (2)l, $A_1 \sim_F A_2$ implies $A_1 = A_1 \cup A_1 \sim_F A_1 \cup A_2$. Thus, every $\sim_F$-class 
	is closed under union and therefore has a largest element with respect to $\subseteq$.
\end{proof}

\begin{lemma} \label{lemma-CEP-F-quotient}
	$\CEP(\mathcal{P}(M),F)$ is  equivalent to $\CEP(\mathcal{P}(M)/{\sim_F})$ with respect to constant depth Turing-reductions.
	In other words: $\AC^0(\CEP(\mathcal{P}(M),F)) = \AC^0(\CEP(\mathcal{P}(M)/{\sim_F}))$.
\end{lemma}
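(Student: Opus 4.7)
The statement asserts a two-sided constant-depth reduction, so the plan is to show each containment separately. In both directions I plan to exploit that $M$ is fixed (and thus finite of constant size), so that $\mathcal{P}(M)$, the quotient $\mathcal{P}(M)/{\sim_F}$, the map assigning each subset to its $\sim_F$-class, and any required table indexed by these objects are all constants as far as the $\AC^0$-reducer is concerned.

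For $\AC^0(\CEP(\mathcal{P}(M),F)) \subseteq \AC^0(\CEP(\mathcal{P}(M)/{\sim_F}))$, given an input circuit $\C$ over $\mathcal{P}(M)$ I would relabel every input of $\C$ by its $\sim_F$-class to obtain a circuit $\overline{\C}$ over the quotient. Since $\sim_F$ is a congruence by Lemma~\ref{lem:congruence}(2), $[\overline{\C}]$ equals the $\sim_F$-class of $[\C]$. An oracle call to $\CEP(\mathcal{P}(M)/{\sim_F})$ (in the computation variant, which as noted in the excerpt lives in $\AC^0(\CEP(\mathcal{P}(M)/{\sim_F}))$) then produces this class. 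Setting $\ell = r = 1$ in the definition of $\sim_F$ shows that the predicate ``meets $F$'' is well-defined on $\mathcal{P}(M)/{\sim_F}$, so a constant-size table lookup finishes the job.

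For the reverse direction, starting from an instance $(\C,C_0)$ of $\CEP(\mathcal{P}(M)/{\sim_F})$ I plan to fix in advance a system of representatives $\rho \colon \mathcal{P}(M)/{\sim_F} \to \mathcal{P}(M)$ (for example the largest member of each class guaranteed by Lemma~\ref{lem:congruence}(3)) and replace every input label $C$ of $\C$ by $\rho(C)$. The congruence property then yields a $\mathcal{P}(M)$-circuit $\C'$ with $[\C'] \sim_F [\C]$. By definition of $\sim_F$, we have $[\C] = C_0$ iff $[\C'] \sim_F \rho(C_0)$ iff for every pair $(\ell,r) \in M \times M$ the bits ``$\ell [\C'] r \cap F \neq \emptyset$'' and ``$\ell \rho(C_0) r \cap F \neq \emptyset$'' agree. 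For each of the $|M|^2$ many pairs I would build in $\AC^0$ the circuit $\{\ell\} \cdot \C' \cdot \{r\}$ (two fresh multiplication gates with singleton constants) and query the $\CEP(\mathcal{P}(M),F)$ oracle; the resulting $|M|^2$-bit vector is compared against the fingerprint of $\rho(C_0)$, which is a constant depending only on $C_0$.

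There is no real obstacle: all auxiliary data (the map from subsets to classes, the choice of $\rho$, the ``meets $F$'' table, the fingerprint of $C_0$) are independent of the input length and can be hard-wired into the $\AC^0$ circuitry, and the circuit manipulations (relabelling inputs, prepending/appending singleton multiplicative constants) are standard. The only point deserving a brief remark is the routine equivalence of the decision and computation variants of $\CEP$ over a finite structure, already noted in the excerpt after the definition of $\CEP$; this is what lets me use the quotient oracle in the first reduction to actually produce the class of $[\C]$ rather than only test equality with a guessed class.
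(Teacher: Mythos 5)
Your proposal is correct and follows essentially the same route as the paper's proof: regard the input circuit as a circuit over the quotient (using that $\sim_F$ is a congruence and that membership in $F$ is constant on $\sim_F$-classes, i.e.\ the $\ell=r=1$ case), and conversely pick representatives for the input labels and recover the $\sim_F$-class of the value by testing $\ell [\C'] r \cap F \neq \emptyset$ for all $\ell,r \in M$. You merely spell out details the paper leaves implicit (forming the circuits $\{\ell\}\cdot\C'\cdot\{r\}$, the constant-size tables, and the decision-vs-computation remark), which is fine.
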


\begin{proof}
	Clearly, every circuit $\C$ over $\mathcal{P}(M)$ can be regarded as a circuit $\C'$ over $\mathcal{P}(M)/{\sim_F}$
	such that $[\C']$ is the $\sim_F$-class of $[\C]$. Every $\sim_F$-class either contains only subsets of $M$ which
	are disjoint to $F$ or only subsets with non-empty intersection with $F$. Thus, $[\C']$ determines whether
	$[\C] \cap F \neq \emptyset$.
	
	For the other direction, given a circuit $\C'$ over $\mathcal{P}(M)/{\sim_F}$, we define a circuit $\C$
	over $\mathcal{P}(M)$ by picking arbitrary representative elements (subsets of $M$) for the input values 
	(which are $\sim_F$-classes) of the circuit $\C'$.
	Then we test for all $\ell,r \in M$ whether $\ell [\C] r \cap F \neq \emptyset$.
	This information is independent from the choice of representative elements and uniquely determines the $\sim_F$-class $[\C']$.
\end{proof}
From Corollary~\ref{coro-main}
and Lemma~\ref{lemma-PCFG-CEP} and \ref{lemma-CEP-F-quotient} we obtain:

\begin{theorem} \label{thm-dichotomy-PCFG}
$\mathsf{PCFG}$-$\mathsf{IP}(L,\Sigma)$ is equivalent to $\CEP(\mathcal{P}(M)/{\sim_F})$ with respect to constant depth Turing-reductions.
Therefore,
\begin{itemize}
\item $\mathsf{PCFG}$-$\mathsf{IP}(L,\Sigma)$ is  $\Ptime$-complete if $\mathcal{P}(M)/{\sim_F}$ is not $\{0,1\}$-free or its multiplicative
semigroup is not solvable,
\item $\mathsf{PCFG}$-$\mathsf{IP}(L,\Sigma)$ is in $\DET$ if $\mathcal{P}(M)/{\sim_F}$ is $\{0,1\}$-free and its multiplicative
semigroup is solvable, and
\item $\mathsf{PCFG}$-$\mathsf{IP}(L,\Sigma)$ is in $\NL$ if $\mathcal{P}(M)/{\sim_F}$ is $\{0,1\}$-free and its multiplicative
semigroup is aperiodic.
\end{itemize}
\end{theorem}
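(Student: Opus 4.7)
The plan is to obtain Theorem~\ref{thm-dichotomy-PCFG} by composing the two equivalences already proved and then invoking Corollary~\ref{coro-main}. First I would chain Lemma~\ref{lemma-PCFG-CEP} and Lemma~\ref{lemma-CEP-F-quotient}: the former shows that $\mathsf{PCFG}$-$\mathsf{IP}(L,\Sigma)$ is equivalent under constant depth Turing reductions to $\CEP(\mathcal{P}(M),F)$, and the latter gives $\AC^0(\CEP(\mathcal{P}(M),F)) = \AC^0(\CEP(\mathcal{P}(M)/{\sim_F}))$, i.e.\ another constant depth equivalence. Since constant depth reductions compose, this establishes the first sentence of the theorem. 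Note that $\mathcal{P}(M)/{\sim_F}$ is a well-defined finite semiring by point (2) of Lemma~\ref{lem:congruence}, using that $M$ is the finite syntactic monoid of the regular language $L$.

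For the three bullets, I would apply Corollary~\ref{coro-main} to the finite semiring $\mathcal{P}(M)/{\sim_F}$. The three cases in that corollary are stated in exactly the form we need: $\Ptime$-completeness if $\mathcal{P}(M)/{\sim_F}$ is not $\{0,1\}$-free or its multiplicative semigroup is not solvable, membership in $\DET$ if it is $\{0,1\}$-free and multiplicatively solvable, and membership in $\NL$ if it is $\{0,1\}$-free and multiplicatively aperiodic. The upper bounds transfer from $\CEP(\mathcal{P}(M)/{\sim_F})$ to $\mathsf{PCFG}$-$\mathsf{IP}(L,\Sigma)$ via the constant depth reduction, using that both $\DET$ and $\NL$ are closed under $\AC^0$-Turing reductions. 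The $\Ptime$-hardness transfers in the reverse direction, using that constant depth reductions are a special case of logspace reductions (the reduction notion under which $\Ptime$-hardness is defined in the paper).

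The hard work is already done in Theorem~\ref{thm-main}/Corollary~\ref{coro-main} and in the two lemmas that encode the grammar problem as circuit evaluation and quotient out by $\sim_F$, so the only remaining step here is a bookkeeping argument. The one subtlety worth stating explicitly in the proof is the direction of each reduction: upper bounds follow because $\mathsf{PCFG}$-$\mathsf{IP}(L,\Sigma) \in \AC^0(\CEP(\mathcal{P}(M)/{\sim_F}))$, while hardness follows because $\CEP(\mathcal{P}(M)/{\sim_F}) \in \AC^0(\mathsf{PCFG}$-$\mathsf{IP}(L,\Sigma))$. Both inclusions are exactly what the composed equivalence of Lemmas~\ref{lemma-PCFG-CEP} and~\ref{lemma-CEP-F-quotient} provides, so no further argument is needed.
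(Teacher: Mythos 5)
Your proposal matches the paper's own (implicit) proof exactly: the paper derives Theorem~\ref{thm-dichotomy-PCFG} precisely by chaining Lemma~\ref{lemma-PCFG-CEP} with Lemma~\ref{lemma-CEP-F-quotient} and then applying Corollary~\ref{coro-main} to the finite semiring $\mathcal{P}(M)/{\sim_F}$, with the same bookkeeping about which direction of the equivalence gives upper bounds and which gives hardness. Your write-up is correct and, if anything, more explicit than the paper, which states the theorem without further argument.
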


It would be nice to have a simple characterization of when $\mathcal{P}(M)/{\sim_F}$ is $\{0,1\}$-free (resp., its
multiplicative semigroup is solvable). For $\{0,1\}$-freeness, we can show:

\begin{proposition}
	\label{prop-char-0-1-free}
	$\mathcal{P}(M)/{\sim_F}$ is $\{0,1\}$-free if and only if
	\begin{equation} \label{implication-F}
        \forall s,t \in M, \, e \in E(M): st \in F \implies set \in F .
         \end{equation}
\end{proposition}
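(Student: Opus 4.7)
The plan is to reformulate the implication \eqref{implication-F} as the statement that $\{e\} \sim_F \{1,e\}$ for every $e \in E(M)$. Unfolding the definition of $\sim_F$ and using $\ell\{e\}r \subseteq \ell\{1,e\}r$, this equivalence amounts to exactly: for all $\ell, r \in M$ and every idempotent $e$, $\ell r \in F$ implies $\ell e r \in F$ (the reverse direction is automatic). Once this reformulation is in place, both directions of the biconditional reduce to short arguments.

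For the ``only if'' direction, given $s, t \in M$ and $e \in E(M)$ with $st \in F$ but $set \notin F$, I would exhibit $A := \{e\}$ and $B := \{1,e\}$ as witnesses of non-$\{0,1\}$-freeness of $\mathcal{P}(M)/{\sim_F}$. These sets satisfy all the algebraic identities of Lemma~\ref{lemma-0-1-free}(4) literally in $\mathcal{P}(M)$: $A \cup B = B$, $A^2 = AB = BA = A$, and $B^2 = B$. The pair $(\ell, r) = (s, t)$ certifies $[A] \neq [B]$ in the quotient, because $sAt = \{set\}$ is disjoint from $F$ while $sBt = \{st, set\}$ contains the element $st \in F$.

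For the ``if'' direction, I would argue by contradiction: assume $\mathcal{P}(M)/{\sim_F}$ is not $\{0,1\}$-free, so by Lemma~\ref{lemma-0-1-free}(4) there exist $A_0 \not\sim_F B_0$ satisfying $A_0 \cup B_0 \sim_F B_0$, $A_0^2 \sim_F A_0 B_0 \sim_F B_0 A_0 \sim_F A_0$, and $B_0^2 \sim_F B_0$. The key step is to invoke Lemma~\ref{lem:congruence}(3) and replace $A_0, B_0$ by the maximum subsets $A, B$ in their respective $\sim_F$-classes; each $\sim_F$-relation then upgrades to a plain set inclusion, so $A \subseteq B$, both $A$ and $B$ are subsemigroups of $M$, and $BA \subseteq A$. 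Being a finite non-empty subsemigroup of $M$, $A$ contains an idempotent $e \in E(M)$. Since $[A] \neq [B]$ and $A \subseteq B$, there exist $\ell, r \in M$ with $\ell A r \cap F = \emptyset$ but $\ell b r \in F$ for some $b \in B$. Applying the hypothesised implication to $(s, t, e) := (\ell b, r, e)$ gives $\ell b e r \in F$; yet $be \in BA \subseteq A$, so $\ell(be)r$ is an element of $\ell A r \cap F$, a contradiction.

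The only mildly subtle point is the reduction to the maximum representatives, which turns the $\sim_F$-relations into honest set inclusions and thereby makes $A$ a semigroup. Once this is in place, the finiteness of $A$ immediately produces the idempotent, and a single invocation of the hypothesis (inserting that idempotent to the right of $b$) closes the argument.
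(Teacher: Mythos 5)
Your proof is correct and follows essentially the same route as the paper's: one direction via the pair $\{e\}$, $\{1,e\}$ witnessing (non-)$\{0,1\}$-freeness, the other via passing to the $\subseteq$-maximal representatives (Lemma~\ref{lem:congruence}(3)), extracting an idempotent of $A$, and deriving a contradiction from the separating pair $\ell,r$. The only cosmetic differences are that you argue the first direction contrapositively through Lemma~\ref{lemma-0-1-free}(4) and insert the idempotent on the right of $b$ (using $BA\subseteq A$) where the paper inserts it on the left (using $AB\subseteq A$).
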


\begin{proof}
	Assume first  that $\mathcal{P}(M)/{\sim_F}$ is $\{0,1\}$-free.
	Let $s,t \in M$ such that $st \in F$ and let $e \in E(M)$.
	We have $\{e\} \sim_F \{1,e\}$ since otherwise their two $\sim_F$-classes would form a Boolean 
	subsemiring $\B_2$ in $\mathcal{P}(M)/{\sim_F}$.
	From $st \in F$ it follows that $s \{1,e\} t \cap F \neq \emptyset$ and hence
	$s \{e\} t \cap F \neq \emptyset$. Therefore $set \in F$.
	
	Assume now that the implication \eqref{implication-F} holds and towards a contradiction assume that $R$ is a subsemiring
	of $\mathcal{P}(M)$ with the zero-element $[A]_{\sim_F}$ and the one-element $[B]_{\sim_F}$.
	We choose $A,B$ to be the $\subseteq$-maximal elements in their classes.
	Then we have $A \subseteq B$ because $[A]_{\sim_F} \cup [B]_{\sim_F} = [B]_{\sim_F}$.
	Further $A^2 \sim_F A$ implies $A^2 \subseteq A$ by maximality of $A$ and therefore
	$A$ contains at least an idempotent $e \in A$ (take $a^\omega$ for any $a \in A$).
	Finally we have $AB \sim_F A$, which implies $AB \subseteq A$.
	
	Since $A$ and $B$ are not $\sim_F$-equivalent there exist elements $s,t \in M$ such that
	$sAt \cap F = \emptyset$ but $sbt \in F$ for some $b \in B$.
	However, $eb \in AB \subseteq A$ and by assumption $sebt \in F$, contradiction.
\end{proof}
We do not have a nice characterization for solvability of the multiplicative semigroup of $\mathcal{P}(M)/{\sim_F}$.

Let us conclude this section with an application of Corollary~\ref{thm-dichotomy-PCFG}:
\begin{example}
	Consider a language of the form $L = \Sigma^* a_1 \Sigma^* a_2 \Sigma^* \dots a_k \Sigma^* $ for $a_1, \dots, a_k \in \Sigma$,
	which is a so called {\em piecewise testable language}.
	We claim that $\mathsf{PCFG}$-$\mathsf{IP}(L,\Sigma)$ is decidable in $\NL$.
	First, since $uw \in L$ implies $uvw \in L$ for all $u,v,w \in \Sigma^*$, the syntactic monoid $M$ and the accepting subset $F \subseteq M$ of $L$ clearly satisfies
	the condition of Proposition~\ref{prop-char-0-1-free}.
	Second, clearly $\mathcal{P}(M)_{+}$ is aperiodic and hence also $(\mathcal{P}(M)/{\sim_F})_{+}$.
	Third, we show that $\mathcal{P}(M)_{{\scriptscriptstyle \bullet}}$ and hence also $(\mathcal{P}(M)/{\sim_F})_{{\scriptscriptstyle \bullet}}$ is aperiodic.
	Simon's theorem \cite{Simon75} states that a language is piecewise testable if and only if its syntactic monoid is $\mathcal{J}$-trivial.
	We claim that $\mathcal{P}(M)_{{\scriptscriptstyle \bullet}}$ is also $\mathcal{J}$-trivial, in particular aperiodic.
	Let $A, B \in \mathcal{P}(M)$ such that $A \equiv_\mathcal{J} B$.
	Consider the directed bipartite graph on $A \uplus B$ with edges
	\[
	\{ (a,b) \in A \times B \mid a \geq_\mathcal{J} b \} \cup \{ (b,a) \in B \times A \mid b \geq_\mathcal{J} a \}.
	\]
	Every vertex has at least one outgoing and one incoming edge, which means that it belongs to a non-trivial strongly connected component.
	Since $M$ is $\mathcal{J}$-trivial, we must have $A = B$.
\end{example}

\section{Some results about infinite semirings}

It would be interesting to see, whether our techniques can be extended to certain infinite semirings. Recently, it was shown that for certain finitely generated (but infinite) linear groups,
the circuit evaluation problem belongs to $\NC^2$ or at least $\coRNC^2$ (the complement of the randomized version of $\NC^2$) \cite{KonigL15,KonigL15mfcs}.
Of course, if one deals with infinite structures, one needs a finite representation of the elements of the structure. Moreover, in contrast to finite structures, the circuit 
evaluation problem (i.e., the question, whether a given circuit evaluates to a given element) is not equivalent to its computation variant, where one wants to compute
the output value of the circuit.  A good example is the arithmetic ring $(\mathbb{Z},+,\cdot)$. Whether a given circuit over this ring evaluates to a given element is 
equivalent to the question, whether a given circuit evaluates to zero, which, in turn, is equivalent to the famous polynomial identity testing problem \cite{AllenderBKM09}. Its complexity
is in co-randomized polynomial time and no deterministic polynomial time algorithm is known (the problem is easily seen to be $\Ptime$-hard). On the other hand, by iterated squaring one can construct a circuit
over $(\mathbb{Z},+,\cdot)$ with $n$ gates that evaluates to $2^{2^n}$. The binary representation of this number needs $2^n$ bits. This shows that in general,
we cannot even write down the output value of a given circuit in polynomial time. But even for semirings, where this phenomenon does not occur, it seems to be difficult
to obtain $\NC$-algorithms.  As an example, let us consider one of the simplest infinite semirings, namely the max-plus semiring
$(\mathbb{N}, \max, +)$. Note that it is $\{0,1\}$-free (this follows from Lemma~\ref{lemma-0-1-free}) and $+$ (the semiring multiplication) 
is commutative. 

\begin{theorem}
$\CEP(\mathbb{N}, \max, +)$ is $\Ptime$-complete.
\end{theorem}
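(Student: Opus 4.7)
The theorem splits into two parts. For the upper bound $\CEP(\mathbb{N},\max,+) \in \Ptime$, observe that a $\max$-gate cannot increase the bit length of its inputs while a $+$-gate adds at most one bit; hence, in a circuit with $n$ gates and inputs of bit length at most $m$, every gate value has bit length $O(n+m)$. A topological evaluation then runs in polynomial time, after which comparison with the target element is trivial.

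For $\Ptime$-hardness I plan to reduce from the monotone circuit value problem (MCVP), which is $\Ptime$-complete by Goldschlager's theorem. The core idea is a \emph{level-dependent} encoding of Booleans: at level $i$, represent True by $2^i$ and False by $2^i-1$. Under this encoding the Boolean operations can be simulated by
\[
\mathrm{OR}_i(x,y) \,:=\, \max(x,y) + 2^i, \qquad \mathrm{AND}_i(x,y) \,:=\, \max\bigl(x+y,\; 2^{i+1}-1\bigr),
\]
and a four-case check shows that each produces the correct True/False encoding at level $i+1$. A pass-through gadget $\mathrm{PASS}_i(x) := x + 2^i$ raises the level by one without changing the Boolean value, since $2^i + 2^i = 2^{i+1}$ and $(2^i-1) + 2^i = 2^{i+1}-1$.

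To realize this as a logspace reduction I take a monotone Boolean circuit of $n$ fan-in-two gates labelled in topological order and assign gate $i$ the level $i$. Each wire $j \to i$ is routed through $i-j-1$ pass-throughs, and these chains can be shared per source gate, giving an $O(n^2)$-gate max-plus circuit whose constants have $O(n)$ bits. A simple induction on topological order shows that the output evaluates to $2^n$ if the Boolean circuit outputs True and to $2^n-1$ otherwise; hence asking whether the max-plus circuit equals $2^n$ decides MCVP.

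The main conceptual obstacle is finding an encoding under which $\max$ and $+$ can simulate AND, which a priori seems out of reach since $\max$-plus circuits are naturally monotone \emph{and} ``width-expanding'' rather than Boolean. The key trick is to leave only a one-unit gap between True ($2^i$) and False ($2^i-1$) and then amplify it by addition: after forming $x+y$ at level $i$, only the case True-AND-True reaches $2^{i+1}$, while the three other cases lie at or below $2^{i+1}-1$, so a single $\max$ with this threshold collapses all False cases and preserves True-AND-True. The remainder of the argument is bookkeeping to keep levels consistent across the circuit and to stay within logspace.
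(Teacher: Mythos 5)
Your proposal is correct and follows essentially the same route as the paper: the identical upper-bound argument, and for hardness the same level-dependent encoding (True $=2^k$, False $=2^k-1$) with the same gadgets $\max(x+y,\,2^{k+1}-1)$ for AND and $\max(x,y)+2^k$ for OR, comparing the output against $2^n$. The only difference is cosmetic: the paper assumes w.l.o.g.\ that the Boolean circuit is layered with wires going from layer $k$ to layer $k+1$, whereas you achieve the same synchronization inside the max-plus circuit via the pass-through gadget $x+2^i$.
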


\begin{proof}
A given circuit over $(\mathbb{N}, \max, +)$ can be sequentially evaluated in polynomial time by representing 
all numbers in binary notation.
Note that a $\max$-gate does not increase the number of bits, whereas a $+$-gate can increase the number of bits
by at most one (the sum of an $n$-bit number and an $m$-bit number has at most $\max\{n,m\}+1$ many bits).

For the lower bound we reduce from $\CEP(\mathbb{B}_2)$.
Let $\C = (V,A_0,\rhs_{\C})$ be a circuit over the boolean semiring.  W.l.o.g. we can assume that
$\C$ consists of $n$ layers, where all wires go from layer $k$ to layer $k+1$ for some $k$.
Layer 1 contains the input gates and layer $n$ contains the output gate $A_0$. 
We now construct a circuit $\D = (V,A_0,\rhs_{\D})$ over $(\mathbb{N}, \max, +)$ with the same gates as $\C$.
The idea is to construct $\D$ such that the following conditions hold for every gate $A \in V$ on layer $k$:
\begin{enumerate}[(a)]
\item If $[A]_{\C} = 0$ then $[A]_{\D} = 2^k-1$
\item If $[A]_{\C} = 1$ then $[A]_{\D} = 2^k$
\end{enumerate}
To get this correspondence, we define the right-hand sides for $\D$ as follows,
where $B$ and $C$ are on layer $k < n$ and $A$ is on layer $k+1$.
\begin{itemize}
\item If $\rhs_{\C}(A) = B \wedge C$ then $\rhs_{\D}(A) = \max( B+C, 2^{k+1}-1 )$.
\item If $\rhs_{\C}(A) = B \vee C$ then $\rhs_{\D}(A) = \max(B,D)+2^{k}$.
\end{itemize}
Moreover, if $\rhs_{\C}(A) = 0$ (resp., $\rhs_{\C}(A)=1$), then gate $A$ is on layer $1$ and 
we set $\rhs_{\D}(A) = 1$ (resp., $\rhs_{\D}(A) = 2$).
With these settings, it is straightforward to show that (a) and (b) hold. In particular, we have $[\C] = 1$ if and only if $[\D] = 2^n$.
\end{proof}
Let us conclude this section with a few results on power semirings, in particular, power semirings of groups. 
Recall from Example~\ref{ex-power-semigroup} that for a finite solvable group $G$, the semiring of non-empty subsets $\mathcal{P}(G)$ has a circuit evaluation
problem in $\DET$. This motivates the question for the complexity of  the circuit evaluation problem of $\mathcal{P}(G)$ for an infinite but finitely generated group $G$.
To avoid the problem of representing the output value of a circuit over $\mathcal{P}(G)$ (which may be a set of exponentially many group elements)
we consider again the variant $\CEP(\mathcal{P}(G),F)$ from Section~\ref{sec-intersection},
which asks whether the set computed by a given circuit over $\mathcal{P}(G)$ contains an element from a given set $F \subseteq G$.
We can assume that
the input gates of the circuit are labelled with singleton sets $\{a\}$, where $a$ is a generator of $G$. 
This variant of the circuit evaluation problem can be seen as a nondeterministic
version of the compressed word problem for $G$ (i.e., $\CEP(G)$) \cite{Loh14}.
As an example, a result from \cite{StMe73} can be reformulated as follows: $\CEP(\mathcal{P}(\mathbb{Z}),\{0\})$ is $\NP$-complete,
where $\mathbb{Z}$ is the additive group of integers, i.e., the free group of rank $1$. For the free group of rank $2$, briefly $F_2$,
we have: 

\begin{theorem}
$\CEP(\mathcal{P}(F_2),\{1\})$ is $\PSPACE$-complete.
\end{theorem}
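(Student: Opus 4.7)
The plan is to show PSPACE-membership via an alternating polynomial-time algorithm, and PSPACE-hardness via reduction from an exponentially-long reachability problem.

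For the upper bound, I would view the circuit $\C$ as a context-free grammar $G_\C$ over $\Sigma = \{a,b,a^{-1},b^{-1}\}$: input gates become terminal rules, union gates $A = B \cup C$ become $A \to B \mid C$, and product gates $A = B \cdot C$ become $A \to BC$. Then $[\C]$ is the image of $L(G_\C)$ in $F_2$, so $1 \in [\C]$ iff some word generated by $G_\C$ freely reduces to $1$. For each gate $A$ and pair $(\ell,r) \in F_2 \times F_2$ represented by polynomial-size straight-line programs (SLPs), define the predicate $P_A(\ell,r)$: ``there is $w \in L_{G_\C}(A)$ with $\ell w r = 1$ in $F_2$.'' The alternating recursion is: at an input gate with singleton $\{g\}$, verify $\ell g r = 1$ using the compressed word problem for $F_2$, known to be in $\Ptime$; at a union gate, existentially branch on $P_B(\ell,r)$ versus $P_C(\ell,r)$; at a product gate $A = B \cdot C$, existentially guess an SLP $h$ of polynomial size and universally recurse into both $P_B(\ell,h)$ and $P_C(h^{-1},r)$. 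Completeness uses $h := w_1^{-1}\ell^{-1}$ for a witness $w_1$ coming from a derivation of $B$; soundness is immediate from $(\ell w_1 h)(h^{-1} w_2 r) = \ell w_1 w_2 r$. The key invariant is that contexts never accumulate: each $\ell$ or $r$ appearing in a subcall is either $\epsilon$, a freshly guessed $h$, or its inverse, so SLP sizes stay polynomial throughout. Since recursion depth is at most the DAG-depth of $\C$ and each step runs in polynomial time, the algorithm is in alternating polynomial time, hence in PSPACE.

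For the lower bound, I would reduce from the acceptance problem for polynomial-space nondeterministic Turing machines. Configurations of the given machine $M$ on input $w$ are polynomial-length strings, injectively encoded as elements $[c] \in F_2$ via a fixed free code $[c_1 \cdots c_m] := g_{c_1}\cdots g_{c_m}$ over a free subset $\{g_\sigma\}$ of $F_2$, chosen so that $[c][c']^{-1}$ freely reduces to $1$ iff $c = c'$. Each single-step transition $c \to c'$ of $M$ is encoded as the gadget $E(c,c') := [c]^{-1}[c']$. The key observation is that $E(c_0,c_1) \cdot E(c_1',c_2) = [c_0]^{-1}[c_1][c_1']^{-1}[c_2]$ telescopes to $[c_0]^{-1}[c_2]$ in $F_2$ exactly when $c_1 = c_1'$. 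The circuit starts with the gate $T_0 := \bigcup_{c \to c'} \{E(c,c')\}$ (including self-loops at every configuration for length padding) and iteratively squares via $T_{i+1} := T_i \cdot T_i$; after $n = \mathrm{poly}(|M|,|w|)$ doublings, $[c_\mathrm{init}]^{-1}[c_\mathrm{accept}]$ lies in the value of $T_n$ iff $M$ has an accepting computation of length at most $2^n$. Wrapping as $[c_\mathrm{init}] \cdot T_n \cdot [c_\mathrm{accept}]^{-1}$ then yields a polynomial-size circuit whose value contains $1$ precisely when $M$ accepts $w$.

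The main obstacle is verifying that iterated squaring in $\mathcal{P}(F_2)$ produces only the genuine trace elements $[c_0]^{-1}[c_T]$ coming from actual computations, and not spurious short elements arising from partial cancellations between mismatched configurations. This requires a malnormality argument for the free code: choosing the generating set $\{g_\sigma\}$ sufficiently rigid (e.g.\ by taking the $g_\sigma$ to form a code for which no nontrivial product of residues $[c_1^{(i)}][c_0^{(i+1)}]^{-1}$ with some $c_1^{(i)} \neq c_0^{(i+1)}$ can combine across successive junctions to yield a word of the form $[c_0]^{-1}[c_T]$) so that a product of gadgets reduces to such a short element only when every intermediate configuration pair matches, i.e., only for a genuine computation of $M$. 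Making this rigidity precise within $F_2$, while keeping the encoding and the circuit of polynomial size, is the technical heart of the hardness proof.
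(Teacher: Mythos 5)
Both halves of your proposal have genuine gaps. In the upper bound, the problem is the completeness of the alternating algorithm at product gates: you need the existentially guessed SLP $h$ to represent the group element $(\ell w_1)^{-1}$ for some witness $w_1 \in L_{G_\C}(B)$, but such witnesses are yields of derivation trees of an \emph{acyclic grammar} (the unfolded DAG), not values of SLPs; they can have length $2^{\Theta(|\C|)}$ and in general admit no polynomial-size SLP at all, and since the reduced normal form of an SLP-represented free-group word is again polynomial-size SLP-representable, an incompressible reduced witness has \emph{no} compressible representative. Your ``contexts never accumulate'' invariant only bounds the size of data passed downward once the guesses are small; it says nothing about whether a correct small guess exists. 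Concretely, take two polynomial-size acyclic grammars $G_1,G_2$ over $\{a,b\}$ whose intersection is a single string $u$ (e.g.\ the encoding of the unique accepting computation history in a Nederhof--Satta-type instance) and the circuit for $\{w_1w_2^{-1} : w_i \in L(G_i)\}$: at the top product gate every accepting run of your algorithm must guess an SLP whose group value is $u^{-1}$, so completeness would force \emph{every} nonempty intersection of small acyclic grammars to contain a polynomially SLP-compressible word --- a strong structural claim you neither state nor prove, and which one should not expect to hold. In the lower bound, you yourself flag the missing piece: the lemma that no spurious partial cancellations across mismatched junctions $[c_1][c_1']^{-1}$ can produce the target element is exactly the content of the reduction, and it is left unproven (``the technical heart''). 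In addition, $T_0$ is written as a union over all transitions $c \to c'$, of which there are exponentially many; one must argue that this set is generated by a polynomial-size subcircuit (this is plausible, using that in $[c]^{-1}[c']$ the two configurations appear in palindromic arrangement so a center-embedding acyclic grammar with a constant-size window can enforce locality of the transition, but it needs to be said).

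The paper sidesteps both difficulties. For membership it does not guess intermediate group elements at all: it solves the tree-circuit case by a nondeterministic auxiliary-pushdown traversal (placing it in $\LOGCFL \subseteq \DSPACE(\log^2 n)$) and then handles DAGs by unfolding them with a $\PSPACE$-transducer, using a composition lemma for polylog-space languages. For hardness it reduces from intersection non-emptiness of \emph{acyclic} context-free grammars (PSPACE-complete by Nederhof and Satta), building a circuit for $\{w_1 w_2^{-1} \mid w_1 \in L(\G_1), w_2 \in L(\G_2)\}$ with $w_1,w_2$ positive words over $\{a,b\}$; then the only possible cancellations occur at the single junction, so $w_1w_2^{-1}=1$ in $F_2$ iff $w_1=w_2$, and the delicate rigidity/malnormality analysis you would need simply never arises. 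If you want to salvage your hardness route, the cleanest fix is to replace the Turing-machine/configuration encoding by this grammar-intersection reduction; for membership, the SLP-guessing idea would need a genuinely new argument bounding the compressibility of some witness, and I see no way to supply it.
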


\begin{proof}
Let us fix the generating set $\{a,a^{-1},b,b^{-1}\}$ for $F_2$.
We first show that $\CEP(\mathcal{P}(F_2),\{1\})$ is in $\PSPACE$. 
First, we show that $\CEP(\mathcal{P}(F_2),\{1\})$ restricted
to circuits that are trees is in $\LOGCFL$, i.e., it can be solved in polynomial time on a nondeterministic pushdown machine with an auxiliary
working tape of logarithmic length. For this, the machine traverses the input circuit (a tree) in a depth-first left-to-right manner and thereby
stores the current node in the tree. Each time, the machine arrives at a $\cup$-gate $A$ from $A$'s parent gate, it nondeterministically
decides whether it proceeds to the left or right child of $A$. If it goes to the right child, then the subtree rooted in the left child of $A$ is omitted in the traversal.
Similarly, if the machine goes to the left child, then when returning to $A$ it omits the right subtree of $A$
in the traversal. Thus, the machine chooses nondeterministically for each union gate one of the two children and only traverses the subtree of the 
chosen child. Finally, when the machine arrives at a leaf of the tree, i.e., an input gate of the circuit, and this leaf is labelled with the group generator 
$x \in \{a,a^{-1},b,b^{-1}\}$ then it pushes $x$ on the pushdown. If the top of the pushdown then ends with $x^{-1} x$, it pops $x^{-1} x$ from the pushdown.
Then the machine continues with the traversal of the tree. At the end of the traversal, the machine accepts if and only if the pushdown is empty.

Basically, the above machine nondeterministically chooses a word over $\{a,a^{-1},b,b^{-1}\}$ that belongs to the circuit output if the circuit
is interpreted over the power semiring $\mathcal{P}(\{a,a^{-1},b,b^{-1}\}^*)$. Moreover, while generating this word, it verifies, using the pushdown,
that the word is equal to $1$ in the free group $F_2$.
Now we use the fact that $\LOGCFL$ is contained in $\DSPACE(\log^2(n))$. Hence, $\CEP(\mathcal{P}(F_2),\{1\})$ restricted
to circuits that are trees is in $\DSPACE(\log^2(n))$. From this, it follows easily that $\CEP(\mathcal{P}(F_2),\{1\})$ for general (non-tree-like)
circuits belongs to $\PSPACE$: The function that maps a circuit $\C$ to its unfolding (a tree that is equivalent to $\C$) can be computed
by a Turing machine with output in polynomial space (a so called $\PSPACE$-transducer). 
This follows from the fact that the gates of the unfolding of $\C$ can be identified with
paths in $\C$ that start in the output gate and go down in the circuit. The set of all these paths can be produced in polynomial space. 
Now we use a simple lemma (see \cite[Lemma~1]{LohreyM13} stating that a preimage $f^{-1}(L)$ belongs to $\PSPACE$, if $f$ can be computed
by a $\PSPACE$-transducer and the language $L$ can be decided in polylogarithmic space.

For the lower bound we use a result from  \cite{NeSa04}, saying that the intersection non-emptiness problem for 
given {\em acyclic} context-free grammars $\mathcal{G}_1$ and $\mathcal{G}_2$ is $\PSPACE$-complete. 
By coding terminal symbols into a binary alphabet, we can assume that the terminal alphabet of $\mathcal{G}_1$ and $\mathcal{G}_2$
is $\{a,b\}$. Moreover, by the construction in \cite{NeSa04}, $\mathcal{G}_1$ and $\mathcal{G}_2$ are productive 
(which for acyclic grammars just means that every nonterminal has a rule).
For a word $w = a_1 a_2 \cdots a_n$ with $a_i \in \{a,b\}$ let $w^{-1} = a_n^{-1} \cdots a_2^{-1} a_1^{-1}$. 
It is straightforward to construct from  $\mathcal{G}_1$ and $\mathcal{G}_2$ an acyclic productive context-free grammar $\mathcal{G}$ over the terminal alphabet
$\{a,b,a^{-1},b^{-1}\}$ 
such that
$$
L(\mathcal{G}) = \{ w_1 w_2^{-1} \mid w_1 \in L(\mathcal{G}_1), w_2 \in L(\mathcal{G}_2) \}.
$$
Now an acyclic productive context-free grammar with terminal alphabet $\Sigma$ can be seen as a circuit over the semiring 
$\mathcal{P}(\Sigma^*)$. By interpreting the circuit for the grammar $\mathcal{G}$ 
as a circuit $\C$ over $\mathcal{P}(F_2)$ we see that 
$L(\mathcal{G}_1) \cap L(\mathcal{G}_2) \neq \emptyset$ if and only if $[\C]$ contains $1$.
\end{proof}

\section{Conclusion and outlook}

We proved a dichotomy result for the circuit evaluation problem for finite semirings: If (i) the semiring has no subsemiring with an additive and 
multiplicative identity and both are different and (ii) the multiplicative subsemigroup is solvable, then the circuit evaluation problem is in $\DET \subseteq \NC^2$, otherwise it is $\Ptime$-complete. 

The ultimate goal would be to obtain such a dichotomy
for all finite algebraic structures. One might ask whether for every finite algebraic structure $\mathcal{A}$, 
$\CEP(\mathcal{A})$ is $\Ptime$-complete or in $\NC$. It is known that under the assumption $\Ptime \neq \NC$ there exist
problems in $\Ptime \setminus \NC$ that are not $\Ptime$-complete \cite{Vollmer90}. In \cite{BeMc97} it is shown that every circuit evaluation
problem $\CEP(\mathcal{A})$ is equivalent to a circuit evaluation
problem $\CEP(A, \circ)$, where $\circ$ is a binary operation.

\medskip
\noindent
{\bf Acknowledgment.} We thank Benjamin Steinberg for helpful discussions on semigroups. We are grateful to Volker Diekert for pointing out to us
the proof of the implication ($3 \Rightarrow 4$) in the proof of Lemma~\ref{lemma-0-1-free}.

%\bibliography{bib}

\end{document}